\documentclass[11pt,a4paper]{article}
\pdfoutput=1
\usepackage[utf8]{inputenc}
\usepackage{jheppub}
\hypersetup{unicode=true,bookmarksopen=true}

\usepackage{etoolbox}
\makeatletter
\patchcmd{\maketitle}{\@fpheader}{}{}{}
\makeatother

\usepackage{amsmath}
\usepackage{amssymb}
\usepackage{mathtools}
\usepackage{amsthm}
\usepackage{dsfont}

\usepackage{microtype}
\usepackage{lmodern}
\usepackage[T1]{fontenc}
\usepackage{bbm}
\DeclareMathAlphabet{\mathbfi}{OML}{cmm}{b}{it}

\usepackage{mathrsfs}

\let\originalleft\left
\let\originalright\right
\renewcommand{\left}{\mathopen{}\mathclose\bgroup\originalleft}
\renewcommand{\right}{\aftergroup\egroup\originalright}

\makeatletter
\newenvironment{equations}[1][]{\subequations\ifx\relax#1\relax\else\label{#1}\fi\align\ignorespaces}{\endalign\ignorespacesafterend\endsubequations}
\def\@spliteq#1{\begin{equation}\begin{split}#1\end{split}\end{equation}}
\def\@spliteqstar#1{\begin{equation*}\begin{split}#1\end{split}\end{equation*}}
\def\splitequation{\collect@body\@spliteq}
\expandafter\def\csname splitequation*\endcsname{\collect@body\@spliteqstar}

\expandafter\def\csname endsplitequation*\endcsname{\ignorespacesafterend}
\makeatother

\renewcommand{\vec}[1]{{\ifnum9<1#1\mathbf{#1}\else\ifcat\noexpand#1\relax\boldsymbol{#1}\else\mathbfi{#1}\fi\fi}}
\newcommand{\mathe}{\mathrm{e}}
\newcommand{\mathi}{\mathrm{i}}
\newcommand{\total}{\mathop{}\!\mathrm{d}}
\newcommand{\abs}[1]{{\left\lvert{#1}\right\rvert}}
\newcommand{\1}{\mathbbm{1}}
\newcommand{\eqend}[1]{\,#1}
\newcommand{\bigo}[1]{\mathcal{O}\left({#1}\right)}
\newcommand{\hypergeom}[2]{\,{}_{#1}\mathrm{F}_{#2}}
\newcommand{\Estoch}{\mathop{\mathds{E}}}

\frenchspacing
\allowdisplaybreaks

\theoremstyle{plain}
\newtheorem{theorem}{Theorem}[section]
\newtheorem{lemma}[theorem]{Lemma}
\newtheorem{proposition}[theorem]{Proposition}
\newtheorem{corollary}[theorem]{Corollary}

\theoremstyle{remark}
\newtheorem{remark}[theorem]{Remark}

\newcommand{\Eq}{Eq.}
\newcommand{\Eqs}{Eqs.}

\bibliographystyle{JHEP}

\begin{document}

\title{Measurements in stochastic gravity\\ and thermal variance}

\author[a]{Markus B. Fr{\"o}b,}
\author[b]{Dra\v{z}en Glavan,}
\author[c,d]{and Paolo Meda}

\affiliation[a]{Institut f{\"u}r Theoretische Physik, Universit{\"a}t Leipzig, Br{\"u}derstra{\ss}e 16, 04103 Leipzig, Germany}
\affiliation[b]{CEICO, FZU --- Institute of Physics of the Czech Academy of Sciences, Na Slovance 1999/2, 182 21 Prague 8, Czech Republic}
\affiliation[c]{Dipartamento di Matematica, Università di Trento, Via Sommarive 14, 38123 Povo (Trento), Italy}
\affiliation[d]{Trento Institute for Fundamental Physics and Applications (TIFPA-INFN), Via Sommarive 14, 38123 Povo (Trento), Italy}

\emailAdd{mfroeb@itp.uni-leipzig.de}
\emailAdd{glavan@fzu.cz}
\emailAdd{paolo.meda@unitn.it}

\abstract{We analyze the thermal fluctuations of a free, conformally invariant, Maxwell quantum field (photon) interacting with a cosmological background spacetime, in the framework of quantum field theory in curved spacetimes and semiclassical and stochastic gravity. The thermal fluctuations give rise to backreaction effects upon the spacetime geometry, which are incorporated in the semiclassical Einstein--Langevin equation, evaluated in the cosmological Friedmann--Lema\^{i}tre--Robertson--Walker spacetime. We first evaluate the semiclassical Einstein equation for the background geometry sourced by the thermal quantum stress-energy tensor. For large enough temperature, the solution is approximated by a radiation-dominated expanding universe driven by the thermal bath of photons. We then evaluate the thermal noise kernel associated to the quantum fluctuations of the photon field using point-splitting regularization methods, and give its explicit analytic form in the limits of large and small temperature, as well as a local approximation. Finally, we prove that this thermal noise kernel corresponds exactly to the thermal variance of the induced fluctuations of the linearized metric perturbation in the local and covariant measurement scheme defined by Fewster and Verch. Our analysis allows to quantify the extent to which quantum fluctuations may give rise to non-classical effects, and thus become relevant in inflationary cosmology.}

\dedicated{Dedicated to Enric Verdaguer on the occasion of his 75th birthday.}

\maketitle

\section{Introduction}

The $\Lambda$CDM model is commonly held as the most convincing and observationally supported model in cosmology, with the initial inflationary phase of our universe extremely well described as driven by a single scalar field (the inflaton) with a specific potential. On top of the classical homogeneous background, small-scale quantum fluctuations which get amplified and classicalized during the inflationary phase are the seeds of structure formation at large scales~\cite{Starobinsky1982,Liddle2000}. Like all matter, also the quantum fluctuations must gravitate, which poses the question of what is their influence on the background, namely the backreaction upon the spacetime geometry~\cite{Wald1977back,Wald1995qftcurved,Mukhanov1997}.

In the semiclassical approach to gravity, backreaction effects are usually taken into account by studying the semiclassical Einstein equation, in which the classical Einstein tensor of the cosmological spacetime (e.g., Friedmann--Lema\^{i}tre--Robertson--Walker spacetimes) is equal to the expectation value of the quantum stress-energy tensor associated to the quantum field, evaluated in a physical Hadamard state. This equation models the quantum matter-energy content which interacts with gravity, and its self-consistent solutions formed by the spacetime metric and a quantum state for this metric encode the evolution of the matter-gravity system beyond the classical relativistic gravitational theory \cite{Hu1977,Hartle1979,Hack2016cosm,Pinamonti2011,Pinamonti2013,Gottschalk2018,Meda2020,Gottschalk2023}.

A modification of the semiclassical Einstein equation which extends the semiclassical description of the interplay between quantum matter and gravity is provided by the Einstein--Langevin equation, which apart from the expectation value of the quantum stress-energy tensor also incorporate its fluctuations through a stochastic source or noise \cite{Hu1994,Campos1994,Campos1995,Calzetta1997,Martin1998,Martin1998_2,Martin1999,Hu2002} (see also \cite{Hu2008,Hu2020stoc,Eftekharzadeh2012,Satin2015,Satin2016} and references therein). This source is Gaussian with vanishing expectation value, but non-vanishing variance given by the two-point function of the quantum stress-energy tensor, termed the noise kernel. In turn, the stochastic source induces fluctuations of the metric, which thus also behaves stochastically. It is then possible to formulate criteria for the validity of semiclassical gravity, and thus to establish to what extent quantum metric fluctuations become important in gravitational backgrounds \cite{Anderson2003,Hu2000,Hu2001,Hu2004,Huvalid2004}.

Backreaction effects due to quantum matter and its fluctuations have been studied extensively, and for a partial list of references about the connection of this topic with stochastic gravity we refer the reader to \cite{Calzetta1995,Calzetta1997,Roura1999,Roura2007,Urakawa2007,Urakawa2008,Hu2020stoc} and references therein. Very recently, it has been shown \cite{Ota2023,Froeb2025} that the backreaction of thermal fluctuations of quantum photons in the radiation-dominated era can lead to secular effects in the power spectra of primordial tensor perturbations in the superhorizon limit of small momentum $\abs{\vec{k}} \ll H a$. In particular, the noise kernel can be approximated by a local term corresponding to an effective tachyonic mass in the dynamical equation for the tensor perturbations. Another interesting effect concerns stimulated emission, which also can be enhanced by backreaction in a thermal bath~\cite{Ota2024,Ota2025}.

\medskip

Our analysis sits in this semiclassical and stochastic framework, and it aims to evaluate the noise kernel associated to a thermal bath of free, conformally invariant, Maxwell quantum fields (photons) in cosmological spacetime. To model the thermal bath, we select a conformally thermal quasi-free state, which reduces to the conformal vacuum state in the limit of vanishing temperature. This state satisfies the Hadamard condition, exists for arbitrary Friedmann-Lema\^{i}tre-Robertson-Walker (FLRW) spacetimes, and may be obtained via a bulk-to-boundary correspondence in these spacetimes. In this approach, both the Hadamard condition and the thermal property of a distinguished quantum state on the null boundary of the spacetime can be pulled-back into the bulk, and for conformally invariant quantum fields such as Maxwell fields, the resulting quantum state is a conformal KMS state on the full spacetime with respect to the boundary time translation \cite{Dappiaggi2007,Dappiaggi2008,Dappiaggi2010,Dappiaggi2017}. For references about Hadamard states of Maxwell fields, see also \cite{Finster2013,Siemssen2013,Murro2024}.

For these conformally thermal states, we study the stochastic Einstein--Langevin equation, in which the full metric is split into the cosmological FLRW background metric and a linearized perturbation, which is in turn composed of intrinsic and induced perturbations. The source of the intrinsic perturbations is the quantum stress-energy tensor evaluated in that conformally thermal state, whereas the source of the induced perturbations is a stochastic source, whose variance encodes the quantum fluctuations of the quantum matter field as explained above. At the background level, we evaluate the expectation value of the quantum stress-energy tensor of the Maxwell field exploiting the Hadamard property of conformally thermal states, such that the difference between that expectation value and the one obtained in the conformal vacuum is a smooth function. Thus, using the quantum trace anomaly to obtain the expectation value of $T_{\mu\nu}$ in the conformal vacuum, we obtain the explicit form of the thermal quantum stress-energy tensor which sources the zeroth-order semiclassical Einstein equation. We show that for large enough temperature, the unique solution of this equation can be well approximated by a radiation-dominated universe driven by the thermal bath of photons. At the linear level, we evaluate the noise kernel which fully characterizes the stochastic source in the Einstein--Langevin equation, and derive explicit expressions for the noise kernel in some physically reasonable approximations: $\beta \to 0$ (large temperature), $\beta \to \infty$ (small temperature), and $\abs{\vec{p}} \to 0$ (small external momentum).  

Finally, we study the nature of the thermal noise kernel in the framework of local and covariant measurement scheme defined by Fewster and Verch in \cite{Fewster2018} (see also \cite{Fewster2022,Fewster2025}). In this approach, we evaluate the quantum observable induced by the coupling between the stochastic noise and the linear metric perturbation, which play the roles of target field and probe field, respectively. To model the actual measurement, we prepare a combined state at early times, i.e., before the measurement process, which is the tensor product between the quasi-free conformally thermal state characterizing the stochastic noise, and a quasi-free reference state for the metric perturbation.

We then prove that the variance of the actual measurement in the interacting combined state is the sum of the variance of the intrinsic perturbation, evaluated in the quasi-free reference state, and of the variance of the induced perturbation, evaluated in the conformally thermal state. In particular, the latter contribution is obtained by applying the causal propagator of the linearized metric perturbation to each argument of the thermal noise kernel, and smearing it with the test tensor of the initial probe observable. Namely, the thermal noise kernel corresponds to the gauge-invariant integral kernel of the stochastic variance of the induced linearized metric perturbation, as expected in stochastic gravity. This last result determines a minimum measurement uncertainty for the induced metric perturbations driven by the thermal quantum fluctuations. Such an unavoidable limit in the accuracy of the measurement of quantum fluctuations provides an important constraint for future low-energy experiments of quantum gravity effects, both in flat space and in cosmology.

\medskip

\paragraph{Outline.} The paper is organized as follows: In Section \ref{sec:stochastic} we recall the definition of flat FLRW spacetime as a conformally flat cosmological spacetime, and we provide a brief review of stochastic gravity and the Einstein--Langevin equation, with particular emphasis on the role played by the intrinsic and induced linearized perturbations of the metric around a fixed background geometry. Furthermore, we introduce the stochastic Gaussian noise describing quantum fluctuations, which is completely characterized by a vanishing expectation value and by a non-vanishing variance defined by the noise kernel. Section \ref{sec:photon} recalls the properties of a free, thermal, conformally invariant quantum Maxwell field on cosmological spacetimes in the algebraic framework. It also summarizes the construction of the quantum stress-energy tensor of the Maxwell field in the Hadamard regularization scheme, which provides a well-defined trace anomaly in terms of some coefficients of the Hadamard parametrix. Based on this, the evaluation of the quantum stress-energy tensor in the conformal vacuum and in the conformally thermal state is provided, and the solution for the zeroth-order semiclassical Einstein equation driven by the quantum field is given. Section \ref{sec:thermal-noise} is devoted to the exact evaluation of the thermal noise kernel, whose components are explicitly computed in various limits (large temperature, small temperature, and small external momentum). Section \ref{sec:measurement} contains the proof of the thermal variance of the induced fluctuations evaluated in the local and covariant measurement framework of Fewster and Verch, thus showing precisely the link between quantum fluctuations, thermal noise kernel, and induced metric perturbations. The last section contains the conclusions on the work and possible future developments. The technical evaluations and lemmas employed to compute the thermal noise kernel are collected in the appendix.

\section{Stochastic gravity}
\label{sec:stochastic}

\subsection{Cosmological spacetimes}

Our current understanding of the universe in cosmology is founded on the $\Lambda$CDM model, according to which it may be described by a flat Friedmann-Lema\^{i}tre-Robertson-Walker (FLRW) spacetime $(\mathcal{M},g)$ where $\mathcal{M} = I_t \times \mathbb{R}^3$ and $I_t \subset \mathbb{R}$ is an interval of time. In Cartesian cooordinates $x^\mu = (t,\vec{x})^\mu$ wtih cosmological time $t$, the metric of the spacetime is given by
\begin{equation}
g_{\mu\nu} \total x^\mu \total x^\nu = - \total t^2 + a(t)^2 \total \vec{x}^2 \eqend{,}
\end{equation}
where the scale factor $a \in C^n(I_t)$, with $n \geq 2$, constitutes the unique degree of freedom of the spacetime. Every FLRW spacetime is conformally flat, which is made manifest by employing the conformal coordinates $(\eta,\vec{x})$, where $\eta$ denotes the conformal time. In these coordinates, $a$ plays the role of conformal factor and the FLRW metric reads
\begin{equation}
\label{eq:FLRW}
g_{\mu\nu} = a(\eta)^2 \eta_{\mu\nu} \eqend{.}
\end{equation}
In the following and throughout the paper, we denote with $x^\mu = (\eta,\vec{x})$ and $(x')^\mu = (\eta',\vec{x}')$ the conformal coordinates.

From the scale factor, we define the functions
\begin{equation}
H(\eta) = \frac{a'}{a^2} \eqend{,} \quad \mathcal{R}(\eta) = \frac{1}{H a} \eqend{,} \quad \epsilon(\eta) = - \frac{H'}{H^2 a} \eqend{,}
\end{equation}
which denote the Hubble function, the comoving Hubble radius, and the dimensionless first slow-roll parameter, respectively. In the ``slow-roll'' approximation that describes well the primordial inflation, it is assumed that $\epsilon \ll 1$, but we will make no assumption about the size of $\epsilon$. For the background curvature tensors, we compute
\begin{equations}[eq:curv-FLRW]
R_{\mu\nu\rho\sigma} &= 2 H^2 a^4 \eta_{\mu[\rho} \eta_{\sigma]\nu} - 4 \epsilon H^2 a^4 \delta^0_{[\mu} \eta_{\nu][\rho} \delta^0_{\sigma]} \eqend{,} \\
R_{\mu\nu} &= (3-\epsilon) H^2 a^2 \eta_{\mu\nu} + 2 \epsilon H^2 a^2 \delta^0_\mu \delta^0_\nu \eqend{,} \\
R &= 6 (2-\epsilon) H^2 \eqend{,}
\end{equations}
and it follows that $C_{\mu\nu\rho\sigma} = R_{\mu\nu\rho\sigma} - R_{\mu[\rho} g_{\sigma]\nu} + R_{\nu[\rho} g_{\sigma]\mu} + \frac{1}{3} R g_{\mu[\rho} g_{\sigma]\nu} = 0$ as required for a conformally flat spacetime.

In order to study later the semiclassical and stochastic equations, we also evaluate the following two covariantly conserved curvature tensors of order 4. Denoting with $\total\mu_x \doteq \sqrt{-g(x)} \total^4 x$ the covariant measure on $\mathcal{M}$, we define
\begin{equations}
\label{eq:amunu_bmunu}
A^{\mu\nu} &\doteq \frac{1}{\sqrt{-g}} \frac{\delta}{\delta g_{\mu\nu}} \int C_{\alpha\beta\gamma\delta} C^{\alpha\beta\gamma\delta} \total\mu_x = - 2 \left( 2 \nabla_\rho \nabla_\sigma + R_{\rho\sigma} \right) C^{\mu\rho\nu\sigma} \eqend{,} \\
B^{\mu\nu} &\doteq \frac{1}{\sqrt{-g}} \frac{\delta}{\delta g_{\mu\nu}} \int R^2 \total\mu_x = \frac{1}{2} g^{\mu\nu} R^2 - 2 R R^{\mu\nu} + 2 \nabla^\mu \nabla^\nu R - 2 g^{\mu\nu} \nabla^2 R \eqend{,}
\end{equations}
satisfying $\nabla_\mu A^{\mu\nu} = \nabla_\mu B^{\mu\nu} = 0$. On cosmological spacetimes, the first tensor trivially vanishes because the Weyl tensor $C_{\mu\nu\rho\sigma}$ vanishes in every conformally flat spacetime such as FLRW \eqref{eq:FLRW}. On the contrary, the second tensor reads
\begin{splitequation}
\label{eq:bmunu-cosmo}
B_{\mu\nu} &= 6 \eta_{\mu\nu} H^2 \left[ - 2 \epsilon'' - 2 (5-6\epsilon) \epsilon' H a - 3 \epsilon  (2-\epsilon) (3-4\epsilon) H^2 a^2 \right] \\
&\quad+ 12 \delta^0_\mu \delta^0_\nu H^2 \left[ - \epsilon'' - 2 (1-3\epsilon) \epsilon' H a + 6 (2-\epsilon) \epsilon^2 H^2 a^2 \right] \eqend{.}
\end{splitequation}
In particular, its trace is given by
\begin{equation}
\label{eq:bmunu-trace}
g^{\mu\nu} B_{\mu\nu} = 6 \nabla^2 R = - 36 a^{-2} H^2 \left[ \epsilon'' + 6 (1-\epsilon) \epsilon' H a + 6 \epsilon (2-\epsilon) (1-\epsilon) H^2 a^2 \right] \eqend{.}
\end{equation}
\begin{remark}
A third contribution descending from $R_{\rho\sigma\gamma\delta} R^{\rho\sigma\gamma\delta} \sqrt{-g}$ which a priori should be also included in the list of covariantly conserved curvature ensors of order 4 can be removed, because the integral of the Euler density $\left( R^2 - 4 R_{\rho\sigma} R^{\rho\sigma} + R_{\rho\sigma\gamma\delta} R^{\rho\sigma\gamma\delta} \right) \sqrt{-g}$ is a topological invariant in four dimensions.
\end{remark}
\begin{remark}
An explicit solution for the scale factor $a(\eta)$ can be obtained for constant $\epsilon$, which is known as power-law inflation. In this case, we have
\begin{equation}
\label{eq:power-law}
H(\eta) = H_0 a^{-\epsilon}(\eta) \eqend{,} \quad a(\eta) = \left[ 1 - H_0 (1-\epsilon) (\eta-\eta_0) \right]^{- \frac{1}{1-\epsilon}} \eqend{,} \quad \mathcal{R}(\eta) = \frac{1}{1 - (1-\epsilon) (\eta-\eta_0)} \eqend{,}
\end{equation}
where $H_0 = H(\eta_0)$ and $\eta_0$ are constants. This also describes the periods of radiation domination ($\epsilon = 2$), matter domination ($\epsilon = \frac{3}{2}$) and the de Sitter solution ($\epsilon = 0$)~\cite{Liddle2000}.
\end{remark}

\subsection{Einstein--Langevin equation}

Stochastic gravity was first developed by Verdaguer and Hu as an extension of semiclassical gravity (for a list of references on this topic, see \cite{Hu2008,Hu2020stoc} and the references given in the Introduction). In this approach, the semiclassical approximation is extended by adding a Gaussian stochastic noise to the semiclassical Einstein equation. The noise contribution describes the fluctuations of the metric which are induced by the quantum matter and sourced by the two-point function of the quantum stress-energy tensor. To be more precise, one assumes that the full metric of the spacetime may be written as
\begin{equation}
\label{eq:metric-split}
g_{\mu\nu}(x) = g^{(0)}_{\mu\nu}(x) + \kappa h_{\mu\nu}(x) \eqend{,}
\end{equation}
where $g^{(0)}_{\mu\nu}$ denotes the background geometry, $h_{\mu\nu}$ denotes a linear perturbation of the background metric, and $\kappa^2 = 16 \pi G_N$ is the perturbative parameter depending on the Newton constant $G_N$ in the units $\hbar = c = 1$\footnote{From now on, we shall use the superscript $^{(0)}$ to denote the quantities fully constructed out of the background metric $g^{(0)}_{\mu\nu}$, and the superscript $^{(1)}$ to denote the quantities linearized in the perturbation $h_{\mu\nu}$.}. Then the Einstein--Langevin equation
\begin{splitequation}
\label{eq:Ein-Lang}
&G_{\mu\nu}[g](x) + \Lambda g_{\mu\nu}(x) - 2 \left[ \alpha_A A_{\mu\nu}[g](x) + \alpha_B B_{\mu\nu}[g](x) \right] \\
&\quad= \frac{1}{2} \kappa^2 \left[ \omega\left( T_{\mu\nu}[g](x) \right) + \xi_{\mu\nu}[g](x) \right]
\end{splitequation}
determines the backreaction of the quantum matter field, together with its quantum fluctuations, upon the spacetime geometry. Here, $A^{\mu\nu}$ and $B^{\mu\nu}$ were defined in \Eq~\eqref{eq:amunu_bmunu}, while $\alpha_A$ and $\alpha_B$ are arbitrary dimensionless constants. The derivation of this equation may be performed in different formalisms, e.g., in the large-$N$ expansion \cite{Hu2004} or using functional techniques~\cite{Martin1999}.

If one performs a perturbative expansion in $\kappa$ of the full metric in the background metric and the linear perturbation according to \Eq~\eqref{eq:metric-split}, then \Eq~\eqref{eq:Ein-Lang} may be rewritten as a pair of zeroth-order and linear order semiclassical equations for $g^{(0)}_{\mu\nu}$ and $h_{\mu\nu}$, namely
\begin{equations}
G^{(0)}_{\mu\nu}(x) + \Lambda g^{(0)}_{\mu\nu}(x) - 2 \left[ \alpha_A A^{(0)}_{\mu\nu}(x) + \alpha_B B^{(0)}_{\mu\nu}(x) \right] &= \frac{1}{2} \kappa^2 \omega\left( T^{(0)}_{\mu\nu}(x) \right) \eqend{,} \label{eq:Ein-Lang-zero} \\
G^{(1)}_{\mu\nu}(x) + \Lambda h_{\mu\nu}(x) - 2 \left[ \alpha_A A^{(1)}_{\mu\nu}(x) + \alpha_B B^{(1)}_{\mu\nu}(x) \right] &= \frac{1}{2} \kappa^2 \omega\left( T^{(1)}_{\mu\nu}(x) \right) + \frac{1}{2} \kappa \xi_{\mu\nu}(x) \label{eq:Ein-Lang-lin} \eqend{.}
\end{equations}
While the zeroth-order equation corresponds exactly to the semiclassical Einstein equation for the background metric $g^{(0)}_{\mu\nu}(x)$, the linearized equation depends on both the background metric and the linear perturbation, where $\xi_{\mu\nu}(x)$ plays the role of a classical source. Based on this, the linearized perturbation $h_{\mu\nu}(x)$ may be split into three contributions \cite{Roura1999}
\begin{equation}
\label{eq:pert-dec}
h_{\mu\nu}(x) = h_{\mu\nu}^\text{intr}(x) + h_{\mu\nu}^\text{ind}(x) \eqend{,} \qquad h_{\mu\nu}^\text{intr}(x) = h_{\mu\nu}^{\text{intr},0}(x) + h_{\mu\nu}^{\text{ind},1}(x) \eqend{.}
\end{equation}
On the one hand, $h_{\mu\nu}^{\text{intr}}$ describes the \textit{intrinsic} fluctuations of the metric, which are sourced by the expectation value of the quantum stress-energy tensor $\omega\left( T_{\mu\nu}(x) \right)$. The zeroth-order $\omega\left( T^{(0)}_{\mu\nu}(x) \right)$ is related to the free part $h_{\mu\nu}^{\text{intr},0}$ of the intrinsic fluctuations, which is thus part of the background solution, whereas the linearized order $\omega\left( T^{(1)}_{\mu\nu}(x) \right)$ sources the part $h_{\mu\nu}^{\text{intr},1}$ modulated by the interaction with the quantum matter. In the case of thermal photons, these one-loop contributions have been fully evaluated in \cite{Froeb2025} (see also \cite{Ota2023} where the one-loop contribution sourcing $h_{\mu\nu}^{\text{intr},1}$ was denoted radiation exchange). On the other hand, $h_{\mu\nu}^\text{ind}$ represents the \textit{induced} fluctuations of the metric sourced by the fluctuations of the quantum matter fields on the background, which are described by the noise kernel $\xi_{\mu\nu}(x)$. From now on, in \Eq~\eqref{eq:Ein-Lang-lin} we shall focus our attention on the free intrinsic fluctuations $h_{\mu\nu}^{\text{intr},0}$, which satisfy the homogeneous Einstein--Langevin equation, and on the induced fluctuations $h_{\mu\nu}^\text{ind}$, which satisfies the Einstein--Langevin equation with source $\xi_{\mu\nu}(x)$. These are the leading terms in a large-$N$ expansion \cite{Hu2004}.

In the stochastic picture, quantum expectations values really acquire the meaning of statistical averages $\Estoch$. In particular, the stochastic noise is completely characterized by a vanishing expectation value and non-vanishing variance
\begin{equation}
\label{eq:Gaussian-noise}
\Estoch \xi_{\mu\nu}(x) = 0 \eqend{,} \qquad \Estoch \xi_{\mu\nu}(x) \xi_{\rho\sigma}(x') = \mathcal{N}_{\mu\nu\rho\sigma}(x,x') \eqend{.}
\end{equation}
Denoting with $\omega^\text{c}(a b) = \omega(a b) - \omega(a) \omega(b)$ the connected two-point correlation of the state $\omega$, the bitensor $K_{\mu\nu\rho\sigma}(x,x')$ describing the statistical fluctuations associated to the renormalized quantum stress-energy tensor is given by
\begin{equation}
\label{eq:noise-kernel}
K_{\mu\nu\rho\sigma}(x,x') \doteq \omega^\text{c} \left( T_{\mu\nu}(x) T_{\rho\sigma}(x') \right) = \omega \left( t_{\mu\nu}(x) t_{\rho\sigma}(x') \right) \eqend{,}
\end{equation}
where $t_{\mu\nu}(x) \doteq T_{\mu\nu}(x) - \omega\left( T_{\mu\nu}\right) \1$ denotes the centered quantum stress-energy tensor (with vanishing expectation). Thus, \Eq~\eqref{eq:noise-kernel} may be split into its real and imaginary part
\begin{equation}
K_{\mu\nu\rho\sigma}(x,x') = \frac{1}{2} \omega\left( \{ t_{\mu\nu}(x), t_{\rho\sigma}(x') \} \right) + \frac{1}{2} \omega\left( [ t_{\mu\nu}(x), t_{\rho\sigma}(x') ] \right) \eqend{,}
\end{equation}
where $\{a,b\}$ and $[a,b]$ denote the anticommutator (which is real) and the commutator (which is purely imaginary), respectively. Since $\xi_{\mu\nu}$ is a classical stochastic variable, it variance given in \Eq~\eqref{eq:Gaussian-noise} is fully determined by the symmetric part
\begin{equation}
\label{eq:noise-kernel-ind}
\mathcal{N}_{\mu\nu\rho\sigma}(x,x') \doteq \frac{1}{2} \omega^\text{c} \left( \{ T_{\mu\nu}(x), T_{\rho\sigma}(x') \} \right) = \frac{1}{2} K_{\mu\nu\rho\sigma}(x,x') + \frac{1}{2} K_{\rho\sigma\mu\nu}(x',x) \eqend{,}
\end{equation}
thus describing the induced part of the quantum fluctuations associated to the backreaction of the quantum matter field.

Denoting with $\nabla^\mu$ the covariant derivative constructed out of the background metric, the noise $\xi_{\mu\nu}(x)$ is covariantly conserved, i.e., $\nabla^\mu \xi_{\mu\nu}(x) = 0$, since $\nabla^\mu K_{\mu\nu\rho\sigma}(x,x') = 0$ follows from the conservation of the quantum stress-energy tensor, and thus both the equalities $\Estoch \nabla^\mu \xi_{\mu\nu}(x) = 0$ and $\Estoch \nabla^\mu \xi_{\mu\nu}(x) \xi_{\rho\sigma}(x') = 0$ hold. Moreover, as $\xi_{\mu\nu}(x)$ does not depend on $h_{\mu\nu}(x)$, the Einstein--Langevin equation is gauge-invariant whenever the background solution satisfies the semiclassical Einstein equation. For further details, see \cite{Phillips2000,Hu2002,Hu2008,Hu2020stoc}.

\section{Cosmological thermal theory}
\label{sec:photon}

\subsection{The free thermal photon field}

The aim of this section is to describe the quantum matter field of our model, i.e., a free Maxwell field $A_\mu$ in a conformally thermal state, which drives the backreaction of the induced fluctuations in the cosmological background $(\mathcal{M},g^{(0)}_{\mu\nu})$. The quantization of a free Maxwell field has been deeply studied in literature from different perspectives \cite{Furlani1995,Moretti1996,Furlani1999,Pfenning2009,Dappiaggi2011,Finster2013,Hack2013,Glavan2024}. In the framework of algebraic quantum field theory (see, e.g., \cite{Brunetti2015,Hack2016cosm} and references therein), the quantization in globally hyperbolic spacetimes is carried out by constructing the free unital $*$-algebra of observables $\mathfrak{A}_A(\mathcal{M})$ generated by smeared one-form fields $A(f)$ defined for all smooth, compactly supported, complex valued one-form test functions $f \in \Omega_0^1(\mathcal{M})$. To overcome the issue of gauge freedom and define gauge-invariant observables, one defines $A(f)$ as equivalence classes of one-form fields acting on co-closed one-form test functions, i.e., satisfying $\delta f = 0$, where $\delta$ denotes the co-differential operator on $p$-forms. Hence, the quotient $*$-algebra $\mathfrak{A}_A(\mathcal{M})$ generated by $A(f)$ satisfies the following relations, which holds for all co-closed $f,g \in \Omega_0^1(\mathcal{M})$:
\begin{enumerate}
\item Linearity: $A(c_0 f + c_1 g) = c_0 A(f) + c_1 A(g)$, $c_0, c_1 \in \mathbb{C}$.
\item Hermiticty: $A^*(f) = A(\Bar{f})$.
\item On-shell fields: $A(\nabla^2 f) = 0$.
\item Canonical commutations relations: $[A(f), A(f')] = \mathi G_A(f,f') \1$.
\end{enumerate}
Here, $G_A \doteq G_A^\text{adv} - G_A^\text{ret}$ denotes the unique causal (advanced minus retarded) Green operator. 

Denoting with $A_\mu$ the one-form field and specializing to FLRW spacetimes \eqref{eq:FLRW} , the free Maxwell field is conformally invariant in the generalization of the Lorentz gauge $\nabla^\mu A_\mu = 2 H a A_0$~\cite{huguetrenaud2013} (see also \cite{glavan2025}), and the equation of motion simplifies to $\partial^2 A_\mu = 0$. To model thermal matter, we shall consider mixed quasi-free (Gaussian) conformally thermal states $\omega_{\mu\nu,\beta} \colon \mathfrak{A}_A(\mathcal{M}) \to \mathbb{C}$, whose two-point function in $n = 4$ dimensions is given by \cite{Dappiaggi2010,Hack2016cosm,Dappiaggi2017}
\begin{equation}
\label{eq:thermal-state}
\omega_{\mu\nu,\beta}(x,x') = \eta_{\mu\nu} \, \omega_\beta(x,x') \eqend{,}
\end{equation}
where
\begin{equation}
\label{eq:thermal-state-scal-conf}
\omega_\beta(x,x') \doteq \frac{1}{a(\eta) a(\eta')} \int \frac{1}{2 \abs{\vec{p}}} \left[ \frac{\mathe^{- \mathi \abs{\vec{p}} (\eta-\eta')}}{1 - \mathe^{- \beta \abs{\vec{p}}}} + \frac{\mathe^{\mathi \abs{\vec{p}} (\eta-\eta')}}{\mathe^{\beta \abs{\vec{p}}} - 1} \right] \mathe^{\mathi \vec{p} ( \vec{x} - \vec{x}' )} \frac{\total^3 \vec{p}}{(2 \pi)^3}
\end{equation}
is the scalar two-point function of a conformally thermal state. Here, $\beta = (k_\text{B} T)^{-1}$ is the normalized inverse temperature, with $k_\text{B}$ the Boltzmann constant. As expected, in the limit of zero temperature $\beta \to \infty$ the thermal two-point function reduces to the two-point function of the conformal vacuum state $\omega_{\mu\nu,\infty}: \mathfrak{A}_A \to \mathbb{C}$, which is given by
\begin{equation}
\label{eq:conformal-vacuum-omega}
\omega_{\mu\nu, \infty}(x,x') = \eta_{\mu\nu} \, \omega_\infty(x,x')
\end{equation}
with
\begin{equation}
\label{eq:conformal-vacuum-omega-scal}
\omega_{\infty}(x,x') = \frac{1}{a(\eta) a(\eta')} \int \frac{\mathe^{- \mathi \abs{\vec{p}} (\eta-\eta') + \mathi \vec{p} ( \vec{x} - \vec{x}' )}}{2 \abs{\vec{p}}} \frac{\total^3 \vec{p}}{(2 \pi)^3} \eqend{.}
\end{equation}
This class of states is called conformally thermal because they correspond to thermal equilibrium states after conformal rescaling \cite{Dappiaggi2010, Hack2016cosm}. Namely, they satisfy the KMS condition
\begin{equation}
a(\eta-\mathi \beta) a(\eta') \, \omega_\beta(\eta-\mathi \beta,\vec{x},x') = a(\eta) a(\eta') \, \omega_\beta(x',x) \eqend{,}
\end{equation}
as can be easily inferred from the explicit expression~\eqref{eq:thermal-state-scal-conf}. We note that since the background spacetime is time-dependent, it is impossible to obtain a global thermal equilibrium state, but local thermal equilibrium might be possible~\cite{Buchholz2001,Becattini2014,Gransee2015}.

Thanks to the Hadamard property fulfilled by both $\omega_{\mu\nu,\beta}(x,x')$ and $\omega_{\mu\nu,\infty}(x,x')$, their difference is a smooth function, namely
\begin{splitequation}
\label{eq:omega-diff}
\left( \Delta \omega \right)_{\mu\nu}(x,x') &\doteq \omega_{\mu\nu,\beta}(x,x') - \omega_{\mu\nu,\infty}(x,x') \in C^{\infty}(\mathcal{M}) \\
&= \eta_{\mu\nu} \frac{1}{a(\eta) a(\eta')} \int \frac{1}{\abs{\vec{p}}} \frac{\cos\left[ \abs{\vec{p}} (\eta-\eta') \right]}{\mathe^{\beta \abs{\vec{p}}} - 1} \mathe^{\mathi \vec{p} ( \vec{x} - \vec{x}' )} \frac{\total^3 \vec{p}}{(2 \pi)^3} \eqend{.}
\end{splitequation}
In particular, the difference is finite in the coincidence limit $x' \to x$, and we will use the fact to compute the expectation value of the quantum stress-energy tensor.

\subsection{The quantum stress-energy tensor}
\label{sec:stress-tensor}

Using the thermal state given in \Eq~\eqref{eq:thermal-state} constructed out of the conformal vacuum, we may evaluate the expectation value of the renormalized quantum stress-energy tensor $\omega\left( T_{\mu\nu} \right)$, whose classical expression is given by 
\begin{equation}
\label{eq:stress-tensor}
T^\text{cl}_{\mu\nu} = F_{\mu\rho} F_{\nu\sigma} g^{\rho\sigma} - \frac{1}{4} g_{\mu\nu} g^{\alpha\beta} g^{\rho\sigma} F_{\alpha\rho} F_{\beta\sigma} \eqend{,}
\end{equation}
where $F_{\mu\nu} = \nabla_\mu A_\nu - \nabla_\nu A_\mu$ denotes the electromagnetic (Faraday) tensor.

It is well-known that expectation values of quadratic (and other composite) observables generally diverge in the coinciding point limit, a problem which is solved in flat spacetime by defining them via normal ordering. The generalization of this procedure to curved spacetimes proceeds by point splitting and subtracting the divergent term, for which it is necessary that the states that are considered have a universal singular behavior. Therefore, one employs the class of Hadamard states, which are physically reasonable states whose divergent behavior mimics the Minkowski one~\cite{DeWitt1960,Wald1977back,Fewster2013}. In these states, using the point-splitting procedure one obtains a finite expectation value for the quantum stress-energy tensor \cite{Wald1995qftcurved}. Moreover, this procedure works on generic (globally hyperbolic) spacetimes, thus providing a local and covariant definition of Wick observables in curved spacetimes \cite{Brunetti1999,Hollands2001,Hollands2002}. For Maxwell fields (in a suitable gauge), a quasi-free state fulfills the Hadamard condition whenever its two-point function is given by \cite{DeWitt1960,Brunetti1995,Radzikowski1996,Brown1986,Belokogne2015}
\begin{equation}
\omega_{\mu\nu}(x',x) = \mathcal{H}_{\mu\nu}(x',x) + w_{\mu\nu}(x',x) \eqend{,}
\end{equation}
where 
\begin{equation}
\label{eq:hadamard-parametrix}
\mathcal{H}_{\mu\nu}(x,x') = \lim_{\varepsilon \to 0^+} \mathcal{H}_{\mu\nu,\varepsilon}(x,x') = \frac{u_{\mu\nu}(x,x')}{\sigma_\varepsilon(x,x')} + v_{\mu\nu}(x,x') \ln \left( \frac{\sigma_\varepsilon(x,x')}{\mu^2} \right) \eqend{.}
\end{equation}
denotes the Hadamard parametrix (the limit is taken in the usual sense of distributions). Here, $\sigma_\varepsilon(x,x') \doteq \sigma(x,x') + 2 \mathi \epsilon( t(x)-t(x') ) + \varepsilon^2$ with the Synge world function $\sigma(x,x')$ given by one half of the geodesic distance, $t$ is an arbitrary time function, and $u_{\mu\nu}(x,x')$, $v_{\mu\nu}(x,x')$, and $w_{\mu\nu}(x,x')$ denote the Hadamard coefficients, while $\mu$ is a scale introduced to make the logarithm dimensionless. In Hadamard states, renormalized quadratic observables and higher Wick powers may be evaluated by subtracting $\mathcal{H}_{\mu\nu}(x,x')$ before computing the coinciding point limit. Furthermore, in conformally flat spacetimes such as FLRW spacetimes, where a conformal null boundary is attached to the bulk spacetime, the Hadamard property defined on the boundary may be extended to quantum states in the bulk. This ``bulk-to-boundary correspondence'' allows to obtain cosmological Hadamard state which respects the isometries of the spacetime, such as the conformal vacuum and the thermal state given in \Eqs~\eqref{eq:conformal-vacuum-omega}, \eqref{eq:thermal-state}, and consequently to define the algebra of Wick polynomials in this class of spacetimes \cite{Dappiaggi2007,Dappiaggi2008,Dappiaggi2010,Dappiaggi2017}.

While the point-splitting procedure gives a unique result, it might have some undesired features. For example, the quantum stress-energy tensor is not conserved~\cite{Wald1977back}, because the Hadamard parametrix~\eqref{eq:hadamard-parametrix} does not fulfill the equation of motion. However, imposing that the renormalization scheme is local and covariant leaves some renormalization freedom which can be exploited to fulfill further conditions. In fact, the renormalization freedom was fully classified in globally hyperbolic spacetimes \cite{Hollands2001,Hollands2005,Hack2016cosm}, and for the quantum stress-energy tensor \eqref{eq:stress-tensor} is has been proved that two different regularization schemes giving $\tilde{T}_{\mu\nu}$ and $T_{\mu\nu}$ are related by
\begin{equation}
\label{eq:renorm-freed}
\tilde{T}_{\mu\nu} = T_{\mu\nu} + c_0 g_{\mu\nu} + c_1 G_{\mu\nu} + \gamma_1 A_{\mu\nu} + \gamma_2 B_{\mu\nu} \eqend{,}
\end{equation}
where $c_0$, $c_1$, $\gamma_1$, $\gamma_2 \in \mathbb{R}$. These correspond to the freedom of performing finite redefinitions of the bare constants $\Lambda$, $G_N$, $\alpha_A$, $\alpha_B$ appearing in \Eqs~\eqref{eq:Ein-Lang}, respectively, with $A_{\mu\nu}$ and $B_{\mu\nu}$ defined in \Eq~\eqref{eq:amunu_bmunu}. Based on this, different regularization schemes correspond to redefinitions of the constants $\Lambda$, $\alpha_A$ and $\alpha_B$ in the Einstein--Langevin equation~\eqref{eq:Ein-Lang}.

\begin{remark}
As pointed out in \cite{Brown1986} (see also \cite{Belokogne2015}), the definition of the quantum stress-energy tensor in the complete gauge-fixed theory involves two further contributions arising from the gauge-fixing function in the Maxwell action, depending on the gauge-breaking and the compensating ghost term. However, it may be shown that those quantum contributions exactly cancel each other. Therefore the full renormalized quantum stress-energy tensor turns to be exactly equal to the expectation value of $\omega\left( T_{\mu\nu} \right)$ obtained by the classical expression given in \Eq~\eqref{eq:stress-tensor}. In fact, this result holds on any BRST-invariant state, such as the conformal vacuum state \eqref{eq:conformal-vacuum-omega} and its thermal counterpart \eqref{eq:thermal-state}.
\end{remark}

On the other hand, imposing that $\omega\left( T_{\mu\nu} \right)$ is covariantly conserved gives rise to an anomalous contribution in the quantum trace, because the Hadamard parametrix~\eqref{eq:hadamard-parametrix} does not fulfill the equation of motion. Hence, $g^{\mu\nu} T_{\mu\nu}$ does not vanish anymore for conformally invariant fields, but it is (for free fields) a c-number functional of the metric \cite{Wald1978,Moretti2003,Hollands2005}. For Maxwell fields $A_\mu$ in four dimensions it reads \cite{Brown1986,Belokogne2015} 
\begin{equation}
g^{\mu\nu} \omega\left( T_{\mu\nu}(x) \right) = \frac{1}{4 \pi^2} \lim_{x' \to x} \left[ g^{\mu\nu} v_{1\mu\nu}(x,x') - 2 v_1(x,x') \right] \eqend{,}
\end{equation}
where $v_{1\mu\nu}(x,x')$ is one of the vector Hadamard coefficients, and $v_1(x,x')$ one of the scalar Hadamard coefficients. Their coincidence limits are quadratic in curvature tensors, namely
\begin{equation}
\label{eq:trace-anomaly-v}
g^{\mu\nu} \omega\left( T_{\mu\nu} \right) = \frac{1}{8 \pi^2} \left( - \frac{31}{540} R^2 + \frac{31}{180} R_{\mu\nu} R^{\mu\nu} - \frac{13}{360} C_{\mu\nu\rho\sigma} C^{\mu\nu\rho\sigma} - \frac{1}{120} \nabla^2 R \right) \eqend{.}
\end{equation}
However, as the trace anomaly is proportional to the identity, it does not influence the stochastic noise tensor $\xi_{\mu\nu}$ in \Eq~\eqref{eq:Ein-Lang}, because $g^{\mu\nu} K_{\mu\nu\rho\sigma} = 0$, such that $\Estoch g^{\mu\nu} \xi_{\mu\nu}(x) = 0$ and $\Estoch g^{\mu\nu} \xi_{\mu\nu}(x) \xi_{\rho\sigma}(x') = 0$. Therefore, for conformal fields it holds that $g^{\rho\sigma} \xi_{\rho\sigma} = 0$, so no contributions arise in the traced part of \Eqs~\eqref{eq:Ein-Lang} in this case.

\begin{remark}
The contribution proportional to $\nabla^2 R$ in \Eq~\eqref{eq:trace-anomaly-v} is not a true trace anomaly term, because it corresponds to a renormalization freedom of $g^{\mu\nu} \omega_\infty\left( T_{\mu\nu} \right)$ \cite{Hollands2001,Hollands2005}. Thus, in our case it may be always remove by a judicious choice of the renormalization constant $\gamma_2$ in \Eq~\eqref{eq:renorm-freed}, or equivalently $\alpha_B$ in \Eq~\eqref{eq:Ein-Lang}. In fact, the zeroth-order equation \eqref{eq:Ein-Lang-zero} is exactly the semiclassical Einstein equation, and the contribution proportional to $\nabla^2 R$ corresponds to a renormalization of the coefficient $\alpha_B$ related to the trace of $B_{\mu\nu}$ (see \Eq~\eqref{eq:bmunu-trace}). Since in cosmological spacetimes there is a unique degree of freedom given by $a$, the traced semiclassical equation may be promoted to the unique dynamical solution for the system, and in the case of massless conformally fields there is always sufficient freedom in that equation to discard the higher-order derivative term $\nabla^2 R$ \cite{Pinamonti2011,Pinamonti2013}. This is consistent with the perturbative approach to the semiclassical equation based on order reduction, according to which the higher-order derivative terms should be discarded as higher-order perturbative contributions in $\kappa^2$ \cite{Simon1991,Parker1993} (see also \cite{Froeb2013,Glavan2017,Glavan2024red}). We will do this in the following, and choose $\alpha_B$ such that the contribution proportional to $\nabla^2 R$ in \Eq~\eqref{eq:trace-anomaly-v} is canceled.
\end{remark}

In the cosmological case, we may exploit the trace anomaly and the conservation equation $\nabla^\mu \omega\left( T_{\mu\nu} \right) = 0$ to evaluate the renormalized quantum stress-energy tensor in a thermal state, starting from the conformal vacuum state \eqref{eq:conformal-vacuum-omega}. In fact, due to symmetries of FLRW spacetimes and thanks to the conformal invariance of Maxwell fields, the quantum stress-energy tensor is purely geometric in the conformal vacuum.
\begin{proposition}
\label{prop:vacuum}
Consider a FLRW spacetime $(M, g^{(0)})$ \eqref{eq:FLRW} in conformal coordinates $(\eta,\vec{x})$, with scale factor $a \in C^n(I_\eta)$, $n \geq 2$, for some time interval $I_\eta \subseteq \mathbb{R}$, and the conformal vacuum state $\omega_{\mu\nu,\infty}: \mathfrak{A}_A \to \mathbb{C}$ whose two-point function is given in \Eqs~\eqref{eq:conformal-vacuum-omega}, \eqref{eq:conformal-vacuum-omega-scal}. Then
\begin{equation}
\label{eq:stress-tensor-inf}
\omega_\infty\left( T_{\mu\nu} \right) (x) = \frac{31}{1440 \pi^2} \left( - 3 \eta_{\mu\nu} + 4 \epsilon \Bar{\eta}_{\mu\nu} \right) H^4 a^2 \eqend{,}
\end{equation}
where  
\begin{equation}
\label{eq:spat-met}
\Bar{\eta}_{\mu\nu} \doteq \eta_{\mu\nu} + \delta^0_\mu \delta^0_\nu
\end{equation}
denotes the spatial flat metric.
\end{proposition}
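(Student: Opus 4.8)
The plan is to pin down $\omega_\infty(T_{\mu\nu})$ from the two structural inputs already assembled in the excerpt — the trace anomaly \eqref{eq:trace-anomaly-v} and covariant conservation $\nabla^\mu\omega_\infty(T_{\mu\nu})=0$ — together with the conformal invariance of the Maxwell field, which forces the renormalized stress tensor in the conformal vacuum to be a purely geometric, local functional of the background metric. Spatial homogeneity and isotropy of FLRW then restrict this tensor to perfect-fluid form, $\omega_\infty(T_{\mu\nu}) = p\, g^{(0)}_{\mu\nu} + (\rho+p)\,u_\mu u_\nu$ with comoving covelocity $u_\mu = -a\,\delta^0_\mu$; in conformal coordinates this reads $\omega_\infty(T_{\mu\nu}) = \alpha(\eta)\,\eta_{\mu\nu} + \beta(\eta)\,\delta^0_\mu\delta^0_\nu$ with $\alpha = p\,a^2$ and $\beta = (\rho+p)\,a^2$. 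The task thus reduces to determining the two scalars $\rho$ and $p$.

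First I would fix the trace. Since FLRW is conformally flat the Weyl term in \eqref{eq:trace-anomaly-v} vanishes, and the $\nabla^2 R$ piece is discarded by the choice of $\alpha_B$ discussed in the remark following \eqref{eq:trace-anomaly-v}. Feeding in the curvature scalars from \eqref{eq:curv-FLRW}, namely $R = 6(2-\epsilon)H^2$ together with $R_{\mu\nu}R^{\mu\nu} = 12(3-3\epsilon+\epsilon^2)H^4$ — the latter obtained by contracting $R_{\mu\nu}$ with itself using $\eta^{00}=-1$ — the two surviving anomaly coefficients combine into $-\tfrac{31}{540}R^2 + \tfrac{31}{180}R_{\mu\nu}R^{\mu\nu} = \tfrac{31}{15}(\epsilon-1)H^4$, so that $g^{\mu\nu}\omega_\infty(T_{\mu\nu}) = -\rho+3p = \tfrac{31}{120\pi^2}(\epsilon-1)H^4$. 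This is one algebraic relation between $\rho$ and $p$.

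The second relation is conservation. For a homogeneous, isotropic tensor the spatial components of $\nabla^\mu\omega_\infty(T_{\mu\nu})=0$ hold identically, and the $\nu=0$ component reduces to the continuity equation $\rho' + 3Ha\,(\rho+p)=0$. Eliminating $p$ through the trace relation turns this into a single first-order linear equation $\rho' + 4Ha\,\rho = -Ha\,g^{\mu\nu}\omega_\infty(T_{\mu\nu})$, whose general solution is a geometric particular solution plus the homogeneous piece $C\,a^{-4}$. The particular solution is $\rho = \tfrac{31}{480\pi^2}H^4$, which one checks directly using $H' = -\epsilon H^2 a$ from the definition of $\epsilon$, and back-substitution gives $p = \tfrac{31}{1440\pi^2}(4\epsilon-3)H^4$. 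Rewriting $\alpha\,\eta_{\mu\nu} + \beta\,\delta^0_\mu\delta^0_\nu$ with the help of the spatial metric $\Bar{\eta}_{\mu\nu}$ of \eqref{eq:spat-met} then reproduces \eqref{eq:stress-tensor-inf}.

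The only genuinely non-routine point — and hence the main obstacle — is ruling out the homogeneous contribution $C\,a^{-4}$: this is precisely a radiation-like term (traceless and conserved) that would correspond to adding a thermal bath, i.e.\ exactly the content separating the conformal vacuum from the conformally thermal state. I would discard it by invoking the purely geometric character of $\omega_\infty(T_{\mu\nu})$ guaranteed by conformal invariance and the bulk-to-boundary construction of the conformal vacuum \eqref{eq:conformal-vacuum-omega}: since $a^{-4}$ is not a local functional of the curvature (for instance in the de Sitter case $\epsilon=0$ every curvature scalar is constant while $a^{-4}$ is not), only $C=0$ is compatible with geometricity, equivalently with the vanishing of the renormalized stress tensor in the Minkowski limit $a=\mathrm{const}$. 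The same conclusion follows independently from the conformal transformation law of the renormalized stress tensor applied to the flat vacuum, which generates no such homogeneous term.
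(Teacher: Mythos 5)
Your proof is correct and follows essentially the same route as the paper: a symmetry ansatz reducing $\omega_\infty(T_{\mu\nu})$ to two scalar functions, the trace anomaly \eqref{eq:trace-anomaly-v} supplying one algebraic relation, and covariant conservation supplying a first-order ODE, with all intermediate quantities ($R_{\mu\nu}R^{\mu\nu}=12(3-3\epsilon+\epsilon^2)H^4$, the trace $\tfrac{31}{120\pi^2}(\epsilon-1)H^4$, and the final $\rho$, $p$) matching the paper's $A$, $B$. The one place you go beyond the paper is in explicitly identifying and discarding the homogeneous solution $C\,a^{-4}$ of the continuity equation by appealing to the purely geometric (local-curvature) character of the conformal-vacuum stress tensor — the paper invokes geometricity only in the preamble and simply writes ``combining both the equations we get'' the particular solution, so your treatment of this integration constant is the more careful one.
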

\begin{proof}
Since the conformal vacuum state $\omega_{\infty,\mu\nu}$ for free Maxwell fields respects the symmetries of the spacetime, it holds on FLRW spacetimes that
\begin{equations}
\label{eq:stress-tensor-FLRW}
\omega_\infty\left( T_{\mu\nu} \right) = A(\eta) \eta_{\mu\nu} + B(\eta) \delta_\mu^0 \delta_\nu^0 \eqend{,}
\end{equations}
where $A \colon I_\eta \to C^n(I_\eta)$ and $B \colon I_\eta \to C^n(I_\eta)$ are temporal functions constructed out of the scale factor which should be determined. In particular, the traced part of $\omega_\infty\left( T_{\mu\nu} \right)$ is completely fixed by the trace anomaly given in \Eq~\eqref{eq:trace-anomaly-v}. Recalling that the Weyl tensor vanishes in conformally flat spacetimes and using the explicit expressions for the curvature tensors in \Eq~\eqref{eq:curv-FLRW}, we obtain
\begin{equation}
\eta^{\mu\nu} \omega_\infty\left( T_{\mu\nu} \right) = 4 A - B = \frac{31}{4320 \pi^2} a^2 \left( 3 a^{-4} \eta^{\mu\rho} \eta^{\nu\sigma} R_{\mu\nu} R_{\rho\sigma} - R^2 \right) = - \frac{31}{120 \pi^2} (1-\epsilon) H^4 a^2 \eqend{.}
\end{equation}
Moreover, the conservation equation entails
\begin{splitequation}
0 &= a^2 \nabla^\mu \omega_\infty\left( T_{\mu\nu} \right) \\
&= \eta^{\mu\rho} \partial_\rho \omega_\infty\left( T_{\mu\nu} \right) - 2 H a \omega_\infty\left( T_{0\nu} \right) - \delta_\nu^0 H a \eta^{\rho\sigma} \omega_\infty\left( T_{\rho\sigma} \right) \\
&= ( A' - B' - 2 H a A - H a B ) \delta_\nu^0 \eqend{.}
\end{splitequation}
Hence, combining both the equations we get
\begin{equation}
A = - \frac{31}{1440 \pi^2} (3-4\epsilon) H^4 a^2 \eqend{,} \qquad B = \frac{31}{360 \pi^2} \epsilon H^4 a^2 \eqend{,}
\end{equation}
and finally we obtain \Eq~\eqref{eq:stress-tensor-FLRW} recalling the definition of the spatial metric in \Eq~\eqref{eq:spat-met}.
\end{proof}

The expectation value in the thermal state may now be obtained by employing Proposition \ref{prop:vacuum} for the vacuum part, and evaluating the difference
\begin{splitequation}
\label{eq:omega-diff-stress}
\Delta \omega\left( T_{\mu\nu}(x) \right) &\doteq \omega_\beta\left( T_{\mu\nu}(x) \right) - \omega_\infty\left( T_{\mu\nu}(x) \right) \\
&= a^{-2}(\eta) \lim_{x' \to x} \left[ \mathfrak{D}_{\mu\nu}{}^{\gamma\delta}(x,x') \Delta \omega_{\gamma\delta}(x,x') \right] \eqend{,}
\end{splitequation}
where $\Delta \omega_{\gamma\delta}(x,x')$ is defined in \Eq~\eqref{eq:omega-diff} and 
\begin{equation}
\label{eq:gD-op}
\mathfrak{D}_{\mu\nu}{}^{\gamma \delta}(x,x') \doteq \eta^{\gamma \delta} \partial_\mu^x \partial_\nu^{x'} - 2 \delta_{(\mu}^\gamma \partial_{\nu)}^x \partial_\nu^{x'} + \frac{1}{2} \eta_{\mu\nu} \partial_x^\gamma \partial_{x'}^{\delta} + \left( \delta_\mu^\gamma \delta_\nu^\delta - \frac{1}{2} \eta_{\mu\nu} \eta^{\gamma\delta} \right) \partial_x^\rho \partial^{x'}_\rho
\end{equation}
is the bi-differential operator obtained from the classical form of the Maxwell stress-energy tensor \eqref{eq:stress-tensor}.

\begin{proposition}
\label{prop:thermal}
Under the same hypothesis as in Proposition \ref{prop:vacuum} for FLRW spacetimes and taking into account the thermal state $\omega_{\mu\nu,\beta} \colon \mathfrak{A}_A \to \mathbb{C}$ whose two-point function is given in \Eqs~\eqref{eq:thermal-state}, \eqref{eq:thermal-state-scal-conf}, the thermal expectation value of the quantum-stress-energy tensor of the free Maxwell field on cosmological spacetimes reads
\begin{equations}[eq:stress-tensor-thermal]
\omega_\beta\left( T_{00}(x) \right) &= \frac{31}{480 \pi^2} H^4(\eta) a^2(\eta) + \frac{\pi^2}{15 \beta^4} a^{-2}(\eta) \eqend{,} \\
\omega_\beta\left( T_{0i}(x) \right) &= 0 \eqend{,} \\
\omega_\beta\left( T_{ij}(x) \right) &= \left[ - \frac{31}{1440 \pi^2} \left( 3 - 4\epsilon \right) H^4(\eta) a^2(\eta) + \frac{\pi^2}{45 \beta^4} a^{-2}(\eta) \right] \delta_{ij} \eqend{.}
\end{equations}
\end{proposition}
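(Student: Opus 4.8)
The plan is to obtain $\omega_\beta(T_{\mu\nu})$ as the sum of the purely geometric vacuum contribution, already fixed by Proposition~\ref{prop:vacuum}, and the thermal difference $\Delta\omega(T_{\mu\nu})$ of \eqref{eq:omega-diff-stress}. Since $\omega_\beta-\omega_\infty$ is smooth by the shared Hadamard singularity structure, the coincidence limit in \eqref{eq:omega-diff-stress} is finite and needs no further subtraction, so the whole task reduces to evaluating $\lim_{x'\to x}\mathfrak{D}_{\mu\nu}{}^{\gamma\delta}\Delta\omega_{\gamma\delta}$ and adding \eqref{eq:stress-tensor-inf}.

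First I would exploit the tensor structure of the difference. Because $\Delta\omega_{\gamma\delta}(x,x')=\eta_{\gamma\delta}\Delta\omega(x,x')$ carries only the trace part, I contract the bidifferential operator \eqref{eq:gD-op} with $\eta_{\gamma\delta}$, which collapses it in the symmetric coincidence limit to the scalar operator $2\partial^x_{(\mu}\partial^{x'}_{\nu)}-\tfrac12\eta_{\mu\nu}\partial_x\cdot\partial_{x'}$. Acting with this on the momentum representation \eqref{eq:omega-diff}, the conformal-time derivatives produce a frequency factor $\abs{\vec{p}}$ from the oscillating kernel $\cos[\abs{\vec{p}}(\eta-\eta')]$ while the spatial derivatives bring down $p_ip_j$; at coincidence only even numbers of derivatives survive and the angular integration enforces $p_ip_j\to\tfrac13\abs{\vec{p}}^2\delta_{ij}$. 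The radial integrals are the standard Bose--Einstein moments $\int_0^\infty\abs{\vec{p}}^{2k-1}(\mathe^{\beta\abs{\vec{p}}}-1)^{-1}\total\abs{\vec{p}}=\Gamma(2k)\zeta(2k)\beta^{-2k}$, so that the energy density is controlled by $\int_0^\infty\abs{\vec{p}}^3(\mathe^{\beta\abs{\vec{p}}}-1)^{-1}\total\abs{\vec{p}}=\pi^4/(15\beta^4)$ and the pressure by the same integral weighted with the isotropic factor $\tfrac13$, reproducing the radiation equation of state $p=\rho/3$.

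The organising principle that pins down the answer without ambiguity is the combination of three facts: isotropy and homogeneity of the conformal thermal state force $\Delta\omega(T_{\mu\nu})=A_\beta(\eta)\eta_{\mu\nu}+B_\beta(\eta)\delta^0_\mu\delta^0_\nu$, exactly as in Proposition~\ref{prop:vacuum}; the trace anomaly \eqref{eq:trace-anomaly-v} is a state-independent c-number, hence it cancels in the difference and $\Delta\omega(T_{\mu\nu})$ is traceless; and $\nabla^\mu\Delta\omega(T_{\mu\nu})=0$. Tracelessness together with conservation forces the physical energy density to redshift as $a^{-4}$, i.e.\ $\Delta\omega(T_{00})\propto a^{-2}$, leaving a single overall constant that the momentum integral of the previous paragraph fixes to $\pi^2/(15\beta^4)$. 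Adding \eqref{eq:stress-tensor-inf} then yields \eqref{eq:stress-tensor-thermal}, and I would check the limit $\beta\to\infty$, in which the thermal pieces vanish and the vacuum result is recovered.

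I expect the delicate step to be the careful bookkeeping of the conformal prefactor $1/(a(\eta)a(\eta'))$ in \eqref{eq:omega-diff} under the derivatives of $\mathfrak{D}$. Differentiating these scale factors produces contributions proportional to $H$ and $H^2$ that are absent from the final answer, and a naive flat-space evaluation even fails to be covariantly conserved. One must therefore organise these curvature terms so that they cancel — as they must, the difference being conformal to a flat thermal state — or use the conformal covariance of the Maxwell stress tensor directly, writing $\Delta\omega(T_{\mu\nu})$ as $a^{-2}$ times the flat-space thermal Maxwell stress tensor, whose anomalous inhomogeneous part drops out of the difference. Conservation and the state-independence of the trace anomaly then serve as the crucial consistency checks that these curvature contributions have been handled correctly.
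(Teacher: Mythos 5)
Your proposal is correct and follows essentially the same route as the paper: contract the bidifferential operator $\mathfrak{D}_{\mu\nu}{}^{\gamma\delta}$ with $\eta_{\gamma\delta}$ to get $2\bigl(\partial^x_{(\mu}\partial^{x'}_{\nu)}-\tfrac14\eta_{\mu\nu}\,\partial_x\!\cdot\!\partial_{x'}\bigr)$ (whose trace piece annihilates $\Delta\omega$ by the massless dispersion relation), evaluate the coincidence limit in momentum space using $p_ip_j\to\tfrac13\delta_{ij}\vec{p}^2$ and the Bose--Einstein moment $\int\abs{\vec{p}}(\mathe^{\beta\abs{\vec{p}}}-1)^{-1}\tfrac{\total^3\vec{p}}{(2\pi)^3}=\tfrac{\pi^2}{30\beta^4}$, and add the vacuum result of Proposition~\ref{prop:vacuum}. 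Your additional consistency checks (tracelessness plus conservation forcing $\Delta\omega(T_{00})\propto a^{-2}$, and the remark that the $1/(a(\eta)a(\eta'))$ prefactor must be handled via the conformal weight of the Maxwell stress tensor rather than differentiated naively) go slightly beyond what the paper spells out, but do not change the argument.
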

\begin{proof}
The proof consists on evaluating the point-split expression for $\Delta \omega \left( T_{\mu\nu}(x) \right)$ given in \Eq~\eqref{eq:omega-diff-stress}, and adding to it the components of the quantum stress-energy tensor in the conformal vacuum state given in \Eq~\eqref{eq:stress-tensor-FLRW}. Contracting $\mathfrak{D}_{\mu\nu}{}^{\gamma \delta} \eta_{\gamma\delta}$ with the explicit expression for the thermal two-point function given in \Eq~\eqref{eq:thermal-state} yields
\begin{equation}
\label{eq:delta_omega_tmunu}
\Delta \omega\left( T_{\mu\nu}(x) \right) = 2 a^{-2}(\eta) \lim_{x' \to x} \left[ \partial_\mu^x \partial_{\nu}^{x'} \left( \Delta \omega\right)(x',x) - \frac{1}{4} \eta_{\mu\nu} \eta^{\rho\sigma} \partial_\rho^x \partial_{\sigma}^{x'} \left( \Delta \omega\right)(x',x) \right] \eqend{,}
\end{equation}
where
\begin{equation}
\left( \Delta \omega\right)(x,x') \doteq \omega_{\beta}(x,x') - \omega_{\infty}(x,x') = \int \frac{1}{\abs{\vec{p}}} \frac{\cos\left[ \abs{\vec{p}} (\eta-\eta') \right]}{\mathe^{\beta \abs{\vec{p}}} - 1} \mathe^{\mathi \vec{p} ( \vec{x}-\vec{x}' )} \frac{\total^3 \vec{p}}{(2 \pi)^3} \eqend{.}
\end{equation}
Note that
\begin{equation*}
\partial^2_x \left( \Delta \omega\right)(x,x') = \partial^2_{x'} \left( \Delta \omega\right)(x,x') = \eta^{\rho\sigma} \partial_\rho^x \partial_{\sigma}^{x'} \left( \Delta \omega\right)(x,x') = 0 \eqend{,}
\end{equation*}
such that the second term in~\eqref{eq:delta_omega_tmunu} vanishes. Performing the derivatives and taking the limit, we thus obtain
\begin{equations}
\Delta \omega \left( T_{00}(x) \right) &= 2 a^{-2}(\eta) \int \frac{\abs{\vec{p}}}{\mathe^{\beta \abs{\vec{p}}} - 1} \frac{\total^3 \vec{p}}{(2 \pi)^3} = \frac{\pi^2}{15 \beta^4} a^{-2}(\eta) \eqend{,} \\
\Delta \omega \left( T_{0i}(x) \right) &= 0 \eqend{,} \\
\Delta \omega \left( T_{ij}(x) \right) &= \frac{2}{3} \delta_{ij} a^{-2}(\eta) \int \frac{\abs{\vec{p}}}{\mathe^{\beta \abs{\vec{p}}} - 1} \frac{\total^3 \vec{p}}{(2 \pi)^3} = \delta_{ij} \frac{\pi^2}{45 \beta^4} a^{-2}(\eta) \eqend{,}
\end{equations}
where we used that due to rotational invariance of the integral (after taking the limit $x' \to x$) we could replace $\vec{p}_i \to 0$ (for $T_{0i}$) and $\vec{p}_i \vec{p}_j \to \frac{1}{3} \delta_{ij} \vec{p}^2$ (for $T_{ij}$). Finally, summing up all the contributions in \Eq~\eqref{eq:omega-diff-stress}, we obtain \Eq~\eqref{eq:stress-tensor-thermal}.
\end{proof}

The explicit value~\eqref{eq:stress-tensor-thermal} of the thermal quantum stress-energy tensor allows to solve the zeroth-order semiclassical Einstein equation \eqref{eq:Ein-Lang-zero} in FLRW spacetimes. We recall that the curvature tensor $A_{\mu\nu}$ vanishes in FLRW spacetimes, and, furthermore, the curvature tensor $B_{\mu\nu}$ may be removed by a judicious choice of the $\alpha_B$, taking into account the higher-order derivative contribution proportional to $\nabla^2 R$ in the trace anomaly \eqref{eq:trace-anomaly-v}. Hence, using the explicit expressions for the curvature tensors in \Eq~\eqref{eq:curv-FLRW}, the semiclassical Einstein equation \eqref{eq:Ein-Lang-zero} splits into a temporal and a spatial part as follows:
\begin{equations}[eq:Ein-lang-zero-FLRW-thermal]
3 H^2 a^2 - \Lambda a^2 &= \frac{1}{2} \kappa^2 \left[ \frac{31}{480 \pi^2} H^4 a^2 + \frac{\pi^2}{15 \beta^4} a^{-2} \right] \eqend{,} \label{eq:Ein-lang-zero-FLRW-thermal_00} \\
- (3-2\epsilon) H^2 a^2 + \Lambda a^2 &= \frac{1}{2} \kappa^2 \left[ - \frac{31}{1440\pi^2} \left( 3 - 4\epsilon \right) H^4 a^2 + \frac{\pi^2}{45 \beta^4} a^{-2} \right] \eqend{.} \label{eq:Ein-lang-zero-FLRW-thermal_ij}
\end{equations}
By taking a derivative with respect to conformal time $\eta$ of the first equation, we see that the second equation holds if the first one is fulfilled, and so we only need to solve the first equation, which is a nonlinear differential equation of first order for the scale factor $a$.

\begin{corollary}
Consider a FLRW spacetime $\left( \mathcal{M}, g^{(0)}_{\mu\nu} \right)$ and a quantum Maxwell field whose algebra is $\mathfrak{A}_A(\mathcal{M})$, endowed with the quasi-free quasi-thermal state $\omega_{\mu\nu, \beta} \colon \mathfrak{A}_A(\mathcal{M}) \to \mathbb{C}$ whose two-point function is given in \Eqs~\eqref{eq:thermal-state} and~\eqref{eq:thermal-state-scal-conf}. Consider the expectation value of the quantum stress-energy tensor $\omega_\beta\left( T_{\mu\nu}(x) \right)$ evaluated in the quasi-thermal state, which is given in \Eq~\eqref{eq:stress-tensor-thermal}. The zeroth-order cosmological semiclassical Einstein equation
\begin{equation}
G^{(0)}_{\mu\nu}(x) + \Lambda g^{(0)}_{\mu\nu} = \frac{\kappa^2}{2} \omega_\beta\left( T_{\mu\nu}(x) \right)
\end{equation} 
admits a unique FLRW solution with scale factor $a(\eta) \in C^\infty(I_\eta)$ for a sufficiently short time interval $I_\eta$. For vanishing cosmological constant $\Lambda = 0$, it also admits a perturbative solution with $\eta \in (0,\infty)$ of the form
\begin{equation}
a(\eta) \approx a_0 + \frac{\kappa \pi}{\sqrt{90} \, \beta^2} \eta \eqend{,}
\end{equation}
which is valid if $\left[ \kappa / (12 \beta a_0) \right]^4 \ll 1$. This is the solution for a radiation-dominated universe, \Eq~\eqref{eq:power-law} with $\epsilon = 2$, $H_0 = \frac{\kappa \pi}{\sqrt{90} \, \beta^2}$ and $\eta_0 = - a_0/H_0$.
\end{corollary}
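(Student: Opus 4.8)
The plan is to collapse the system \eqref{eq:Ein-lang-zero-FLRW-thermal_00}--\eqref{eq:Ein-lang-zero-FLRW-thermal_ij} into the single $00$-component. As already noted below \eqref{eq:Ein-lang-zero-FLRW-thermal_ij}, differentiating \eqref{eq:Ein-lang-zero-FLRW-thermal_00} with respect to $\eta$ reproduces \eqref{eq:Ein-lang-zero-FLRW-thermal_ij} once the conservation law $\nabla^\mu \omega_\beta(T_{\mu\nu}) = 0$ is used, so every solution of the first equation automatically solves the pair. I would then insert the definition $H = a'/a^2$, so that $H^2 a^2 = (a')^2/a^2$ and $H^4 a^2 = (a')^4/a^6$, turning \eqref{eq:Ein-lang-zero-FLRW-thermal_00} into the autonomous first-order ODE
\begin{equation*}
3 \frac{(a')^2}{a^2} - \Lambda a^2 = \frac{\kappa^2}{2} \left[ \frac{31}{480 \pi^2} \frac{(a')^4}{a^6} + \frac{\pi^2}{15 \beta^4 a^2} \right] \eqend{.}
\end{equation*}
Setting $y \doteq (a')^2$ and clearing $a^6$, this is the quadratic
\begin{equation*}
\frac{31 \kappa^2}{960 \pi^2} y^2 - 3 a^4 y + \Lambda a^8 + \frac{\kappa^2 \pi^2}{30 \beta^4} a^4 = 0 \eqend{.}
\end{equation*}

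Next I would solve the quadratic and select the physically correct branch. The two roots are
\begin{equation*}
(a')^2 = \frac{480 \pi^2}{31 \kappa^2} \left[ 3 a^4 \mp \sqrt{ 9 a^8 - \frac{31 \kappa^2}{240 \pi^2} \left( \Lambda a^8 + \frac{\kappa^2 \pi^2}{30 \beta^4} a^4 \right) } \right] \eqend{,}
\end{equation*}
and the relevant one is the minus sign, since a Taylor expansion of the square root shows it reduces to the classical Friedmann relation $(a')^2 \to \frac{\Lambda}{3} a^4 + \frac{\kappa^2 \pi^2}{90 \beta^4} + \bigo{\kappa^2}$ as $\kappa \to 0$, whereas the plus branch diverges. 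The square root is real precisely when the discriminant is nonnegative; for $\Lambda = 0$ this reads $9 a^4 \geq \frac{31 \kappa^4}{7200 \beta^4}$, i.e.\ it bounds $a$ from below. On the open set where the discriminant is strictly positive and $a > 0$, the minus branch defines a smooth function $a' = f(a) > 0$ (taking the positive root for an expanding universe), so the ODE is of the form $a' = f(a)$ with $f$ locally Lipschitz. The Picard--Lindel\"of theorem then yields, for any admissible initial value $a(\eta_*) = a_* > 0$, a unique solution on a short interval $I_\eta$; smoothness $a \in C^\infty(I_\eta)$ follows by bootstrapping, since $a' = f(a)$ with $f$ smooth gives $a \in C^1$, hence $f(a) \in C^1$ and $a \in C^2$, and so on. The restriction to a \emph{short} interval is exactly the price of remaining inside the discriminant-positive, $a > 0$ domain.

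For the perturbative statement I would specialize to $\Lambda = 0$ and expand the minus branch for weak coupling. The leading behaviour is the constant $(a')^2 \approx \frac{\kappa^2 \pi^2}{90 \beta^4}$, the curvature self-interaction term $\propto H^4 a^2$ entering only as a relative correction of order $[\kappa/(\beta a)]^4$; along an expanding trajectory one has $a \geq a_0$ for $\eta > 0$, so this correction is uniformly negligible precisely under the stated hypothesis $[\kappa/(12 \beta a_0)]^4 \ll 1$, and indeed shrinks further as $\eta \to \infty$ since $H \to 0$. Integrating the constant slope gives $a(\eta) \approx a_0 + \frac{\kappa \pi}{\sqrt{90}\, \beta^2}\, \eta$, which extends to $\eta \in (0,\infty)$. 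Finally I would match this with the power-law family \eqref{eq:power-law}: at $\epsilon = 2$ one has $1 - \epsilon = -1$ and the form \eqref{eq:power-law} becomes linear in $\eta$ with slope $H_0$, so identifying $H_0 = \frac{\kappa \pi}{\sqrt{90}\, \beta^2}$ and $\eta_0 = -a_0/H_0$ exhibits the solution as a radiation-dominated universe.

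The main obstacle I anticipate is not the integration but the branch-and-domain analysis: one must verify that the minus root is the unique continuation of the classical Friedmann branch, confirm that its discriminant stays positive (thereby pinning down the admissible range of $a$ and hence the finite lifetime of the exact solution), and then show that the perturbative truncation is self-consistent, i.e.\ that neglecting the $H^4 a^2$ term remains legitimate along the entire ray $\eta \in (0,\infty)$. This self-consistency check, converting the pointwise smallness condition $[\kappa/(12 \beta a_0)]^4 \ll 1$ into a uniform bound on the trajectory, is where the quantitative work lies.
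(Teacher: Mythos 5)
Your proposal is correct and follows essentially the same route as the paper: reduce the system to the $00$-component, invoke standard ODE existence and uniqueness theory for the short-interval claim, and obtain the linear scale factor by dropping the $H^4$ term with an a posteriori self-consistency check using $a(\eta) \geq a_0$ along the expanding trajectory. The only difference is one of presentation: you solve the quadratic in $(a')^2$ explicitly and perform a branch/discriminant analysis, whereas the paper keeps the equation in the implicit form $(a')^2 = \kappa^2\pi^2/\bigl[15\beta^4\bigl(6 - \tfrac{31}{480\pi^2}\kappa^2 H^2\bigr)\bigr]$ and simply neglects the $\kappa^2 H^2$ term in the denominator --- your extra care in identifying the physical root and the lower bound on $a$ is a refinement of, not a departure from, the paper's argument.
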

\begin{proof}
Based on the discussion stated before, and recalling the results obtained in the previous proposition, the semiclassical Einstein equation reduces to a unique dynamical equation involving the Hubble function $H$ and the scale factor $a$, namely \Eq~\eqref{eq:Ein-lang-zero-FLRW-thermal_00}. Since $H$ is the first derivative of the scale factor $a$, this is a nonlinear first-order differential equation which has a unique solution for short time intervals according to well-known theorems. To exhibit the existence of a perturbative solution, we rewrite \Eq~\eqref{eq:Ein-lang-zero-FLRW-thermal_00} with $\Lambda = 0$ as
\begin{equation}
(H^2 a^4)(\eta) = \bigl[ a'(\eta) \bigr]^2 = \frac{\kappa^2 \pi^2}{15 \beta^4 \left( 6 - \kappa^2 H^2 \frac{31}{480 \pi^2} \right)} \eqend{.}
\end{equation}
In a first approximation we can neglect the term proportional to $\kappa^2 H^2$ in the denominator and obtain the solution
\begin{equation}
a(\eta) \approx a_0 + \frac{\kappa \pi}{\sqrt{90} \, \beta^2} \eta \eqend{.}
\end{equation}
For this solution, we compute
\begin{equation}
H(\eta) \approx \frac{\kappa \pi}{\sqrt{90} \, \beta^2 a(\eta)^2} \leq \frac{\kappa \pi}{\sqrt{90} \, \beta^2 a_0^2} \eqend{,}
\end{equation}
which gives a self-consistent solution (namely, neglecting the term proportional to $\kappa^2 H^2$ is justified a posteriori) if the stated condition holds. Perturbative corrections to this result can then be obtained straightforwardly.
\end{proof}

\section{Thermal noise kernel}
\label{sec:thermal-noise}

In this section, we evaluate the induced thermal cosmological fluctuations encoded in the kernel \eqref{eq:noise-kernel}, whose symmetric part is the stochastic noise kernel \eqref{eq:noise-kernel-ind}, in the thermal state $\omega_{\mu\nu,\beta}$ given in \Eq~\eqref{eq:thermal-state} for free Maxwell fields. About the noise kernel of conformally coupled scalar fields in conformally flat spacetimes, see \cite{Cho2014,Satin2016,Hu2020stoc} and references therein.

Let us denote with $\xi^\beta(x,x')$ the Gaussian thermal stochastic noise associated to that state and entering the linearized Einstein--Langevin equation \eqref{eq:Ein-Lang-lin} in the thermal case. Let us define the connected quantum stress-energy tensor thermal correlator $t^\beta_{\mu\nu} \doteq T_{\mu\nu} - \omega_\beta \left( T_{\mu\nu} \right) \1$, then $\xi^\beta(x,x')$ is completely characterized by its vanishing expectation value and its non-vanishing induced variance:
\begin{equations}
\Estoch \xi^\beta_{\mu \nu}(x) &= 0 \eqend{,} \\
\Estoch \xi^\beta_{\mu \nu}(x) \xi^\beta_{\rho\sigma} (x') &= \frac{1}{2} \omega_\beta\left( \{ t^\beta_{\mu\nu}(x), t^\beta_{\rho\sigma}(x') \}\right) = \frac{1}{2} K^\beta_{\mu\nu\rho\sigma}(x,x') + \frac{1}{2} K^\beta_{\rho\sigma\mu\nu}(x',x) \eqend{,}
\end{equations}
where we define the thermal variance kernel
\begin{equation}
\label{eq:variance-thermal}
K^\beta_{\mu\nu\rho\sigma}(x,x') \doteq \omega_\beta\left( t^\beta_{\mu\nu}(x) t^\beta_{\rho\sigma}(x') \right) \eqend{.}
\end{equation}

The evaluation of the thermal variance \eqref{eq:variance-thermal} is provided as follows:
\begin{proposition}
\label{prop:variance}
Consider a FLRW spacetime $(\mathcal{M},g^{(0)}_{\mu\nu})$ and the quasi-thermal state $\omega_{\mu\nu,\beta}$ given in \Eq~\eqref{eq:thermal-state} of a free Maxwell field. Define the following two-point correlation function
\begin{splitequation}
\label{eq:thermal-two-point-kernel}
\mathscr{K}_{\rho\sigma\gamma\delta}(x,x') &\doteq \partial^x_\rho \partial^x_\sigma \omega_\beta(x,x') \partial^x_\gamma \partial^x_\delta \omega_\beta(x,x') \\
&= \int_{\mathbb{R}^3} \tilde{K}_{\rho\sigma\gamma\delta}\left( \eta - \eta', \vec{p} \right) \mathe^{\mathi \vec{p} ( \vec{x} - \vec{x}') )} \frac{\total^3 \vec{p}}{(2 \pi)^3} \eqend{,}
\end{splitequation}
where
\begin{splitequation}
\label{eq:kernel-thermal}
&\tilde{K}_{\rho\sigma\gamma\delta}(\eta-\eta',\vec{p}) \\
&\doteq \lim_{\epsilon \to 0^+} \int \frac{1}{4 \abs{\vec{p}-\vec{q}} \abs{\vec{q}}} \left[ \frac{\mathe^{\mathi \abs{\vec{p}-\vec{q}} (\eta-\eta')}}{1 - \mathe^{- \beta \abs{\vec{p}-\vec{q}}}} (p-q)_\rho (p-q)_\sigma + \frac{\mathe^{- \mathi \abs{\vec{p}-\vec{q}} (\eta-\eta')}}{\mathe^{\beta \abs{\vec{p}-\vec{q}}} - 1} (p-q)^*_\rho (p-q)^*_\sigma \right] \\
&\qquad\qquad\times \left[ \frac{\mathe^{\mathi \abs{\vec{q}} (\eta-\eta')}}{1 - \mathe^{- \beta \abs{\vec{q}}}} q_\gamma q_\delta + \frac{\mathe^{- \mathi \abs{\vec{q}} (\eta-\eta')}}{\mathe^{\beta \abs{\vec{q}}} - 1} q^*_\gamma q^*_\delta \right] \mathe^{- \epsilon (\abs{\vec{p}-\vec{q}}+\abs{\vec{q}})} \frac{\total^3 \vec{q}}{(2 \pi)^3}
\end{splitequation}
is defined in the sense of distributions, with $k_\mu \doteq \left( \abs{\vec{k}}, \vec{k}\right)$, $k^*_\mu \doteq \left( -\abs{\vec{k}}, \vec{k} \right)$. Then the induced variance \eqref{eq:variance-thermal} is given by
\begin{equation}
\label{eq:thermal-variance}
K^\beta_{\mu\nu\alpha\beta}(x,x') = 2 \mathcal{E}_{\mu\nu\alpha\beta}{}^{\rho\sigma\gamma\delta}(\eta,\eta') \mathscr{K}_{\rho\sigma\gamma\delta}(x,x') \eqend{,}
\end{equation}
where
\begin{splitequation}
&\mathcal{E}_{\mu\nu\alpha\beta}{}^{\rho\sigma\gamma\delta}(\eta,\eta') \doteq a^{-2}(\eta) a^{-2}(\eta') \biggl[ 2 \delta_{(\mu}^\rho \delta_{\nu)}^\sigma \delta_{(\alpha}^\gamma \delta_{\beta)}^\delta \\
&\hspace{4em}- \eta^{\rho\gamma} \left( 4 \delta_{(\mu}^\sigma \eta_{\nu)(\alpha} \delta_{\beta)}^\delta - \eta_{\mu(\alpha} \eta_{\beta)\nu} \eta^{\sigma\delta}- \eta_{\mu\nu} \delta_{(\alpha}^\sigma \delta_{\beta)}^\delta - \eta_{\alpha\beta} \delta_{(\mu}^\sigma \delta_{\nu)}^\delta + \frac{1}{2} \eta_{\mu\nu} \eta_{\alpha\beta} \eta^{\sigma\delta} \right) \biggr] \eqend{.}
\raisetag{2.6em}
\end{splitequation}
In particular, the kernel \eqref{eq:kernel-thermal} is traceless on each pair of indices,
\begin{equation}
\label{eq:traceless}
\eta^{\rho\sigma} \tilde{K}_{\rho\sigma\gamma\delta}(\eta-\eta',\vec{p}) = \eta^{\gamma\delta} \tilde{K}_{\rho\sigma\gamma\delta}(\eta-\eta',\vec{p}) = 0 \eqend{,}
\end{equation}
and it is symmetric,
\begin{equation}
\label{eq:symmetric}
\tilde{K}_{\rho\sigma\gamma\delta}(\eta-\eta',\vec{p}) = \tilde{K}_{(\rho\sigma)(\gamma\delta)}(\eta-\eta',\vec{p}) = \tilde{K}_{\gamma\delta\rho\sigma}(\eta-\eta',\vec{p}) \eqend{.}
\end{equation}
Furthermore, the kernel of the induced variance is traceless on each pair of indices,
\begin{equation}
\label{eq:induced_traceless}
\eta^{\rho\sigma} K^\beta_{\rho\sigma\gamma\delta}(x,x') = \eta^{\gamma\delta} K^\beta_{\rho\sigma\gamma\delta}(x,x') = 0 \eqend{,}
\end{equation}
it is symmetric,
\begin{equation}
\label{eq:induced_symmetric}
K^\beta_{\rho\sigma\gamma\delta}(x,x') = K^\beta_{(\rho\sigma)(\gamma\delta)}(x,x') = K^\beta_{\gamma\delta\rho\sigma}(x,x') \eqend{.}
\end{equation}
and it is conserved,
\begin{equation}
\label{eq:induced_conserved}
\nabla^\rho K^\beta_{\rho\sigma\gamma\delta}(x,x') = \partial^\rho \left[ a^2(\eta) a^2(\eta') K^\beta_{\rho\sigma\gamma\delta}(x,x') \right] = 0 \eqend{.}
\end{equation}
\end{proposition}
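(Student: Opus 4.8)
The plan is to exploit that $\omega_\beta$ is a quasi-free (Gaussian) state, so that the connected four-point function defining the variance \eqref{eq:variance-thermal} collapses through Wick's theorem. Since the centered operator $t^\beta_{\mu\nu}$ is quadratic in the field, by \eqref{eq:omega-diff-stress}--\eqref{eq:gD-op} it may be written in the coincidence limit as $t^\beta_{\mu\nu}(x) = a^{-2}(\eta)\,\mathfrak{D}_{\mu\nu}{}^{\rho\sigma}\!\left[A_\rho A_\sigma\right]\!(x)$, with the quadratic expression normal-ordered with respect to $\omega_\beta$. Inserting this into \eqref{eq:variance-thermal} and applying Wick's theorem to the two normal-ordered quadratics, only the two crossed pairings survive (the self-contractions being subtracted by the centering), each contributing a product of two two-point functions $\omega_{\mu\nu,\beta}=\eta_{\mu\nu}\omega_\beta$. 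This is the origin of the overall factor $2$ in \eqref{eq:thermal-variance}.

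First I would carry out the contraction. The conformal weights $a^{-2}$ of the two stress tensors, together with those of the two-point functions, are extracted into the overall factor $a^{-2}(\eta)a^{-2}(\eta')$ of $\mathcal{E}$; the remaining, conformally rescaled scalar two-point function depends only on $\eta-\eta'$ and $\vec{x}-\vec{x}'$ and, exactly as in the proof of Proposition \ref{prop:thermal}, satisfies the massless wave equation, so that its $\partial^2_x$ and its $\eta^{\rho\sigma}\partial^x_\rho\partial^{x'}_\sigma$ both vanish. Acting with the two copies of $\mathfrak{D}_{\mu\nu}{}^{\rho\sigma}$ and $\mathfrak{D}_{\alpha\beta}{}^{\gamma\delta}$ on the crossed products $\eta_{\rho\gamma}\eta_{\sigma\delta}+\eta_{\rho\delta}\eta_{\sigma\gamma}$ produced by Wick's theorem, the purely algebraic index structure collapses to $\mathcal{E}_{\mu\nu\alpha\beta}{}^{\rho\sigma\gamma\delta}$—the trace and divergence pieces of $\mathfrak{D}$ dropping out by the wave equation—while the derivatives reorganize, after moving all $x'$-derivatives onto $x$, into the product $\mathscr{K}_{\rho\sigma\gamma\delta}$ of \eqref{eq:thermal-two-point-kernel}. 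Substituting the Fourier representation \eqref{eq:thermal-state-scal-conf} for each factor and convolving then yields the loop kernel $\tilde{K}_{\rho\sigma\gamma\delta}$ of \eqref{eq:kernel-thermal}, where the null four-vectors $k_\mu=(\abs{\vec{k}},\vec{k})$ and $k^*_\mu=(-\abs{\vec{k}},\vec{k})$ record the two on-shell frequencies $\pm\abs{\vec{k}}$ produced by the time derivatives.

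The stated structural properties then follow almost immediately. Tracelessness \eqref{eq:traceless} is manifest from \eqref{eq:kernel-thermal}: since $k_\mu$ and $k^*_\mu$ are null, $\eta^{\rho\sigma}(p-q)_\rho(p-q)_\sigma=\eta^{\rho\sigma}(p-q)^*_\rho(p-q)^*_\sigma=0$, and likewise for the $\gamma\delta$ pair. The symmetries \eqref{eq:symmetric} in $(\rho\sigma)$ and in $(\gamma\delta)$ are manifest, while the exchange $\rho\sigma\leftrightarrow\gamma\delta$ follows from the substitution $\vec{q}\mapsto\vec{p}-\vec{q}$ in the loop integral, which interchanges the two bracketed factors. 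The corresponding statements \eqref{eq:induced_traceless}, \eqref{eq:induced_symmetric} for $K^\beta$ then descend through $\mathcal{E}$; more directly, the trace $g^{\mu\nu}t^\beta_{\mu\nu}$ equals the c-number conformal anomaly and is therefore removed by the centering, so that $g^{\mu\nu}t^\beta_{\mu\nu}=0$ and every trace of $K^\beta$ vanishes, whereas the symmetry in each index pair is inherited from $T_{\mu\nu}=T_{\nu\mu}$ and the full exchange symmetry from that of $\mathscr{K}$ and $\mathcal{E}$.

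Finally, conservation \eqref{eq:induced_conserved} follows from $\nabla^\mu t^\beta_{\mu\nu}=0$, inherited from $\nabla^\mu T_{\mu\nu}=0$ together with the conservation of $\omega_\beta(T_{\mu\nu})$ used in Propositions \ref{prop:vacuum} and \ref{prop:thermal}; hence $\nabla^\rho_x K^\beta_{\rho\sigma\gamma\delta}(x,x')=\omega_\beta\!\left(\nabla^\rho t^\beta_{\rho\sigma}(x)\,t^\beta_{\gamma\delta}(x')\right)=0$. To reach the partial-derivative form I would invoke the conformal identity
\begin{equation*}
\nabla^\rho S_{\rho\sigma} = a^{-4}(\eta)\,\partial^\rho\!\left[a^2(\eta)\,S_{\rho\sigma}\right] \eqend{,}
\end{equation*}
valid for any \emph{traceless} symmetric $S_{\rho\sigma}$ (the trace term in $\nabla^\rho S_{\rho\sigma}$ being proportional to $\eta^{\rho\sigma}S_{\rho\sigma}$, which vanishes by \eqref{eq:induced_traceless}); the spectator factor $a^2(\eta')$ passes freely through $\partial^\rho_x$. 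I expect the genuine obstacle to be the second step: verifying by explicit index manipulation that the two bidifferential operators contracted against the two Wick pairings collapse to exactly $2\mathcal{E}$, and simultaneously that the mixed $x,x'$ derivatives reorganize—by virtue of masslessness—into the all-$x$ product $\mathscr{K}$ with no residual derivatives of the conformal factors surviving. Tracking these $a$-factors and the cancellation of the trace and divergence terms of $\mathfrak{D}$ is the most delicate part of the computation.
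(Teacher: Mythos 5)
Your plan is correct and follows essentially the same route as the paper: the quasi-free (Wick) factorization of the connected four-point function produces the factor $2$ and the product of two two-point functions, the wave equation \eqref{eq:omegabeta_eom} eliminates the trace and d'Alembertian pieces of the bidifferential operators so that the index algebra collapses to $\mathcal{E}$, the Fourier transform yields \eqref{eq:kernel-thermal}, tracelessness follows from the nullness of $k_\mu$ and $k^*_\mu$, and the exchange symmetry from $\vec{q}\mapsto\vec{p}-\vec{q}$. The only (harmless) difference is organizational: the paper contracts through the field strengths $F_{\mu\nu}$ via the operator $\mathcal{T}_{\mu\nu}{}^{\rho\sigma\zeta\tau}$ and verifies tracelessness and conservation of $K^\beta$ by direct computation with \eqref{eq:omegabeta_eom}, whereas you work with $\mathfrak{D}$ acting on $A_\rho A_\sigma$ and derive those properties from the operator identities $g^{\mu\nu}t^\beta_{\mu\nu}=0$ and $\nabla^\mu t^\beta_{\mu\nu}=0$ together with a (correct) conformal rewriting of the divergence for traceless tensors.
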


\begin{proof}
Recalling the definition of the two-point thermal correlator in Section \ref{sec:stochastic}, the proof is accomplished by evaluating
\begin{equation}
\omega_\beta\left( t^\beta_{\mu\nu}(x) t^\beta_{\rho\sigma}(x') \right) = \mathcal{T}_{\mu\nu}{}^{\rho\sigma\zeta\tau}(x) \mathcal{T}_{\alpha\beta}{}^{\gamma\delta\xi\upsilon}(x') \, \omega_\beta^\text{c}\left( F_{\zeta\rho}(x) F_{\tau\sigma}(x) F_{\xi\gamma}(x') F_{\upsilon\delta}(x') \right) \eqend{,}
\end{equation}
where
\begin{equation}
\mathcal{T}_{\mu\nu}{}^{\rho\sigma\zeta\tau}(x) = a^{-2}(x) \left( \delta_\mu^{(\rho} \delta_\nu^{\sigma)} - \frac{1}{4} \eta_{\mu\nu} \eta^{\rho\sigma} \right) \eta^{\zeta \tau}
\end{equation}
is the operator obtained from the definition of the classical stress-energy tensor \eqref{eq:stress-tensor}. Recalling \Eqs~\eqref{eq:thermal-state}, \eqref{eq:thermal-state-scal-conf} and using that
\begin{equation}
\omega_\beta\left( F_{\tau\sigma}(x) F_{\upsilon\delta}(y) \right) = 4 \delta_{[\tau}^\alpha \delta_{\sigma]}^\beta \delta_{[\upsilon}^\mu \delta_{\delta]}^\nu \partial_\alpha^x \partial_\mu^y \, \omega_\beta\left( A_\beta(x) A_\nu(x') \right) = 4 \partial_{[\tau} \eta_{\sigma][\upsilon} \partial_{\delta]} \omega_\beta(x,x')
\end{equation}
and
\begin{equation}
\label{eq:omegabeta_eom}
\partial^2_x \omega_\beta(x,x') = \partial^2_{x'} \omega_\beta(x,x') = \eta^{\rho\sigma} \partial_\rho^x \partial_\sigma^{x'} \omega_\beta(x,x') = 0 \eqend{,}
\end{equation}
we obtain after a straightforward computation
\begin{splitequation}
\omega_\beta\left( t^\beta_{\mu\nu}(x) t^\beta_{\rho\sigma}(x') \right) &= 2 \mathcal{E}_{\mu\nu\alpha\beta}{}^{\rho\sigma\gamma\delta}(x,x') \left[ \partial^x_\rho \partial^x_\sigma \omega_\beta(x,x') \partial^x_\gamma \partial^x_\delta \omega_\beta(x,x') \right] \eqend{.}
\end{splitequation}
Therefore, \Eq~\eqref{eq:thermal-variance} holds. The explicit expression of the noise kernel in Fourier space given in \Eq~\eqref{eq:kernel-thermal} is straightforwardly obtained by evaluating the spatial Fourier transform of \Eq~\eqref{eq:thermal-two-point-kernel}. The symmetry of this kernel may be easily inferred by shifting $\vec{q}$ with $\vec{p} - \vec{q}$ in \Eq~\eqref{eq:kernel-thermal}, whereas the traceless property may be seen to hold at the integrand level using that $\eta^{\mu\nu} q_\mu q_\nu = \eta^{\mu\nu} q^*_\mu q^*_\nu = 0$ (this actually corresponds to the equation of motion $\partial^2 A_\mu = 0$ for massless photons). Finally, the symmetry, tracelessness and conservation of the kernel of the induced variance is a straightforward computation employing again \Eq~\eqref{eq:omegabeta_eom}.
\end{proof}

Explicit analytical expressions of the thermal noise kernel given in \Eqs~\eqref{eq:thermal-variance}, \eqref{eq:kernel-thermal} are generally difficult to achieve. However, we may derive analytical expressions in certain limits, namely $\beta \to 0^+$ (large temperature), $\beta \to \infty$ (small temperature), and $\abs{\vec{p}} \to 0$ (small external momentum). We shall perform such evaluations in the following subsections. The small external momentum limit on the Maxwell thermal correlator entering the noise kernel has been already studied in \cite{Froeb2025} in the evaluation of the power spectra of the linearized tensorial perturbations, when the superhorizon limit $\abs{\vec{p}} \ll H$ was considered.

\subsection{Small external momentum limit}

An explicit expression of \Eq~\eqref{eq:kernel-thermal} may be provided in the small external momentum limit $\abs{\vec{p}} \to 0^+$. This is the easiest limit to compute, since we can simply expand the integrand for small $\vec{p}$ and compute the resulting integrals. Let us define
\begin{splitequation}
\label{eq:thermal-kernel-beta}
\tilde{K}_\beta(\tau) &\doteq \partial_\tau \ln \sinh\left( \frac{2 \pi \abs{\tau}}{\beta} \right) - \mathi \pi \delta(\tau) + \frac{4 \pi \mathi}{\beta^2} \tau \coth\left( \frac{2 \pi \tau}{\beta} \right) \\
&= \frac{2 \pi}{\beta^2} \lim_{\epsilon \to 0^+} \left[ \bigl[ \beta + 2 \mathi (\tau + \mathi \epsilon) \bigr] \coth\left( \frac{2 \pi (\tau + \mathi \epsilon)}{\beta} \right) \right] \eqend{,}
\end{splitequation}
viewed in the sense of distributions. The explicit computation is performed in App.~\ref{app:smallp}, and the result reads
\begin{equations}[eq:kernel-thermal-smallp]
\tilde{K}^\beta_{0000}(\eta-\eta',\vec{p}) &= \frac{\mathi}{256 \pi^2} \partial_\eta^4 \tilde{K}_\beta(\eta-\eta') + \frac{\pi^2}{15 \beta^5} + \bigo{\vec{p}^2} \eqend{,} \\
\tilde{K}^\beta_{000i}(\eta-\eta',\vec{p}) &= - \frac{\vec{p}_i}{384 \pi^2} \left[ \partial_\eta^3 \tilde{K}_\beta(\eta-\eta') - \frac{64 \mathi \pi^4}{15 \beta^5} \bigl[ 2 (\eta-\eta') - \mathi \beta \bigr] \right] + \bigo{\abs{\vec{p}}^3} \eqend{,} \\
\tilde{K}^\beta_{0i0j}(\eta-\eta',\vec{p}) &= - \frac{1}{3} \eta_{ij} \left[ \tilde{K}_{0000}(\eta-\eta',\vec{p}) - \frac{2 \pi^2}{15 \beta^5} \right] + \bigo{\vec{p}^2} \eqend{,} \\
\tilde{K}^\beta_{00ij}(\eta-\eta',\vec{p}) &= \frac{1}{3} \eta_{ij} \tilde{K}_{0000}(\eta-\eta',\vec{p}) + \bigo{\vec{p}^2} \eqend{,} \\
\begin{split}
\tilde{K}^\beta_{0ikl}(\eta-\eta',\vec{p}) &= \frac{3}{5} \eta_{kl} \tilde{K}_{000i}(\eta-\eta',\vec{p}) - \frac{4}{5} \eta_{i(k} \tilde{K}_{l)000}(\eta-\eta',\vec{p}) \\
&\quad+ \frac{\mathi \pi^2 \left[ 3 \eta_{i(k} \vec{p}_{l)} - \eta_{kl} \vec{p}_i \right]}{225 \beta^5} \bigl[ 2 (\eta-\eta') - \mathi \beta \bigr] + \bigo{\abs{\vec{p}}^3}
\end{split} \\
\tilde{K}^\beta_{ijkl}(\eta-\eta',\vec{p}) &= \frac{\eta_{ij} \eta_{kl} + 2 \eta_{i(k} \eta_{l)j}}{15} \tilde{K}_{0000}(\eta-\eta',\vec{p}) + \bigo{\vec{p}^2} \eqend{.}
\end{equations}

We may also obtain an explicit expression for the noise kernel in coordinate space by using the Fourier transforms
\begin{equation}
\int \mathe^{\mathi \vec{p} \vec{x}} \frac{\total^3 \vec{p}}{(2 \pi)^3} = \delta(\vec{x}) \eqend{,} \quad \int \vec{p}_j \mathe^{\mathi \vec{p} \vec{x}} \frac{\total^3 \vec{p}}{(2 \pi)^3} = - \mathi \partial_j \delta(\vec{x}) \eqend{,}
\end{equation}
which hold in the sense of distributions. It is clear that the limit of small external momentum corresponds to a local approximation of the noise kernel in coordinate space, which is the one that was also employed in \cite{Froeb2025}.

\subsection{Large temperature limit}

The large temperature limit $\beta \to 0^+$ is more difficult, since the integrals in \Eq~\eqref{eq:kernel-thermal} do not converge uniformly in this limit. One has to first rescale the integration variable $\vec{q} \to \vec{q}/\beta$ before performing the expansion for small $\beta$. Let us define the function
\begin{equation}
\tilde{S}(\tau,\vec{p}) \doteq \frac{\sin(\abs{\vec{p}} \tau)}{\abs{\vec{p}} \tau} \eqend{,}
\end{equation}
whose derivatives are given by
\begin{equations}
\partial_\tau \tilde{S}(\tau,\vec{p}) &= \frac{\abs{\vec{p}} \tau \cos\left( \abs{\vec{p}} \tau \right) - \sin\left( \abs{\vec{p}} \tau \right)}{\abs{\vec{p}} \tau^2} \eqend{,} \\
\partial_\tau^2 \tilde{S}(\tau,\vec{p}) &= \frac{- 2 \abs{\vec{p}} \tau \cos\left( \abs{\vec{p}} \tau \right) + \left( 2 - \vec{p}^2 \tau^2 \right) \sin\left( \abs{\vec{p}} \tau \right)}{\abs{\vec{p}} \tau^3} \eqend{,} \\
\partial_\tau^3 \tilde{S}(\tau,\vec{p}) &= \frac{\abs{\vec{p}} \tau \left( 6 - \vec{p}^2 \tau^2 \right) \cos\left( \abs{\vec{p}} \tau \right) - 3 \left( 2 - \vec{p}^2 \tau^2 \right) \sin\left( \abs{\vec{p}} \tau \right)}{\abs{\vec{p}} \tau^4} \eqend{.}
\end{equations}
The explicit computation is performed in App.~\ref{app:larget}, and the result reads
\begin{equations}[eq:kernel-thermal-larget]
\tilde{K}^\beta_{0000}(\eta-\eta',\vec{p}) &= \frac{\pi^2}{15 \beta^5} \tilde{S}(\eta-\eta',\vec{p}) - \frac{\mathi \pi^2}{30 \beta^4} \partial_\eta \tilde{S}(\eta-\eta',\vec{p}) + \bigo{\beta^{-3}} \eqend{,} \\
\tilde{K}^\beta_{000i}(\eta-\eta',\vec{p}) &= - \mathi \frac{\vec{p}_i}{\vec{p}^2} \partial_\eta \tilde{K}_{0000}(\eta-\eta',\vec{p}) + \bigo{\beta^{-3}} \eqend{,} \\
\begin{split}
\tilde{K}^\beta_{0i0j}(\eta-\eta',\vec{p}) &= \frac{\vec{p}_i \vec{p}_j}{\vec{p}^2} \tilde{K}_{0000}(\eta-\eta',\vec{p}) \\
&\quad+ \frac{\eta_{ij} \vec{p}^2 - 3 \vec{p}_i \vec{p}_j}{2 \abs{\vec{p}}^4} \left( \partial_\eta^2 + \vec{p}^2 \right) \tilde{K}_{0000}(\eta-\eta',\vec{p}) + \bigo{\beta^{-3}} \eqend{,}
\end{split} \\
\tilde{K}^\beta_{00ij}(\eta-\eta',\vec{p}) &= \tilde{K}_{0i0j}(\eta-\eta',\vec{p}) + \bigo{\beta^{-3}} \eqend{,} \\
\begin{split}
\tilde{K}^\beta_{0ijk}(\eta-\eta',\vec{p}) &= - \frac{\mathi \vec{p}_i \vec{p}_j \vec{p}_k}{\abs{\vec{p}}^4} \partial_\eta \tilde{K}_{0000}(\eta-\eta',\vec{p}) \\
&\quad- \frac{\mathi \left( 3 \eta_{(ij} \vec{p}_{k)} \vec{p}^2 - 5 \vec{p}_i \vec{p}_j \vec{p}_k \right)}{2 \abs{\vec{p}}^6} \partial_\eta \left( \partial_\eta^2 + \vec{p}^2 \right) \tilde{K}_{0000}(\eta-\eta',\vec{p}) + \bigo{\beta^{-3}} \eqend{,} \raisetag{4.6em}
\end{split} \\
\begin{split}
\tilde{K}^\beta_{ijkl}(\eta-\eta',\vec{p}) &= \frac{\vec{p}_i \vec{p}_j \vec{p}_k \vec{p}_l}{\abs{\vec{p}}^4} \tilde{K}_{0000}(\eta-\eta',\vec{p}) \\
&\quad+ \left[ \frac{3 \eta_{(ij} \vec{p}_k \vec{p}_{l)}}{\abs{\vec{p}}^4} - \frac{5 \vec{p}_i \vec{p}_j \vec{p}_k \vec{p}_l}{\abs{\vec{p}}^6} \right] \left( \partial_\eta^2 + \vec{p}^2 \right) \tilde{K}_{0000}(\eta-\eta',\vec{p}) \\
&\quad+ \left[ \frac{3 \eta_{(ij} \eta_{kl)}}{8 \abs{\vec{p}}^4} - \frac{15 \eta_{(ij} \vec{p}_k \vec{p}_{l)}}{4 \abs{\vec{p}}^6} + \frac{35 \vec{p}_i \vec{p}_j \vec{p}_k \vec{p}_l}{8 \abs{\vec{p}}^8} \right] \left( \partial_\eta^2 + \vec{p}^2 \right)^2 \tilde{K}_{0000}(\eta-\eta',\vec{p}) \\
&\quad+ \bigo{\beta^{-3}} \eqend{.} \raisetag{8em}
\end{split}
\end{equations}

\subsection{Small temperature limit}

The limit of small temperature $\beta \to \infty$ is again relatively easy to compute, and we only have to separate the zero-temperature contributions from the temperature-dependent ones. The explicit computation is performed in App.~\ref{app:smallt}, and the result reads
\begin{equations}[eq:kernel-thermal-smallt]
\begin{split}
\tilde{K}^\beta_{0000}(\tau,\vec{p}) &= \frac{\mathi \mathe^{\mathi \abs{\vec{p}} \tau} \left( 45 - 45 \mathi \abs{\vec{p}} \tau - 20 \vec{p}^2 \tau^2 + 5 \mathi \abs{\vec{p}}^3 \tau^3 + \abs{\vec{p}}^4 \tau^4 \right)}{480 \pi^2} \lim_{\epsilon \to 0^+} \frac{1}{\left( \tau + \mathi \epsilon \right)^5} \\
&\quad+ \mathe^{\mathi \abs{\vec{p}} \tau} \abs{\vec{p}} \frac{\pi^2}{30 \beta^4} + \bigo{\beta^{-6}} \eqend{,}
\end{split} \\
\begin{split}
\tilde{K}^\beta_{000i}(\tau,\vec{p}) &= \vec{p}_i \frac{\mathe^{\mathi \abs{\vec{p}} \tau} \left( 15 - 15 \mathi \abs{\vec{p}} \tau - 7 \vec{p}^2 \tau^2 + 2 \mathi \abs{\vec{p}}^3 \tau^3 \right)}{960 \pi^2} \lim_{\epsilon \to 0^+} \frac{1}{\left( \tau + \mathi \epsilon \right)^4} \\
&\quad+ \vec{p}_i \, \mathe^{\mathi \abs{\vec{p}} \tau}\frac{\pi^2}{60 \beta^4} + \bigo{\beta^{-6}} \eqend{,}
\end{split} \\
\begin{split}
\tilde{K}^\beta_{00ij}(\tau,\vec{p}) &= \eta_{ij} \frac{\mathi \mathe^{\mathi \abs{\vec{p}} \tau} \left( 30 - 30 \mathi \abs{\vec{p}} \tau - 13 \vec{p}^2 \tau^2 + 3 \mathi \abs{\vec{p}}^3 \tau^3 \right)}{960 \pi^2} \lim_{\epsilon \to 0^+} \frac{1}{\left( \tau + \mathi \epsilon \right)^5} \\
&\quad- \vec{p}_i \vec{p}_j \frac{\mathi \mathe^{\mathi \abs{\vec{p}} \tau} \left( 1 - \mathi \abs{\vec{p}} \tau - 2 \vec{p}^2 \tau^2 \right)}{960 \pi^2} \lim_{\epsilon \to 0^+} \frac{1}{\left( \tau + \mathi \epsilon \right)^3} \\
&\quad+ \left( \eta_{ij} + 3 \frac{\vec{p}_i \vec{p}_j}{\vec{p}^2} \right) \mathe^{\mathi \abs{\vec{p}} \tau} \abs{\vec{p}} \frac{\pi^2}{180 \beta^4} + \bigo{\beta^{-6}} \eqend{,}
\end{split} \\
\begin{split}
\tilde{K}^\beta_{0i0j}(\tau,\vec{p}) &= - \eta_{ij} \frac{\mathi \mathe^{\mathi \abs{\vec{p}} \tau} \left( 15 - 15 \mathi \abs{\vec{p}} \tau - 6 \vec{p}^2 \tau^2 + \mathi \abs{\vec{p}}^3 \tau^3 \right)}{480 \pi^2} \lim_{\epsilon \to 0^+} \frac{1}{\left( \tau + \mathi \epsilon \right)^5} \\
&\quad- \vec{p}_i \vec{p}_j \frac{\mathi \mathe^{\mathi \abs{\vec{p}} \tau} \left( 3 - 3 \mathi \abs{\vec{p}} \tau - \vec{p}^2 \tau^2 \right)}{480 \pi^2} \lim_{\epsilon \to 0^+} \frac{1}{\left( \tau + \mathi \epsilon \right)^3} + \bigo{\beta^{-6}} \eqend{,}
\end{split} \\
\begin{split}
\tilde{K}^\beta_{0ikl}(\tau,\vec{p}) &= \left( 3 \vec{p}_i \eta_{kl} - 4 \vec{p}_{(k} \eta_{l)i} \right) \frac{\mathe^{\mathi \abs{\vec{p}} \tau} \left( 3 - 3 \mathi \abs{\vec{p}} \tau - \vec{p}^2 \tau^2 \right)}{960 \pi^2} \lim_{\epsilon \to 0^+} \frac{1}{\left( \tau + \mathi \epsilon \right)^4} \\
&\quad- \vec{p}_i \vec{p}_k \vec{p}_l \frac{\mathe^{\mathi \abs{\vec{p}} \tau} \left( 1 - \mathi \abs{\vec{p}} \tau \right)}{480 \pi^2} \lim_{\epsilon \to 0^+} \frac{1}{\left( \tau + \mathi \epsilon \right)^2} + \vec{p}_i \eta_{kl} \, \mathe^{\mathi \abs{\vec{p}} \tau} \frac{\pi^2}{180 \beta^4} + \bigo{\beta^{-6}} \eqend{,} \raisetag{5.2em}
\end{split} \\
\begin{split}
\tilde{K}^\beta_{ijkl}(\tau,\vec{p}) &= \vec{p}_i \vec{p}_j \vec{p}_k \vec{p}_l \frac{\mathi \mathe^{\mathi \abs{\vec{p}} \tau}}{480 \pi^2} \lim_{\epsilon \to 0^+} \frac{1}{\tau + \mathi \epsilon} \\
&\quad+ \left( 8 \vec{p}_{(i} \eta_{j)(k} \vec{p}_{l)} - 3 \vec{p}_i \vec{p}_j \eta_{kl} - 3 \vec{p}_k \vec{p}_l \eta_{ij} \right) \frac{\mathi \mathe^{\mathi \abs{\vec{p}} \tau} \left( 1 - \mathi \abs{\vec{p}} \tau \right)}{960 \pi^2} \lim_{\epsilon \to 0^+} \frac{1}{\left( \tau + \mathi \epsilon \right)^3} \\
&\quad+ \left( \eta_{ij} \eta_{kl} + 2 \eta_{i(k} \eta_{l)j} \right) \frac{\mathi \mathe^{\mathi \abs{\vec{p}} \tau} \left( 3 - 3 \mathi \abs{\vec{p}} \tau - \vec{p}^2 \tau^2 \right)}{480 \pi^2} \lim_{\epsilon \to 0^+} \frac{1}{\left( \tau + \mathi \epsilon \right)^5} \\
&\quad+ \left( \vec{p}_i \vec{p}_j \eta_{kl} + \vec{p}_k \vec{p}_l \eta_{ij} \right) \mathe^{\mathi \abs{\vec{p}} \tau} \frac{\pi^2}{180 \abs{\vec{p}} \beta^4} + \bigo{\beta^{-6}} \eqend{.} \raisetag{2em}
\end{split}
\end{equations}

It is worth mentioning that the purely vacuum noise kernel, namely the zero-temperature limit $\beta \to \infty$ of \Eq~\eqref{eq:kernel-thermal-smallt}, is analytic in $\vec{p}$. Therefore, it must agree with the zero-temperature limit $\beta \to \infty$ of the small-momentum expansion~\eqref{eq:kernel-thermal-smallp}. Taking the limit $\beta \to \infty$ of the distribution $\tilde{K}_\beta(\tau)$~\eqref{eq:thermal-kernel-beta}, we obtain
\begin{equation}
\lim_{\beta \to \infty} \tilde{K}_\beta(\tau) \doteq \tilde{K}_\infty(\tau) = \partial_\tau \ln\abs{\tau} - \mathi \pi \delta(\tau) = \lim_{\epsilon \to 0^+} \frac{1}{\tau + \mathi \epsilon} \eqend{,}
\end{equation}
and it is an easy exercise to verify that with this kernel the small-momentum expansion~\eqref{eq:kernel-thermal-smallp} agrees with the zero-temperature limit $\beta \to \infty$ of~\eqref{eq:kernel-thermal-smallt} up to order $\vec{p}^2$. Moreover, the difference between the zero-temperature kernel $\tilde{K}_\infty(\tau)$ and the temperature-dependent one $\tilde{K}_\beta(\tau)$ is a smooth function, namely
\begin{equation}
\Delta K(\tau) \doteq \tilde{K}_\beta(\tau) - \tilde{K}_\infty(\tau) = \frac{2 \pi (\beta + 2 \mathi \tau)}{\beta^2} \left[ \coth\left( \frac{2 \pi \tau}{\beta} \right) - \frac{\beta}{2 \pi \tau} \right] + \frac{2 \mathi}{\beta} \in C^\infty(I_\tau) \eqend{.}
\end{equation}
The smoothness of $\Delta K(\tau)$ is in accordance with the analysis given for the quantum stress-energy tensor in Section \ref{sec:stress-tensor} (see \Eqs~\eqref{eq:omega-diff}, \eqref{eq:omega-diff-stress}) as consequence of the property of Hadamard states, which always differ by a smooth function.

\section{Thermal variance and stochastic kernel}

\subsection{The local measurement framework of Fewster and Verch}
\label{sec:measurement}

In this section we aim to analyze the induced thermal fluctuations obtained in the previous section in the algebraic framework of the local and covariant quantum measurement framework proposed by Fewster and Verch in curved spacetimes \cite{Fewster2018,Fewster2025}. Preliminarily, we shall review some main results of that paper which will be essential for our future analysis.

In this framework, one assumes that there exist two linear quantum fields $\Phi$, $\Psi$ generating two abstract $*$-algebras $\mathfrak{A}(\mathcal{M})$, $\mathcal{B}(\mathcal{M})$ (of arbitrary nature), describing respectively the target system and the probe. Measurements are performed on the target system through the probe by means of an interaction, such as a coupling between $\Phi$ and $\Psi$, which is localized in a bounded region of the spacetime $K \subset \mathcal{M}$. In the perturbative algebraic framework the target-probe system is mathematically represented by a full action $S \doteq S_0 + S_I$, where $S_0$ denotes the free action of the quantum fields, and 
\begin{equation}
\label{eq:action-int}
S_\text{int} \doteq - \lambda \int_{\mathcal{M}} \mathfrak{L}_\text{int}(x) \varrho(x) \total\mu_x, \qquad \lambda \in \mathbb{R} \eqend{,}
\end{equation}
is the interaction constructed out of the interaction Lagrangian $\mathfrak{L}_I$. The compactly supported function $\varrho(x) \in C^{\infty}_0(\mathcal{M})$ is a cutoff function supported on the spacetime region $K$ where the interaction takes place, which might be eventually removed in the adiabatic limit $\varrho \to 1$.

As the measurement is bounded in a compact spacetime region, one may identify in $(-)$ and out $(+)$ regions by $\mathcal{M}^{\pm} \doteq \mathcal{M} \, \backslash \, J^\mp(K)$, in which the target and probe fields are uncoupled, and thus described by the free action $S_0$. In particular, one assumes that in the out region a measurement of the probe $\Psi$ is performed in order to obtain information on the target $\Phi$. In those regions where the target and the probe fields are uncoupled, the full quantum system is modeled by the tensor product algebra $\mathfrak{A}(\mathcal{M}) \otimes \mathfrak{B}(\mathcal{M})$, whose generators are of the form
\begin{equation}
\Xi \doteq \Phi \otimes \1_{\mathfrak{A}(\mathcal{M})} + \1_{\mathfrak{B}(\mathcal{M})} \otimes \Psi \eqend{.}
\end{equation}
If $\Phi$ and $\Psi$ satisfy some Green-hyperbolic equations of motion $P \Phi = 0$, $Q \Psi = 0$, then within $\mathcal{M}^\pm$ the full quantum system satisfies the equation of motion $T \Xi = 0$, where $T \doteq P \oplus Q$ is also Green-hyperbolic. Thus, it admits unique advanced and retarded fundamental solutions just as $P$ and $Q$ do.

On the contrary, in the region $K$ of the experiment, the quantum theory is described by an interacting $*$-algebra $\mathfrak{C}(\mathcal{M})$ in which the target and the probe are coupled, thus described by the full action $S$. As a consequence of the compactness of $K$, $\mathfrak{C} (\mathcal{M})$ can be identified with $\mathfrak{A}(\mathcal{M}) \otimes \mathfrak{B}(\mathcal{M})$ outside the causal hull $J^+(K) \cap J^-(K)$ of $K$. To relate the uncoupled algebra with the interacting algebra, i.e., to identify the uncoupled target-probe system with the coupled one at early $(-)$ or late $(+)$ times, one defines respectively the retarded $(-)$ and advanced $(+)$ response maps 
\begin{equation}
\label{eq:response}
\tau^\mp \colon \mathfrak{A}(\mathcal{M}) \otimes \mathfrak{B}(\mathcal{M}) \to \mathcal{C}(\mathcal{M}) \eqend{.}
\end{equation}
In other words, $\tau^\pm$ maps the free quantum field $\Xi$, which is solution of $T \Xi = 0$, into the interacting solution which is free either in the future or the past of the coupling region, respectively. The effect of the interaction can thus be described by the scattering morphism, a map that relates the free algebras in the future and the past of the coupling region. It is given by
\begin{equation}
\label{eq:scattering}
\Theta \colon \mathfrak{A}(\mathcal{M}) \otimes \mathfrak{B}(\mathcal{M}) \to \mathfrak{A}(\mathcal{M}) \otimes \mathfrak{B}(\mathcal{M}) \eqend{,} \quad \Theta \doteq (\tau^-)^{-1} \circ \tau^+ \eqend{,}
\end{equation}
which is indeed an automorphism of the (uncoupled) tensor product algebra constructed out of the future and past response maps $\tau^\pm$.

Based on this, the measurement scheme provides a modeling of the outcome of the actual measurement in terms of the coupled target-probe system $(\Phi,\Psi)$. We assume that both the system $\Phi$ and the probe $\Psi$ are prepared in some initial quantum state at early times, say $\omega$ and $\sigma$, i.e., when the field and the probe are uncoupled. On $\mathfrak{C}(\mathcal{M})$ we define the combined state
\begin{equation}
\label{eq:state-combined}
\tilde{\omega}_\sigma \colon \mathfrak{C}(\mathcal{M}) \to \mathbb{C} \eqend{,} \quad \tilde{\omega}_\sigma \doteq (\omega \otimes \sigma ) \circ (\tau^-)^{-1} \eqend{,}
\end{equation}
which is uncorrelated at early times, and thus describes the correct state of the coupled system prepared in this way. Consider an observable $B \in \mathfrak{B}(\mathcal{M})$ of the probe system, which is measured at late times where it is equal to the element $\1 \otimes B$ in the tensor product algebra. In the coupled system, the corresponding observable is given by $\tilde{B} \doteq \tau^+(\1 \otimes B) \in \mathfrak{C}(\mathcal{M})$. Its expectation value in the state $\tilde{\omega}_\sigma$, that is~\cite[\Eq~(3.8)]{Fewster2018}
\begin{equation}
\label{eq:actual-measure}
\tilde{\omega}_\sigma( \tilde{B} ) = ( \omega \otimes \sigma ) ( \Theta ( \1 \otimes B ) ) \eqend{,}
\end{equation}
describes the expectation of the actual measurement at late times of the target field $\Phi$ through the probe field $\Psi$ under the chosen interaction, which is encoded in the scattering automorphism $\Theta$. As shown in \cite[Sec.~3.2]{Fewster2018}, there exists a unique observable $A \in \mathfrak{A}(\mathcal{M})$ such that $\omega(A)$ coincides with the expectation of the actual measurement given by \Eq~\eqref{eq:actual-measure}. This observable of the target system algebra is called induced system observable and represents the experimental outcome of that measurement scheme. It was further proved \cite[\Eq~(3.14)]{Fewster2018} that the variance of $\tilde{B}$ is always at least as great as the one of the induced observable, namely
\begin{equation}
\operatorname{var}( \tilde{B}; \tilde{\omega}_\sigma ) \doteq \tilde{\omega}_\sigma( \tilde{B}^2 ) - [ \tilde{\omega}_\sigma( \tilde{B} ) ]^2 \geq \operatorname{var}( A; \omega ) = \omega(A^2 - [ \omega(A) ]^2 \eqend{.}
\end{equation} 
This means that the actual measurement is less sharp than the ideal measurement of the induced observable in the target state, due to the quantum fluctuations in the probe which influence the measurement.

\subsection{Induced perturbations and thermal fluctuations}

We shall adopt the algebraic approach to local and covariant measurements for quantum field theories to evaluate the stochastic variance associated to the induced thermal fluctuations studied in Section \ref{sec:thermal-noise}. Our main aim is the proof that the thermal stochastic variance given in \Eq~\eqref{eq:variance-thermal}, which is constructed out of the thermal noise kernel $\tilde{K}_{\mu\nu\alpha\beta}$, corresponds exactly to the variance of the induced fluctuations of the metric obtained in the measurement scheme of Fewster and Verch through the actual measurement \eqref{eq:actual-measure}.

More precisely, we shall prove that this expectation value, and thus its quantum variance, may be evaluated by taking the pair $(\xi^\beta_{\mu\nu},h_{\mu\nu})$ to comprise the system of target field and probe. The thermal stochastic noise $\xi^\beta_{\mu\nu}$ associated to the thermal Maxwell state $\omega_\beta$ given in \Eqs~\eqref{eq:thermal-state}, \eqref{eq:thermal-state-scal-conf} plays the role of a classical (stochastic) target field, while the linearized gravitational perturbation $h_{\mu\nu}$ identifies the probe field. Their coupling~\eqref{eq:action-int} is described by the linear pointwise interaction 
\begin{equation}
\label{eq:action-int-h}
S_\text{int} = - \frac{\kappa}{2} \int \xi^\beta_{\rho\sigma}(x) h^{\rho\sigma}(x) \varrho(x) \total\mu_x \eqend{,} \quad \varrho(x) \in C^\infty_0(\mathcal{M}) \eqend{,}
\end{equation}
which together with the Einstein--Hilbert action for the metric perturbation $h_{\mu\nu}$ comprises the full action in the measurement framework since $\xi^\beta_{\mu\nu}$ is a classical stochastic field.

As already discussed in Section \ref{sec:stochastic}, we are interested in the free intrinsic and induced thermal linearized perturbations $h_{\mu\nu}(x) = h_{\mu\nu}^{0}(x) + h_{\mu\nu}^\text{ind}(x)$, which solve the linearized Einstein--Langevin equation
\begin{equation}
\label{eq:Ein-Lang-therm}
G^{(1)}_{\mu\nu}(x) + \Lambda h_{\mu\nu}(x) = \frac{1}{2} \kappa \xi^\beta_{\mu\nu}(x) \eqend{,}
\end{equation}
where we have again employed a reduction of order argument to remove the tensors $A^{(1)}_{\mu\nu}$ and $B^{(1)}_{\mu\nu}$. To apply the measurement framework of Fewster and Verch, we need to promote the target and the probe fields to quantum fields, thus going beyond the semiclassical and stochastic approximation. In this respect, one views $\kappa$ in \Eq~\eqref{eq:metric-split} as perturbative coupling parameter of the quantum theory, in which one treats the linearized perturbation as a quantum field propagating over the classical background geometry $g^{(0)}_{\mu\nu}(x)$. This new level of approximation may be mathematically studied in the framework of perturbative algebraic quantum field theory, treating local and covariant quantum fields in an effective quantum gravity formulation \cite{Brunetti2013} (see also \cite{Rejzner2016,Froeb_prop2017,Froeb_grav2018,Froeb2023} and references therein). 

On one hand, the Gaussian noise $\xi^\beta_{\mu\nu}$ may be treated as the generator of a $*$-algebra $\mathfrak{A}_\xi(\mathcal{M})$ such that the quasi-free state assigned to $\xi^\beta_{\mu\nu}$ should satisfy \Eq~\eqref{eq:Gaussian-noise}. Namely, one may impose that its one-point function vanishes, and fix its two-point function to the stochastic variance of $\xi^\beta_{\mu\nu}(x)$. For example, in the thermal case studied here the quasi-free state $\omega_\beta \colon \mathfrak{A}_\xi(\mathcal{M}) \to \mathbb{C}$ is uniquely identified by the following relations
\begin{equations}[eq:thermal-state-xi]
\omega_\beta\left( \xi^\beta_{\mu\nu}(x) \right) &= \Estoch \xi^\beta_{\mu\nu}(x) = 0 \eqend{,} \\
\omega_\beta\left( \xi^\beta_{\mu\nu}(x) \xi^\beta_{\rho\sigma}(x') \right) &= \Estoch \xi^\beta_{\mu\nu}(x) \xi^\beta_{\rho\sigma}(x') = \mathcal{N}_{\mu\nu\rho\sigma}(x,x') \eqend{,}
\end{equations}
where $\mathcal{N}_{\mu\nu\rho\sigma}(x,x')$ is given by \Eq~\eqref{eq:noise-kernel-ind} with the kernel $K_{\mu\nu\rho\sigma}$ given by the thermal noise kernel $K^\beta_{\mu\nu\rho\sigma}$ studied in Section \ref{sec:thermal-noise}. In this case, $\mathfrak{A}_\xi(\mathcal{M})$ would be a commutative algebra corresponding to the fact that $\xi^\beta_{\mu\nu}$ is a classical stochastic field which commutes. On the other hand, we may introduce a non-vanishing commutator by replacing $\mathcal{N}_{\mu\nu\rho\sigma}$ with the full unsymmetrized thermal noise kernel $K^\beta_{\mu\nu\rho\sigma}$, which includes an imaginary part. Both approaches ultimately yield the same variance for the observable that we consider, and so we consider the first one.

The quantization of the linearized gravitational field $h_{\mu\nu}$ and the issue of Hadamard states have been already studied in \cite{Ashtekar1982,Fewster2012,Benini2014,Gerard2023}; for applications in inflationary spacetimes, see for example \cite{Froeb2017,Froeb2018,Froeb_prop2017,Froeb_grav2018,Froeb2023}. In the linearized regime, the classical theory in the de Donder gauge $\nabla^\rho h_{\rho\sigma} = 0$ results in the following hyperbolic partial differential equation for $h_{\mu\nu}$:
\begin{equation}
\label{eq:linear-eq-pert}
P_{\mu\nu}{}^{\rho\sigma} h_{\rho\sigma}(x) \doteq \nabla^2_x h_{\mu\nu}(x) - 2 R^\rho{}_{\mu\nu}{}^\sigma h_{\rho\sigma}(x) = \xi^\beta_{\mu\nu}(x) \eqend{,}
\end{equation}
where $\nabla^2_x \doteq g_{(0)}^{\rho\sigma} \nabla_\rho \nabla_\sigma$ is the d'Alembertian of the background spacetime, and $P_{\mu\nu}{}^{\rho\sigma}$ is the normally hyperbolic partial differential operator obtained from the left-hand side of \Eq~\eqref{eq:Ein-Lang-therm}. Notice that the divergence-free and compactness properties of the stochastic noise, which is defined in a compact region, guarantee the solvability of \Eq~\eqref{eq:linear-eq-pert} \cite{Hintz2024}. Thus, the spatially compact retarded and advanced solutions of \Eq~\eqref{eq:linear-eq-pert} are given by 
\begin{equation}
\label{eq:sol-Ein-Lang}
h_{\mu\nu}^{\text{ret}/\text{adv}}(x) = h_{\mu\nu}^0(x) + \frac{1}{2} \kappa \int G^\text{ret/adv}{}_{\mu\nu}{}^{\rho\sigma}(x,x') \xi_{\rho\sigma}^\beta(x') \total\mu_{x'} \eqend{,}
\end{equation}
where $G^\text{ret/adv}{}_{\mu\nu}{}^{\rho\sigma}$ denotes the unique retarded or advanced propagator of $P_{\mu\nu}{}^{\rho\sigma}$. As discussed before, the inhomogeneous part of \Eq~\eqref{eq:sol-Ein-Lang} describes the induced part of the linearized perturbation sourced by the quantum fluctuations.

Based on this and following \cite{Fewster2012}, the quantization of the solutions $h_{\mu\nu}$ may be performed in a similar way as made for the Maxwell field in Section \ref{sec:photon} in the framework of algebraic quantum field theory on curved spacetimes. In this case, the linearized perturbation is a smooth section of the symmetric vector bundle $\mathcal{V} \doteq \operatorname{Sym}(T^*\mathcal{M} \otimes T^*\mathcal{M})$, in particular $h \in \Gamma(\mathcal{V})$ is an element of the configuration space of classical fields which satisfies \Eq~\eqref{eq:linear-eq-pert} with $\xi^\beta_{\mu\nu} = 0$. Thus, the unital $*$-algebra of observables $\mathfrak{A}_h(\mathcal{M})$ on the background spacetime is generated by the smeared fields
\begin{equation}
\label{eq:B-obs}
h(f) \doteq \int_\mathcal{M} h_{\mu\nu} (x) f^{\mu\nu}(x) \total\mu_x \eqend{,}
\end{equation}
for all test tensors $f \in \Gamma_0(\mathcal{V})$. Similarly to the Maxwell field, to take into account the gauge freedom $h_{\mu\nu} \mapsto h_{\mu\nu} + \nabla_{(\mu} \zeta_{\nu)}$ with $\zeta \in C^{\infty}(T(\mathcal{M},\mathbb{R}))$, one defines equivalence classes of fields acting on symmetric, conserved test tensors satisfying $\nabla^{\rho} f_{\rho\sigma} = 0$. Hence, for all $f,g \in \Gamma_0(\mathcal{V})$ of this class the quotient $*$-algebra $\mathfrak{A}_h(\mathcal{M})$ fulfills the following properties:
\begin{enumerate}
\item Linearity: $h(c_0 f + c_1 g) = c_0 h(f) + c_1 h(g)$, $c_0, c_1 \in \mathbb{C}$.
\item Hermiticity: $h^*(f) = h(\Bar{f})$.
\item Symmetry: $h(f) = 0$ for all antisymmetric $f \in \Gamma_0(\mathcal{V})$.
\item On-shell fields: $h(P f) = 0$, where $P$ is defined in \Eq~\eqref{eq:linear-eq-pert}.
\item Canonical commutations relations: $[h(f),h(f')] = \mathi G(f,f') \1$,
\end{enumerate}
where 
\begin{equation}
\label{eq:caus-prop-h}
G(f,f') = \int_{\mathcal{M}\times\mathcal{M}} G_{\mu\nu}{}^{\rho\sigma}(x,x') f^{\mu\nu}(x) f'_{\rho\sigma}(x') \total\mu_x \total\mu_{x'} \eqend{,} \quad G_{\mu\nu}{}^{\rho\sigma} \doteq G^\text{adv}{}_{\mu\nu}{}^{\rho\sigma} - G^\text{ret}{}_{\mu\nu}{}^{\rho\sigma} 
\end{equation}
is the unique causal (advanced minus retarded) Green operator associated to \Eq~\eqref{eq:linear-eq-pert}. To complete the quantization of $h_{\mu\nu}$, we point out that a quantum state, i.e., a linear, positive and normalized functional $\sigma \colon \mathfrak{A}_h(\mathcal{M}) \to \mathbb{C}$ which is quasi-free may be chosen for the linearized perturbation, also in the physical class of Hadamard states.

\medskip

Based on this, the measurement process in our framework proceeds as follows: Let $\mathfrak{A}_\xi(\mathcal{M})$ and $\mathfrak{A}_h(\mathcal{M})$ be respectively the algebras of observables of the target field $\xi(f)$ and the probe field $h(f)$, such that at early times the uncoupled target-probe quantum system is modeled by the tensor algebra $\mathfrak{A}_\xi(\mathcal{M}) \otimes \mathfrak{A}_h (\mathcal{M})$. In the early uncoupled region, a probe observable is of the form $\Xi_0(f) \doteq \1 \otimes h^0(f) \in \mathfrak{A}_\xi(\mathcal{M}) \otimes \mathfrak{A}_h (\mathcal{M})$, where
\begin{equation}
\label{eq:intrinsic-h}
h^0(f) \doteq \int h^0_{\mu\nu}(x) f^{\mu\nu}(x) \total\mu_x \in \Gamma(\mathcal{V})
\end{equation}
is associated to the intrinsic perturbation $h^0_{\mu\nu}(x)$, i.e., a homogeneous solution of \Eq~\eqref{eq:linear-eq-pert}.

Assume that the coupling described in \Eq~\eqref{eq:action-int-h} is localized in a compact region $K$ where the thermal noise $\xi^\beta_{\mu\nu}$ is defined and acts as source in \Eq~\eqref{eq:linear-eq-pert}. Denoting with $\mathfrak{C}_\xi(\mathcal{M})$ the interacting algebra of the coupled target-probe system, we define the retarded $(-)$ and advanced $(+)$ response maps
\begin{equation}
\label{eq:response-xi}
\tau_\xi^\mp \colon \mathfrak{A}_\xi(\mathcal{M}) \otimes \mathfrak{A}_h(\mathcal{M}) \to \mathfrak{C}_\xi(\mathcal{M}) \eqend{,}
\end{equation}
which act on the probe observable as
\begin{equation}
\label{eq:response-xi-act}
\tau_\xi^\mp (\1 \otimes h^0(f)) \doteq \left( \operatorname{id} + \, G^\text{ret/adv} \cdot \xi \otimes \1 \right) (\1 \otimes h^0(f)) = h_{\mu\nu}^{\text{ret}/\text{adv}}(f) \eqend{,}
\end{equation}
where $h_{\mu\nu}^{\text{ret}/\text{adv}}(f)$ are defined by smearing \Eq~\eqref{eq:sol-Ein-Lang} with $f^{\mu\nu}$ as in \Eq~\eqref{eq:B-obs}. The inverse maps are clearly of the form $(\tau_\xi^\mp)^{-1} = \operatorname{id} - \, G^\text{ret/adv} \cdot \xi \otimes \1$, such that the scattering morphism evaluates to
\begin{equation}
\label{eq:scattering-xi}
\Theta_\xi \colon \mathfrak{A}_\xi(\mathcal{M}) \otimes \mathfrak{A}_h(\mathcal{M}) \to \mathfrak{A}_\xi(\mathcal{M}) \otimes \mathfrak{A}_h(\mathcal{M}) \eqend{,} \quad \Theta_\xi \doteq (\tau_\xi^-)^{-1} \cdot \tau_\xi^+ = \operatorname{id} + \, G \cdot \xi \otimes \1 \eqend{,}
\end{equation}
where $G$ denotes the causal propagator \eqref{eq:caus-prop-h}. Therefore, the induced system observable associated to the actual measurement is given by
\begin{equation}
\label{eq:scattering-xi-act}
\tilde{h}(f) \doteq \Theta_\xi \left( \1 \otimes h^0(f) \right) = \1 \otimes h^0(f) + h^\text{ind}(f) \otimes \1 \eqend{,}
\end{equation}
where the intrinsic perturbation $h^0(f)$ was defined in \Eq~\eqref{eq:intrinsic-h}, and
\begin{align}
\label{eq:induced-h}
h^\text{ind}(f) \doteq \frac{1}{2} \kappa \int_{\mathcal{M} \times \mathcal{M}} f^{\mu\nu}(x) G_{\mu\nu\rho\sigma}(x,x') \xi^{\rho\sigma}_\beta(x') \total\mu_x \total\mu_{x'} \in \Gamma(\mathcal{V}) \eqend{,}
\end{align}
denotes the induced perturbation sourced by the thermal stochastic noise.

We may therefore evaluate the variance associated to the induced system observable $\tilde{h}(f)$ as follows:
\begin{theorem}
Consider the algebras $\mathfrak{A}_\xi(\mathcal{M})$ and $\mathfrak{A}_h(\mathcal{M})$ associated to the target field $\xi^\beta_{\mu\nu}$ and the probe field $h_{\mu\nu}$, respectively. Choose for the target field the conformally thermal state $\omega_\beta \colon \mathfrak{A}_\xi(\mathcal{M}) \to \mathbb{C}$, whose one-point and two-point functions are defined in \Eq~\eqref{eq:thermal-state-xi}, and a quasi-free quantum state $\sigma \colon \mathfrak{A}_h(\mathcal{M}) \to \mathbb{C}$ for the probe. 

Denoting with $\omega^\beta_\sigma \doteq \omega_\beta \otimes \sigma$ the combined state of the target-probe system on $\mathfrak{A}_\xi(\mathcal{M}) \otimes \mathfrak{A}_h(\mathcal{M})$, and recalling the definition of the induced observable $\tilde{h}(f)$ in \Eq~\eqref{eq:scattering-xi-act}, we have
\begin{equation}
\operatorname{var} \left( \tilde{h}(f); \omega^\beta_\sigma \right) = \operatorname{var} \left( h^0(f); \sigma \right) + \operatorname{var}( h^\mathrm{ind}(f); \omega_\beta ) \eqend{.}
\end{equation}

Moreover, given the causal propagator of the probe \eqref{eq:caus-prop-h} and the thermal noise kernel $K^\beta_{\rho\sigma\gamma\delta}(x,y)$ studied in Section \ref{sec:thermal-noise}, define the gauge-independent, causal thermal noise kernel
\begin{equation}
\label{eq:thermal-noise-caus}
\mathcal{K}^\beta_{\mu\nu\alpha\beta} (x,y) \doteq \int_{\mathcal{M} \times \mathcal{M}} G_{\mu\nu}{}^{\rho\sigma}(x,x') G_{\alpha\beta}{}^{\gamma\delta}(y,y') K^\beta_{\rho\sigma\gamma\delta} (x',y') \total\mu_{x'} \total\mu_{y'} \eqend{.}
\end{equation}
Then the thermal variance of the induced perturbation $h^\mathrm{ind}_{\mu\nu}(x)$ induced by the stochastic noise $\xi_{\mu\nu}(x)$ in the state $\omega_\beta$ is given by
\begin{splitequation}
\label{eq:thermal-noise-varind}
\operatorname{var}( h^\mathrm{ind}(f); \omega_\beta) &\doteq \omega_\beta \left[ \left( h^\mathrm{ind}(f) \right)^2 \right] - \left[ \omega_\beta\left( h^\mathrm{ind}(f) \right) \right]^2 \\
&= \frac{\kappa^2}{4} \int_{\mathcal{M} \times \mathcal{M}} \mathcal{K}^\beta_{\mu\nu\alpha\beta}(x,y) f^{\mu\nu}(x) f^{\alpha\beta} (y) \total\mu_x \total\mu_y \eqend{.}
\end{splitequation}
\end{theorem}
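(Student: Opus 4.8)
The plan is to exploit the tensor-product structure of the combined state together with the vanishing one-point function of the noise. First I would recall from \eqref{eq:scattering-xi-act} that the induced observable decomposes as $\tilde h(f) = \1 \otimes h^0(f) + h^{\mathrm{ind}}(f) \otimes \1$, where $h^{\mathrm{ind}}(f) \in \mathfrak{A}_\xi(\mathcal{M})$ is built from the noise via \eqref{eq:induced-h} and $h^0(f) \in \mathfrak{A}_h(\mathcal{M})$ is the intrinsic (homogeneous) solution. The key structural observation is that these two summands live in different tensor factors, hence commute: $(h^{\mathrm{ind}}(f) \otimes \1)(\1 \otimes h^0(f)) = h^{\mathrm{ind}}(f) \otimes h^0(f) = (\1 \otimes h^0(f))(h^{\mathrm{ind}}(f) \otimes \1)$. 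Squaring is therefore unambiguous and $\tilde h(f)^2 = (h^{\mathrm{ind}}(f))^2 \otimes \1 + 2\, h^{\mathrm{ind}}(f) \otimes h^0(f) + \1 \otimes (h^0(f))^2$.

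Next I would evaluate both moments in the product state $\omega^\beta_\sigma = \omega_\beta \otimes \sigma$, using $\omega^\beta_\sigma(a \otimes b) = \omega_\beta(a)\,\sigma(b)$. Since $\omega_\beta(\xi^\beta_{\mu\nu}) = 0$ by \eqref{eq:thermal-state-xi}, linearity of \eqref{eq:induced-h} yields $\omega_\beta(h^{\mathrm{ind}}(f)) = 0$; the one-point function thus reduces to $\omega^\beta_\sigma(\tilde h(f)) = \sigma(h^0(f))$, and the cross term of the squared observable factorizes as $\omega_\beta(h^{\mathrm{ind}}(f))\,\sigma(h^0(f)) = 0$. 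Collecting the surviving contributions gives $\operatorname{var}(\tilde h(f);\omega^\beta_\sigma) = \big[\sigma((h^0(f))^2) - \sigma(h^0(f))^2\big] + \omega_\beta((h^{\mathrm{ind}}(f))^2)$; the first bracket is $\operatorname{var}(h^0(f);\sigma)$, and the second equals $\operatorname{var}(h^{\mathrm{ind}}(f);\omega_\beta)$ precisely because $\omega_\beta(h^{\mathrm{ind}}(f)) = 0$. This establishes the additive decomposition, with the vanishing of any intrinsic–induced cross-correlation as its conceptual heart.

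For the explicit kernel formula I would insert \eqref{eq:induced-h} into $\omega_\beta((h^{\mathrm{ind}}(f))^2)$, producing a fourfold integral of $f^{\mu\nu}(x) G_{\mu\nu\rho\sigma}(x,x') f^{\alpha\beta}(y) G_{\alpha\beta\gamma\delta}(y,y')$ against $\omega_\beta(\xi^{\rho\sigma}_\beta(x')\,\xi^{\gamma\delta}_\beta(y'))$, which by \eqref{eq:thermal-state-xi} is the symmetric kernel $\mathcal{N}^{\rho\sigma\gamma\delta}(x',y')$ of \eqref{eq:noise-kernel-ind}. The main (and only nonroutine) step is to discard the symmetrization: since the integrand is invariant under the simultaneous relabelling $(x,\mu\nu,\rho\sigma,x') \leftrightarrow (y,\alpha\beta,\gamma\delta,y')$ — the tensors $f$ enter symmetrically and the two propagators are identical — the two summands $\tfrac12 K^\beta_{\rho\sigma\gamma\delta}(x',y')$ and $\tfrac12 K^\beta_{\gamma\delta\rho\sigma}(y',x')$ contribute equally, so $\mathcal{N}$ may be replaced by the full thermal kernel $K^\beta$. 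Identifying the remaining $G\!\cdot\! K^\beta\!\cdot\! G$ contraction with the causal kernel $\mathcal{K}^\beta_{\mu\nu\alpha\beta}(x,y)$ of \eqref{eq:thermal-noise-caus} then produces \eqref{eq:thermal-noise-varind}. The delicate point to control here is the well-definedness of these distributional products; it is guaranteed by smearing against the compactly supported conserved tensors $f$ (with $\nabla^\rho f_{\rho\sigma} = 0$) and by the conservation and tracelessness of $K^\beta$ established in Proposition \ref{prop:variance}, which jointly also secure the advertised gauge independence of $\mathcal{K}^\beta$.
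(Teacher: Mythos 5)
Your proposal is correct and follows essentially the same route as the paper: the tensor-product decomposition of $\tilde{h}(f)$, the vanishing of the cross term via $\omega_\beta(h^{\mathrm{ind}}(f)) = 0$, and the insertion of \eqref{eq:induced-h} into the second moment. The only difference is cosmetic --- you justify replacing the symmetrized kernel $\mathcal{N}$ by the full $K^\beta$ inside the proof via the relabelling symmetry of the integrand, whereas the paper writes $K^\beta$ directly and defers exactly this observation to the remark following the theorem.
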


\begin{proof}
Consider the combined state $\omega^\beta_\sigma \colon \mathfrak{C}_\xi(\mathcal{M}) \to \mathbb{C}$ seen as a functional on the interacting algebra of the target-probe system, which has thus the form
\begin{equation}
\omega^\beta_\sigma \doteq (\tau_\xi^-)^{-1*} ( \omega_\beta \otimes \sigma ) \eqend{,}
\end{equation}
with $\tau_\xi^- = \operatorname{id} - \ G^\text{ret} \cdot \xi \otimes \1$ defined as before. In a similar way as in \cite[Sec.~5.1]{Fewster2018}, using the expression of the scattering morphism \eqref{eq:scattering-xi-act}, and recalling that the one-point functions associated to both $\omega_\beta$ and $\sigma$ vanish, we evaluate
\begin{equation}
\begin{aligned}
\operatorname{var}( \tilde{h}(f); \omega^\beta_\sigma ) &= \omega^\beta_\sigma\left( \tilde{h}(f)^2 \right) - \left[ \omega^\beta_\sigma\left( \tilde{h}(f) \right) \right]^2 \\
&= ( \omega_\beta \otimes \sigma ) \left( \left[ \Theta_\xi ( \1 \otimes h^0(f) ) \right]^2 \right) - \left[ ( \omega_\beta \otimes \sigma ) \left( \Theta_\xi ( \1 \otimes h^0(f) ) \right) \right]^2 \\
&= ( \omega_\beta \otimes \sigma ) \left[ \left( \1 \otimes h^0(f) + h^{\text{ind}}(f) \otimes \1 \right)^2 \right] \\
&\quad- \left[ ( \omega_\beta \otimes \sigma ) \left( \1 \otimes h^0(f)^2 + h^{\text{ind}}(f)^2 \otimes \1 \right) \right]^2 \\
&= \sigma\left( h^0(f)^2 \right) - \sigma\left( h^0(f) \right)^2 + \omega_\beta\left( h^\text{ind}(f)^2 \right) - \omega_\beta\left( h^\text{ind}(f) \right)^2 \\
&= \operatorname{var}( h^0(f); \sigma ) + \operatorname{var}( h^\text{ind}(f); \omega_\beta ) \eqend{.}
\end{aligned}
\end{equation}
Hence, it is straightforward to evaluate the variance of the induced fluctuations, which is obtained by applying the scattering morphism to each argument of the noise kernel $K^\beta_{\rho\sigma\gamma\delta}(x,x')$, and smear it with the test tensor associated to the probe observable $h^0(f)$. Namely, using \Eq~\eqref{eq:induced-h} and recalling that $\omega_\beta\left( h^\text{ind}(f) \right) = 0$, we obtain
\begin{equation}
\begin{aligned}
\operatorname{var}( h^\text{ind}(f);\omega_\beta) &= \omega_\beta\left( \left[ h^\text{ind}(f) \right]^2 \right) - \left[ \omega_\beta\left( h^\text{ind}(f) \right) \right]^2 \\
&= \frac{1}{4} \kappa^2 \int_{\mathcal{M} \times \mathcal{M}} \total\mu_x \total\mu_y f^{\mu\nu}(x) f^{\alpha\beta}(y) \\
&\qquad\times \int_{\mathcal{M} \times \mathcal{M}} \total\mu_{x'} \total\mu_{y'} G_{\mu\nu}{}^{\rho\sigma}(x,x') G_{\alpha\beta}{}^{\gamma\delta}(y,y') K^\beta_{\rho\sigma\gamma\delta} (x',y') \\
&= \frac{\kappa^2}{4} \int_{\mathcal{M} \times \mathcal{M}} \mathcal{K}^\beta_{\mu\nu\alpha\beta}(x,y) f^{\mu\nu}(x) f^{\alpha \beta} (y) \total\mu_x \total\mu_y \eqend{,} 
\end{aligned}
\end{equation}
thus concluding the proof.
\end{proof}
\begin{remark}
Since the kernel $\mathcal{K}^\beta_{\mu\nu\alpha\beta}$ in the variance~\eqref{eq:thermal-noise-varind} is smeared with the same test tensor $f^{\mu\nu}$ in both arguments, only its symmetric part enters. In turn, this means that only the symmetric part of $K^\beta_{\rho\sigma\gamma\delta}$ contributes in \Eq~\eqref{eq:thermal-noise-caus}, and hence we can replace $K^\beta_{\rho\sigma\gamma\delta}$ by $\mathcal{N}_{\rho\sigma\gamma\delta}$~\eqref{eq:noise-kernel-ind}. Therefore, our result~\eqref{eq:thermal-noise-varind} for the thermal variance is independent of whether we consider the algebra $\mathfrak{A}_\xi(\mathcal{M})$ of the target field to be commutative or non-commutative.
\end{remark}

\section{Conclusions and future outlook}
\label{sec:conc}

In this work, we studied the quantum fluctuations of the metric driven by a thermal bath of conformally invariant Maxwell fields on cosmological spacetimes in the framework of stochastic gravity. We studied carefully the thermal noise kernel associated with the Gaussian stochastic noise entering the Einstein--Langevin equation, proving that it is related to the variance of the induced fluctuations of the linearized metric perturbation which are sourced by the stochastic noise. Our result is obtained in the local measurement framework of Fewster and Verch, and it may be extended to other classes of quantum matter fields and background spacetimes. For example, our evaluations may be straightforward applied to Minkowski spacetime and KMS quantum states in flat spacetime, evaluating the linearized Einstein--Langevin equation in the limit $a(\eta) \to 1$. This limit applied to the thermal noise kernel provides the quantum fluctuations driven by thermal photons on flat spacetime.

On the other hand, unlike the zeroth-order background case, our analysis does not focus its attention on studying solutions of the linearized semiclassical and stochastic equations. It is well-known in the literature that unstable runaway solutions may appear in these semiclassical and stochastic equations, because of which the validity of those equations is still debated, see for example \cite{Simon1991,Hu2001,Anderson2003}. It is expected that this issue should be addressed in the future, in order to completely understand the nature of the induced metric perturbations driven by the backreaction of quantum fields and their fluctuations. We removed these runaway solutions by employing a reduction of order approach, which allowed us to ignore the higher-order derivative curvature tensors appearing in \Eq~\eqref{eq:Ein-Lang-lin}, which in principle might spoil the hyperbolic nature of the equation. A more carefully analysis of this issue should be performed in the future to understand possible generalizations to more complicated linearized solutions. Such a more in-depth analysis should also take into account the linearized expectation value of the quantum stress-energy tensor, which sources the intrinsic perturbations $h^{\text{intr},1}_{\mu\nu}$ that are modulated by the backreaction of the quantum matter field. 

It would also be interesting to study more complicated measurement processes, in which other types of target and probe fields may be involved, or different target-probe couplings may be taken into account beyond the linear point-wise case \eqref{eq:action-int-h}. For instance, one might consider spinor fields, non-conformally invariant quantum fields, non-Gaussian stochastic noises, or quadratic interactions.

\section*{Acknowledgements}

The work of PM was financed by the European Union -- NextGenerationEU -- National Recovery and Resilience Plan (NRRP) -- Mission 4 Component 2 Investment 1.2 –- ``Funding projects presented by young researchers'' Seal of Excellence PNRR Young Researchers, ``SPACE project'' -- SOE20240000129 -- CUP E63C24002410003. He thanks E. D'Angelo and N. Pinamonti for useful discussions. MBF acknowledges financial support from the COST action CA23130 -- Bridging high and low energies in search of quantum gravity (BridgeQG) through the STSM ``Thermal fluctuations in quantum gravity: noise kernel and stochastic gravity''. He thanks Paolo Meda, the Mathematical Physics group and the Department of Mathematics of the University of Trento for their generous hospitality. DG was supported by project 24-13079S of the Czech Science Foundation (GA{\v C}R).

\appendix

\section{Derivations of thermal noise kernel}
\label{sec:app}

In a first step, it is possible to reduce the full kernel $\tilde{K}^\beta_{\mu\nu\rho\sigma}(\eta-\eta',\vec{p})$ to just three independent terms. Namely, using the symmetry~\eqref{eq:induced_symmetric}, the tracelessness~\eqref{eq:induced_traceless} and the conservation~\eqref{eq:induced_conserved}, a long but straightforward computation shows that we have
\begin{equations}[eq:appendix_kernel_components]
\tilde{K}^\beta_{0000}(\eta-\eta',\vec{p}) &= \tilde{L}_\text{A}(\eta-\eta',\vec{p}) \eqend{,} \\
\tilde{K}^\beta_{000i}(\eta-\eta',\vec{p}) &= - \mathi \frac{\vec{p}_i}{2 \vec{p}^2} \partial_\eta \tilde{L}_\text{B}(\eta-\eta',\vec{p}) \eqend{,} \\
\begin{split}
\tilde{K}^\beta_{00ij}(\eta-\eta',\vec{p}) &= \frac{\eta_{ij}}{4 \vec{p}^2} \left[ - \left( \partial_\eta^2 - \vec{p}^2 \right) \tilde{L}_\text{B}(\eta-\eta',\vec{p}) + \partial_\eta^2 \tilde{L}_\text{C}(\eta-\eta',\vec{p}) \right] \\
&\quad+ \frac{\vec{p}_i \vec{p}_j}{4 \abs{\vec{p}}^4} \Bigl[ 4 \vec{p}^2 \tilde{L}_\text{A}(\eta-\eta',\vec{p}) + 3 \left( \partial_\eta^2 - \vec{p}^2 \right) \tilde{L}_\text{B}(\eta-\eta',\vec{p}) \\
&\hspace{4em}- 3 \partial_\eta^2 \tilde{L}_\text{C}(\eta-\eta',\vec{p}) \Bigr] \eqend{,}
\end{split} \\
\begin{split}
\tilde{K}^\beta_{0i0j}(\eta-\eta',\vec{p}) &= \frac{\eta_{ij}}{4 \vec{p}^2} \left( \partial_\eta^2 + \vec{p}^2 \right) \tilde{L}_\text{B}(\eta-\eta',\vec{p}) \\
&\quad- \frac{\vec{p}_i \vec{p}_j}{4 \abs{\vec{p}}^4} \left[ 4 \vec{p}^2 \tilde{L}_\text{A}(\eta-\eta',\vec{p}) + \left( 3 \partial_\eta^2 - \vec{p}^2 \right) \tilde{L}_\text{B}(\eta-\eta',\vec{p}) \right] \eqend{,}
\end{split} \\
\begin{split}
\tilde{K}^\beta_{0ikl}(\eta-\eta',\vec{p}) &= - \mathi \frac{\vec{p}_{(k} \eta_{l)i}}{4 \abs{\vec{p}}^4} \partial_\eta \left( \partial_\eta^2 + \vec{p}^2 \right) \tilde{L}_\text{C}(\eta-\eta',\vec{p}) \\
&\quad- \mathi \frac{\vec{p}_i \eta_{kl}}{8 \abs{\vec{p}}^4} \partial_\eta \left[ 4 \vec{p}^2 \tilde{L}_\text{B}(\eta-\eta',\vec{p}) + \left( \partial_\eta^2 - \vec{p}^2 \right) \tilde{L}_\text{C}(\eta-\eta',\vec{p}) \right] \\
&\quad+ \mathi \frac{\vec{p}_i \vec{p}_k \vec{p}_l}{8 \abs{\vec{p}}^6} \partial_\eta \left[ 8 \vec{p}^2 \tilde{L}_\text{B}(\eta-\eta',\vec{p}) + \left( 5 \partial_\eta^2 - \vec{p}^2 \right) \tilde{L}_\text{C}(\eta-\eta',\vec{p}) \right] \eqend{,}
\end{split} \\
\begin{split}
\tilde{K}^\beta_{ijkl}(\eta-\eta',\vec{p}) &= \frac{\vec{p}_i \vec{p}_j \vec{p}_k \vec{p}_l}{32 \abs{\vec{p}}^8} \biggl[ 32 \abs{\vec{p}}^4 \tilde{L}_\text{A}(\eta-\eta',\vec{p}) + 16 \vec{p}^2 \left( 5 \partial_\eta^2 - \vec{p}^2 \right) \tilde{L}_\text{B}(\eta-\eta',\vec{p}) \\
&\hspace{6em}+ \left( 35 \partial_\eta^4 - 10 \vec{p}^2 \partial_\eta^2 + 3 \abs{\vec{p}}^4 \right) \tilde{L}_\text{C}(\eta-\eta',\vec{p}) \biggr] \\
&\quad- \frac{\vec{p}_i \vec{p}_j \eta_{kl} + \vec{p}_k \vec{p}_l \eta_{ij}}{32 \abs{\vec{p}}^6} \biggl[ 8 \vec{p}^2 \left( \partial_\eta^2 - \vec{p}^2 \right) \tilde{L}_\text{B}(\eta-\eta',\vec{p}) \\
&\hspace{11em}+ \left( 5 \partial_\eta^4 + 2 \vec{p}^2 \partial_\eta^2 + 5 \abs{\vec{p}}^4 \right) \tilde{L}_\text{C}(\eta-\eta',\vec{p}) \biggr] \\
&\quad- \frac{\vec{p}_{(i} \eta_{j)(k} \vec{p}_{l)}}{8 \abs{\vec{p}}^6} \biggl[ 8 \vec{p}^2 \left( \partial_\eta^2 + \vec{p}^2 \right) \tilde{L}_\text{B}(\eta-\eta',\vec{p}) \\
&\hspace{4em}+ \left( 5 \partial_\eta^2 - 3 \vec{p}^2 \right) \left( \partial_\eta^2 + \vec{p}^2 \right) \tilde{L}_\text{C}(\eta-\eta',\vec{p}) \biggr] \\
&\quad+ \frac{\eta_{ij} \eta_{kl} + 2 \eta_{i(k} \eta_{l)j}}{32 \abs{\vec{p}}^4} \left( \partial_\eta^2 + \vec{p}^2 \right)^2 \tilde{L}_\text{C}(\eta-\eta',\vec{p}) \eqend{,}
\end{split}
\end{equations}
where we defined the combinations
\begin{equations}
\tilde{L}_\text{A}(\tau,\vec{p}) &\doteq \tilde{K}^\beta_{0000}(\tau,\vec{p}) \eqend{,} \\
\tilde{L}_\text{B}(\tau,\vec{p}) &\doteq \tilde{K}^\beta_{0000}(\tau,\vec{p}) + \eta^{ij} \tilde{K}^\beta_{0i0j}(\eta-\eta',\vec{p}) \eqend{,} \\
\tilde{L}_\text{C}(\tau,\vec{p}) &\doteq \tilde{K}^\beta_{0000}(\tau,\vec{p}) + 2 \eta^{ij} \tilde{K}^\beta_{0i0j}(\tau,\vec{p}) + \eta^{ik} \eta^{jl} \tilde{K}^\beta_{ijkl}(\tau,\vec{p}) \eqend{.}
\end{equations}

For the individual terms, we have
\begin{splitequation}
\label{eq:appendix_ka}
\tilde{K}^\beta_{0000}(\tau,\vec{p}) &= \lim_{\epsilon \to 0^+} \int \frac{\abs{\vec{p}-\vec{q}} \abs{\vec{q}}}{4} \left[ \frac{\mathe^{\mathi \abs{\vec{p}-\vec{q}} (\tau+\mathi \epsilon)}}{1 - \mathe^{- \beta \abs{\vec{p}-\vec{q}}}} + \frac{\mathe^{- \mathi \abs{\vec{p}-\vec{q}} (\tau-\mathi \epsilon)}}{\mathe^{\beta \abs{\vec{p}-\vec{q}}} - 1} \right] \\
&\hspace{8em}\times \left[ \frac{\mathe^{\mathi \abs{\vec{q}} (\tau+\mathi \epsilon)}}{1 - \mathe^{- \beta \abs{\vec{q}}}} + \frac{\mathe^{- \mathi \abs{\vec{q}} (\tau-\mathi \epsilon)}}{\mathe^{\beta \abs{\vec{q}}} - 1} \right] \frac{\total^3 \vec{q}}{(2 \pi)^3} \eqend{,}
\end{splitequation}
\begin{splitequation}
\label{eq:appendix_kb}
\eta^{ij} \tilde{K}^\beta_{0i0j}(\tau,\vec{p}) &= \lim_{\epsilon \to 0^+} \int \frac{\vec{p}^2 - \vec{q}^2 - \abs{\vec{p}-\vec{q}}^2}{8} \left[ \frac{\mathe^{\mathi \abs{\vec{p}-\vec{q}} (\tau+\mathi \epsilon)}}{1 - \mathe^{- \beta \abs{\vec{p}-\vec{q}}}} - \frac{\mathe^{- \mathi \abs{\vec{p}-\vec{q}} (\tau-\mathi \epsilon)}}{\mathe^{\beta \abs{\vec{p}-\vec{q}}} - 1} \right] \\
&\hspace{8em}\times \left[ \frac{\mathe^{\mathi \abs{\vec{q}} (\tau+\mathi \epsilon)}}{1 - \mathe^{- \beta \abs{\vec{q}}}} - \frac{\mathe^{- \mathi \abs{\vec{q}} (\tau-\mathi \epsilon)}}{\mathe^{\beta \abs{\vec{q}}} - 1} \right] \frac{\total^3 \vec{q}}{(2 \pi)^3} \eqend{,}
\end{splitequation}
and
\begin{splitequation}
\label{eq:appendix_kc}
\eta^{ik} \eta^{jl} \tilde{K}^\beta_{ijkl}(\tau,\vec{p}) &= \lim_{\epsilon \to 0^+} \int \frac{\left( \vec{p}^2 - \vec{q}^2 - \abs{\vec{p}-\vec{q}}^2 \right)^2}{16 \abs{\vec{p}-\vec{q}} \abs{\vec{q}}} \left[ \frac{\mathe^{\mathi \abs{\vec{p}-\vec{q}} (\tau+\mathi \epsilon)}}{1 - \mathe^{- \beta \abs{\vec{p}-\vec{q}}}} + \frac{\mathe^{- \mathi \abs{\vec{p}-\vec{q}} (\tau-\mathi \epsilon)}}{\mathe^{\beta \abs{\vec{p}-\vec{q}}} - 1} \right] \\
&\hspace{8em}\times \left[ \frac{\mathe^{\mathi \abs{\vec{q}} (\tau+\mathi \epsilon)}}{1 - \mathe^{- \beta \abs{\vec{q}}}} + \frac{\mathe^{- \mathi \abs{\vec{q}} (\tau-\mathi \epsilon)}}{\mathe^{\beta \abs{\vec{q}}} - 1} \right] \frac{\total^3 \vec{q}}{(2 \pi)^3} \eqend{,}
\end{splitequation}
where we used that $2 (\vec{p} \cdot \vec{q}) = \vec{p}^2 + \vec{q}^2 - \abs{\vec{p}-\vec{q}}^2$.

These kernels can be evaluated using spherical coordinates with $(\vec{p} \cdot \vec{q}) = \abs{\vec{p}} \abs{\vec{q}} \cos \theta$ and $\total^3 \vec{q} = \abs{\vec{q}}^2 \sin \theta \total \abs{\vec{q}} \total \theta \total \phi$, but the resulting integrals are too complicated to compute exactly. Hence, in the following subsections, we compute them in various limits.

\subsection{Small external momentum limit}
\label{app:smallp}

The limit of small external momentum $\vec{p} \to 0$ is simple to take, since the integrals converge for any value of $\vec{p}$ and we may exchange the limit with the integration over $\vec{q}$. To obtain the expansion of the components~\eqref{eq:appendix_kernel_components} of the kernel $\tilde{K}_{\mu\nu\rho\sigma}$ to leading in $\vec{p}$, we see that we need the expansion of $\tilde{L}_\text{A}$ to zeroth order in $\vec{p}$, the expansion of $\tilde{L}_\text{B}$ to second order in $\vec{p}$, and the expansion of $\tilde{L}_\text{C}$ to fourth order in $\vec{p}$. We note that for any $\beta > 0$, the $\mathi \epsilon$ prescription in the second term of each bracket in each of the integrals~\eqref{eq:appendix_ka}, \eqref{eq:appendix_kb} and~\eqref{eq:appendix_kc} is irrelevant because the exponential in the denominator already ensures the decay of the integrand for large $\abs{\vec{q}}$, and so we may change the prescription to have $\tau + \mathi \epsilon$ in all terms.

Expanding the kernels for small $\vec{p}$ and performing the easy integrals over $\theta$ and $\phi$, we then obtain
\begin{equations}[eq:appendix_smallp_labc]
\tilde{L}_\text{A}(\tau,\vec{p}) &= \tilde{L}_\text{A}^{(0)}(\tau) + \bigo{\vec{p}^2} \eqend{,} \\
\tilde{L}_\text{B}(\tau,\vec{p}) &= \tilde{L}_\text{B}^{(0)}(\tau) + \tilde{L}_\text{B}^{(1)}(\tau) \vec{p}^2 + \bigo{\abs{\vec{p}}^4} \eqend{,} \\
\tilde{L}_\text{C}(\tau,\vec{p}) &= 2 \tilde{L}_\text{B}^{(0)}(\tau) + \tilde{L}_\text{C}^{(1)}(\tau) \vec{p}^2 + \tilde{L}_\text{C}^{(2)}(\tau) \abs{\vec{p}}^4 + \bigo{\abs{\vec{p}}^6}
\end{equations}
with
\begin{equations}[eq:appendix_smallp_labc_components]
\tilde{L}_\text{A}^{(0)}(\tau) &= \frac{1}{8 \pi^2} I_4(\tau) \eqend{,} \\
\tilde{L}_\text{B}^{(0)}(\tau) &= \frac{J_{1,4} + J_{2,4}}{2 \pi^2} \eqend{,} \\
\begin{split}
\tilde{L}_\text{B}^{(1)}(\tau) &= \frac{1}{24 \pi^2} I_2(\tau) + \frac{\beta^2 + 2 \mathi \beta \tau - 2 \tau^2}{24 \pi^2} \left( J_{1,4} + J_{2,4} \right) \\
&\quad+ \frac{\beta^2}{6 \pi^2} \left( J_{2,4} + 2 J_{3,4} + J_{4,4} \right) - \frac{\beta}{6 \pi^2} \left( J_{1,3} + 3 J_{2,3} + 2 J_{3,3} \right) \eqend{,}
\end{split} \\
\begin{split}
\tilde{L}_\text{C}^{(1)}(\tau) &= - \frac{\beta}{3 \pi^2} \left( J_{1,3} + 3 J_{2,3} + 2 J_{3,3} \right) \\
&\quad+ \frac{\beta^2 + 2 \mathi \beta \tau - 2 \tau^2}{12 \pi^2} \left( J_{1,4} + J_{2,4} \right) + \frac{\beta^2}{3 \pi^2} \left( J_{2,4} + 2 J_{3,4} + J_{4,4} \right) \eqend{,}
\end{split} \\
\begin{split}
\label{eq:appendix_smallp_labc_components_lc2}
\tilde{L}_\text{C}^{(2)}(\tau) &= \frac{(\beta + \mathi \tau)^4 + \tau^4}{240 \pi^2} \left( J_{1,4} + J_{2,4} \right) + \frac{\beta^2 \left( 2 \beta^2 + 3 \mathi \beta \tau - 3 \tau^2 \right)}{30 \pi^2} \left( J_{2,4} + 2 J_{3,4} + J_{4,4} \right) \\
&\quad+ \frac{\beta^4}{5 \pi^2} \left( J_{3,4} + 3 J_{4,4} + 3 J_{5,4} + J_{6,4} \right) - \frac{\beta \left( \beta^2 + 3 \mathi \beta \tau - 3 \tau^2 \right)}{30 \pi^2} \left( J_{1,3} + 3 J_{2,3} + 2 J_{3,3} \right) \\
&\quad- \frac{\beta^3}{5 \pi^2} \left( J_{2,3} + 4 J_{3,3} + 5 J_{4,3} + 2 J_{5,3} \right) + \frac{\beta^2 + 2 \mathi \beta \tau - 2 \tau^2}{30 \pi^2} \left( J_{1,2} + J_{2,2} \right) \\
&\quad+ \frac{2 \beta^2}{15 \pi^2} \left( J_{2,2} + 2 J_{3,2} + J_{4,2} \right) + \frac{\beta}{30 \pi^2} \left( J_{1,1} + 3 J_{2,1} + 2 J_{3,1} \right) + \frac{1}{60 \pi^2} I_0(\tau) \eqend{.}
\end{split}
\end{equations}
Here, we defined the integrals
\begin{equation}
\label{eq:appendix_in_def}
I_n(\tau) \doteq \lim_{\epsilon \to 0^+} \lim_{\mu \to 0^+} \left[ \int_\mu^\infty \left[ \frac{\mathe^{\mathi q (\tau + \mathi \epsilon)}}{1 - \mathe^{- \beta q}} + \frac{\mathe^{- \mathi q (\tau + \mathi \epsilon)}}{\mathe^{\beta q} - 1} \right]^2 q^n \total q + c_n(\mu) \right]
\end{equation}
and
\begin{equation}
\label{eq:appendix_jmn_def}
J_{m,n} \doteq \lim_{\mu \to 0^+} \left[ \int_\mu^\infty \frac{q^n}{\left( \mathe^{\beta q} - 1 \right)^m} \total q + d_{m,n}(\mu) \right] \eqend{,}
\end{equation}
where
\begin{equation}
c_0(\mu) = - \frac{4}{\beta^2 \mu} \eqend{,} \quad c_1(\mu) = \frac{4}{\beta^2} \ln \mu \eqend{,}
\end{equation}
and $d_{m,n}(\mu)$ are linear combinations of negative powers of $\mu$ and $\ln \mu$ such that the limit $\mu \to 0$ is finite. Note that the explicit expressions for $d_{m,n}$ are irrelevant since they cancel out in the sum~\eqref{eq:appendix_smallp_labc_components_lc2} for $\tilde{L}_\text{C}^{(2)}$.

It remains to actually compute the integrals $I_n$ and $J_{m,n}$. We start with the latter ones, for which we have
\begin{lemma}
\label{lemma:integral_jmn}
For $n \geq m \geq 1$, the integrals $J_{m,n}$~\eqref{eq:appendix_jmn_def} fulfill the recursion relation
\begin{equation}
J_{m+1,n+1} = \frac{n+1}{m \beta} J_{m,n} - J_{m,n+1} \eqend{.}
\end{equation}
For $n \geq m = 0$, we have instead the result
\begin{equation}
J_{1,n+1} = \frac{\Gamma(n+2) \zeta(n+2)}{\beta^{n+2}} \eqend{.}
\end{equation}
\end{lemma}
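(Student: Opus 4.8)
The plan is to handle the two claims separately: the recursion follows from a single integration by parts, while the closed form for $J_{1,n+1}$ follows from expanding the Bose factor in a geometric series. Before either step I would observe that in the ranges under consideration the defining integrals are genuinely convergent, so the counterterms $d_{m,n}(\mu)$ are absent and $J_{m,n}$ is simply $\int_0^\infty q^n (\mathe^{\beta q}-1)^{-m}\total q$. Indeed, near $q=0$ the integrand behaves like $q^{n-m}\beta^{-m}$, which is integrable exactly when $n\geq m$, and the exponential suppression takes care of the region at infinity.

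For the recursion I would start from the elementary identity obtained by writing $\mathe^{\beta q}=(\mathe^{\beta q}-1)+1$ inside the derivative of the Bose factor, namely
\begin{equation}
\frac{\total}{\total q}\left[(\mathe^{\beta q}-1)^{-m}\right] = -m\beta\left[(\mathe^{\beta q}-1)^{-m}+(\mathe^{\beta q}-1)^{-m-1}\right] \eqend{.}
\end{equation}
Multiplying by $q^{n+1}$ and integrating over $(0,\infty)$, I would evaluate the left-hand side by parts, which produces $-(n+1)\int_0^\infty q^n(\mathe^{\beta q}-1)^{-m}\total q=-(n+1)J_{m,n}$ together with a boundary term $\left[q^{n+1}(\mathe^{\beta q}-1)^{-m}\right]_0^\infty$. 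The upper endpoint vanishes by exponential decay, and the lower endpoint vanishes because $n+1-m>0$ in the regime $n\geq m\geq1$; this is precisely where the hypothesis enters. The right-hand side equals $-m\beta\bigl(J_{m,n+1}+J_{m+1,n+1}\bigr)$ directly from the definitions, so equating the two expressions and dividing by $m\beta$ yields the stated relation $J_{m+1,n+1}=\frac{n+1}{m\beta}J_{m,n}-J_{m,n+1}$.

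For the base case $m=1$ — equivalently the degenerate $m=0$ instance of the recursion, where the prefactor $1/m$ is singular and the relation cannot be used — I would instead expand $(\mathe^{\beta q}-1)^{-1}=\sum_{k=1}^\infty \mathe^{-k\beta q}$ and interchange sum and integral, which is legitimate by monotone convergence since every term is nonnegative. Each resulting integral is a Gamma integral, $\int_0^\infty q^{n+1}\mathe^{-k\beta q}\total q=\Gamma(n+2)/(k\beta)^{n+2}$, so that
\begin{equation}
J_{1,n+1}=\frac{\Gamma(n+2)}{\beta^{n+2}}\sum_{k=1}^\infty\frac{1}{k^{n+2}}=\frac{\Gamma(n+2)\zeta(n+2)}{\beta^{n+2}} \eqend{,}
\end{equation}
where the Dirichlet series converges for $n\geq0$ and is recognized as $\zeta(n+2)$.

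These computations are essentially routine; the only point demanding genuine attention is the vanishing of the boundary term at the origin in the integration by parts, since it is exactly this requirement that pins down the admissible range $n\geq m\geq1$ and simultaneously confirms that no regularization counterterm survives there. Everything else—the derivative identity, the geometric expansion, and the Gamma--zeta evaluation—is standard.
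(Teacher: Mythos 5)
Your proof is correct and follows essentially the same route as the paper: the recursion is obtained from the vanishing integral of the total derivative $\partial_q\bigl[q^{n+1}(\mathe^{\beta q}-1)^{-m}\bigr]$ (your integration by parts is just this identity rearranged), with the boundary term at the origin vanishing precisely when $n\geq m\geq 1$, and the base case reduces to the standard integral representation of $\Gamma(s)\zeta(s)$. The only cosmetic difference is that you derive that representation by expanding the Bose factor in a geometric series, whereas the paper simply rescales $q=x/\beta$ and cites it.
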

\begin{proof}
We compute
\begin{equation}
J_{m+1,n+1} - \frac{n+1}{m \beta} J_{m,n+1-1} + J_{m,n+1} = - \frac{1}{m \beta} \int_0^\infty \partial_q \left[ \frac{q^{n+1}}{\left( \mathe^{\beta q} - 1 \right)^m} \right] \total q \eqend{,}
\end{equation}
and the boundary terms vanish under the stated assumption. For $m = 0$, we rescale the integration variable $q = x/\beta$ and then employ the integral representation~\cite[\Eq~(25.5.1)]{dlmf} for the Riemann $\zeta$ function.
\end{proof}
Explicitly, we need
\begin{equation}
\label{eq:appendix_smallp_j1424}
J_{1,4} + J_{2,4} = \frac{4}{\beta} J_{1,3} = \frac{4 \pi^4}{15 \beta^5} \eqend{.}
\end{equation}

For the former integrals, we have instead
\begin{lemma}
\label{lemma:integral_in}
The integrals $I_n$~\eqref{eq:appendix_in_def} with $n \geq 2$ satisfy the recursion relation
\begin{equation}
I_n(\tau) = - \frac{1}{4} \partial_\tau^2 I_{n-2}(\tau) + 2 J_{1,n} + 2 J_{2,n} \eqend{.}
\end{equation}
For $n = 0,1$ instead, only derivatives are well-defined and we have for $k \geq 1$
\begin{equation}
\label{eq:appendix_smallp_dki0}
\partial_\tau^k I_0(\tau) = \frac{\mathi \pi}{\beta^2} \lim_{\epsilon \to 0^+} \partial_\tau^k \left[ \bigl[ \beta + 2 \mathi (\tau + \mathi \epsilon) \bigr] \coth\left( \frac{2 \pi (\tau + \mathi \epsilon)}{\beta} \right) \right] \eqend{.}
\end{equation}
(We do not need the integral $I_1$.)
\end{lemma}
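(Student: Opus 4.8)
The plan is to derive both assertions of the lemma from a single exact identity satisfied by the integrand of $I_n$ in \eqref{eq:appendix_in_def}, and then to resum the genuinely $n$-dependent remainder in closed form. Throughout I abbreviate $\tilde{\tau} \doteq \tau + \mathi\epsilon$ and write the two summands of the bracket as $A_q \doteq \mathe^{\mathi q \tilde{\tau}}/(1-\mathe^{-\beta q})$ and $B_q \doteq \mathe^{-\mathi q \tilde{\tau}}/(\mathe^{\beta q}-1)$, so that the integrand of $I_n$ is $(A_q+B_q)^2 q^n$.

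First I would establish the recursion for $n\geq 2$. The key observations are $\partial_\tau A_q = \mathi q A_q$ and $\partial_\tau B_q = -\mathi q B_q$, from which the cross term $A_q B_q = \mathe^{\beta q}/(\mathe^{\beta q}-1)^2$ is $\tau$-independent while $\partial_\tau^2 (A_q+B_q)^2 = -4 q^2 (A_q^2 + B_q^2)$. Combining these with the partial fraction $2 A_q B_q = 2/(\mathe^{\beta q}-1) + 2/(\mathe^{\beta q}-1)^2$ yields the exact integrand identity
\[
(A_q+B_q)^2 q^n = - \frac{q^{n-2}}{4} \partial_\tau^2 (A_q+B_q)^2 + \frac{2 q^n}{\mathe^{\beta q}-1} + \frac{2 q^n}{(\mathe^{\beta q}-1)^2} \eqend{.}
\]
Integrating over $q \in [\mu,\infty)$, pulling $\partial_\tau^2$ outside the integral, and sending $\mu \to 0^+$ then $\epsilon \to 0^+$ produces the stated relation. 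For $n\geq 2$ the left-hand side and the Bose integrals $J_{1,n}$, $J_{2,n}$ converge without subtraction, so the corresponding counterterms vanish; the only $\mu$-divergence on the right sits inside $\int_\mu^\infty (A_q+B_q)^2 q^{n-2} \total q$ and is $\tau$-independent, being precisely the piece removed by $c_{n-2}(\mu)$ in the definition of $I_{n-2}$. Hence $\partial_\tau^2$ annihilates it, the surviving term equals $-\tfrac14 \partial_\tau^2 I_{n-2}$, and the remaining two integrals give $2 J_{1,n} + 2 J_{2,n}$.

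Next I would treat $n=0$ by direct resummation, using that $2 A_q B_q$ is $\tau$-independent so $\partial_\tau^k I_0 = \int_0^\infty \partial_\tau^k (A_q^2 + B_q^2) \total q$ for $k\geq 1$. Expanding $(1-\mathe^{-\beta q})^{-2}$ and $(\mathe^{\beta q}-1)^{-2}$ as power series in $\mathe^{-\beta q}$, and using $\partial_\tau^k A_q^2 = (2\mathi q)^k A_q^2$ together with $\int_0^\infty q^k \mathe^{-(m\beta \mp 2\mathi\tilde{\tau})q}\total q = k!\,(m\beta \mp 2\mathi\tilde{\tau})^{-(k+1)}$, I would obtain $\partial_\tau^k I_0$ as a convergent series of order-$(k+1)$ poles located at $\tilde{\tau} \in \tfrac{\mathi\beta}{2}\mathbb{Z}$. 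The $m=0$ summand reproduces the vacuum contribution $\partial_\tau^k \int_0^\infty \mathe^{2\mathi q\tilde{\tau}}\total q = \partial_\tau^k\bigl[\tfrac{\mathi}{2}\lim_{\epsilon\to0^+}(\tau+\mathi\epsilon)^{-1}\bigr]$. Recognising the full pole series as the Mittag--Leffler expansion of $\partial_\tau^k$ applied to $(\beta+2\mathi\tilde{\tau})\coth(2\pi\tilde{\tau}/\beta)$ resums it to $\tfrac{\mathi\pi}{\beta^2}\lim_{\epsilon\to0^+}\partial_\tau^k[(\beta+2\mathi\tilde{\tau})\coth(2\pi\tilde{\tau}/\beta)]$, equivalently $\partial_\tau^k I_0 = \tfrac{\mathi}{2}\partial_\tau^k \tilde{K}_\beta$ with $\tilde{K}_\beta$ as in \eqref{eq:thermal-kernel-beta}; restricting to $k\geq 1$ discards the additive constant left undetermined by the $\mu$-subtraction, which is exactly why only derivatives of $I_0$ are well defined.

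The main obstacle is the regularisation bookkeeping rather than any individual integral: one must control the interplay of the $\mu$-cutoff counterterms $c_n$, $d_{m,n}$ and the $\mathi\epsilon$-prescription carefully enough to justify exchanging $\partial_\tau^2$, the $q$-integration, and the two limits, and in particular to verify that every $\mu$-divergence arising in the recursion is $\tau$-independent and therefore killed by $\partial_\tau^2$. For the closed form the delicate point is the passage from the formal pole series to the $\coth$ expression, which requires controlling the conditional convergence of the Mittag--Leffler sum and confirming that the $\epsilon \to 0^+$ boundary value correctly encodes the distributional $-\mathi\pi\delta(\tau)$ content already visible in \eqref{eq:thermal-kernel-beta}.
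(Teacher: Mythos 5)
Your derivation of the recursion for $n\geq 2$ is essentially the paper's own argument: split $(A_q+B_q)^2$ into the $\tau$-independent cross term $2A_qB_q=\mathe^{\beta q}(\mathe^{\beta q}-1)^{-2}$, whose partial-fraction decomposition produces $2J_{1,n}+2J_{2,n}$, and the squares $A_q^2+B_q^2$, on which $\partial_\tau^2$ acts as $-4q^2$; the paper likewise notes that for $n=2,3$ the cutoff $\mu$ must be kept explicit before pulling $\partial_\tau^2$ out, and that the counterterms are $\tau$-independent. For $\partial_\tau^k I_0$, however, you take a genuinely different route. The paper substitutes $q=-\beta^{-1}\ln z$, recognizes the resulting integral as a $b$-derivative of Euler's integral for $\hypergeom{2}{1}$, expands at $z\to 1$, and uses the digamma recurrence and reflection formulas to produce the $\coth$; the divergences are absorbed automatically by the analytic continuation. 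You instead expand $(1-\mathe^{-\beta q})^{-2}$ and $(\mathe^{\beta q}-1)^{-2}$ geometrically, integrate term by term to get a sum of poles at $\tau\in\tfrac{\mathi\beta}{2}\mathbb{Z}$, and resum via the Mittag--Leffler expansion of $\coth$. This is more elementary and makes the pole structure of $\tilde{K}_\beta$ manifest, but it is also where the real work hides: for $k=1$ neither the $q$-integral of $q A_q^2$ alone (logarithmically divergent at $q=0$) nor the individual $m$-sums $\sum_m (m+1)(m\beta-2\mathi\tilde{\tau})^{-2}$ converge, so you must combine the $A_q^2$ and $B_q^2$ contributions (or retain the cutoff $\mu$ and track the cancelling $\ln\mu$ terms) before interchanging summation and integration, and then verify that the residues of the combined double-pole series match those of $\partial_\tau\bigl[(\beta+2\mathi\tilde{\tau})\coth(2\pi\tilde{\tau}/\beta)\bigr]$. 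You correctly flag this as the delicate point; with that bookkeeping carried out the argument closes, and as a small bonus your method yields all $k\geq 1$ uniformly, whereas the paper computes $k=1$ and obtains higher $k$ by further differentiation.
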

\begin{proof}
We compute for $n \geq 4$ that
\begin{splitequation}
I_n(\tau) - 2 J_{1,n} - 2 J_{2,n} &= \lim_{\epsilon \to 0^+} \int_0^\infty \left[ \frac{\mathe^{2 \mathi q (\tau + \mathi \epsilon)}}{\left( 1 - \mathe^{- \beta q} \right)^2} + \frac{\mathe^{- 2 \mathi q (\tau + \mathi \epsilon)}}{\left( \mathe^{\beta q} - 1 \right)^2} \right] q^n \total q \\
&= - \frac{1}{4} \lim_{\epsilon \to 0^+} \int_0^\infty \partial_\tau^2 \left[ \frac{\mathe^{2 \mathi q (\tau + \mathi \epsilon)}}{\left( 1 - \mathe^{- \beta q} \right)^2} + \frac{\mathe^{- 2 \mathi q (\tau + \mathi \epsilon)}}{\left( \mathe^{\beta q} - 1 \right)^2} \right] q^{n-2} \total q \eqend{,}
\end{splitequation}
and since the integral is absolutely convergent for every $\epsilon > 0$, we can interchange the derivatives with the integration to obtain
\begin{splitequation}
I_n(\tau) - 2 J_{1,n} - 2 J_{2,n} = - \frac{1}{4} \partial_\tau^2 \left[ I_{n-2}(\tau) - 2 J_{1,n-2} - 2 J_{2,n-2} \right] = - \frac{1}{4} \partial_\tau^2 I_{n-2}(\tau) \eqend{,}
\end{splitequation}
since the $J_{m,n}$ do not depend on $\tau$. For $n = 2,3$, we instead have to put the explicit infrared cutoff $\mu$ before we can interchange the derivatives with the integration, and obtain
\begin{splitequation}
I_n(\tau) - 2 J_{1,n} - 2 J_{2,n} &= - \frac{1}{4} \lim_{\epsilon \to 0^+} \lim_{\mu \to 0^+} \partial_\tau^2 \int_\mu^\infty \left[ \frac{\mathe^{2 \mathi q (\tau + \mathi \epsilon)}}{\left( 1 - \mathe^{- \beta q} \right)^2} + \frac{\mathe^{- 2 \mathi q (\tau + \mathi \epsilon)}}{\left( \mathe^{\beta q} - 1 \right)^2} \right] q^{n-2} \total q \\
&= - \frac{1}{4} \partial_\tau^2 I_{n-2}(\tau) \eqend{,}
\end{splitequation}
since the constants $c_n(\mu)$ needed to define $I_0(\tau)$ and $I_1(\tau)$ are also independent of $\tau$.

For the explicit result for $I_0$, we compute
\begin{splitequation}
\label{eq:appendix_smallp_dtau_i0}
\partial_\tau I_0(\tau) &= 2 \mathi \lim_{\epsilon \to 0^+} \int_0^\infty \left[ \frac{\mathe^{2 \mathi q (\tau + \mathi \epsilon)}}{\left( 1 - \mathe^{- \beta q} \right)^2} - \frac{\mathe^{- 2 \mathi q (\tau + \mathi \epsilon)}}{\left( \mathe^{\beta q} - 1 \right)^2} \right] q \total q \\
&= - \frac{2 \mathi}{\beta^2} \lim_{\epsilon \to 0^+} \int_0^1 \left[ z^{- \frac{2 \mathi (\tau + \mathi \epsilon)}{\beta}} - z^2 z^\frac{2 \mathi (\tau + \mathi \epsilon)}{\beta} \right] \frac{\ln z}{z (1-z)^2} \total z
\end{splitequation}
using the change of variables $q = - \frac{1}{\beta} \ln z$. For $\epsilon > 0$, the integrand is an integrable function both at $z = 0$ and $z = 1$, and thus the integral is well-defined. To determine its value, we employ the integral representation~\cite[\Eq~(15.6.1)]{dlmf} for the hypergeometric function
\begin{equation}
\label{eq:appendix_smallp_hypergeom_int}
\int_0^1 \frac{t^{b-1} (1-t)^{c-b-1}}{(1-z t)^a} \total t = \frac{\Gamma(b) \Gamma(c-b)}{\Gamma(c)} \hypergeom{2}{1}\left( a, b; c; z \right) \eqend{,}
\end{equation}
which holds for $\Re c > \Re b > 0$ and $z \not\in [1,\infty)$. Namely, shifting $c \to c+a+b$, taking derivatives of this equation with respect to $b$ and taking the limit as $z \to 1$, we obtain
\begin{equation}
\label{eq:appendix_smallp_hypergeom_intlog}
\int_0^1 t^{b-1} (1-t)^{c-1} \ln^k t \total t = \lim_{z \to 1^-} \frac{\partial^k}{\partial b^k} \left[ \frac{\Gamma(b) \Gamma(a+c)}{\Gamma(a+b+c)} \hypergeom{2}{1}\left( a, b; a+b+c; z \right) \right] \eqend{,}
\end{equation}
and thus (choosing $a = 2$ and $c = -1$)
\begin{splitequation}
\partial_\tau I_0(\tau) &= - \frac{2 \mathi}{\beta^2} \lim_{\epsilon \to 0^+} \lim_{z \to 1^-} \frac{\partial}{\partial b} \Biggl[ \frac{1}{\left( b - \frac{2 \mathi (\tau + \mathi \epsilon)}{\beta} \right)} \hypergeom{2}{1}\left( 2, b - \frac{2 \mathi (\tau + \mathi \epsilon)}{\beta}; b + 1 - \frac{2 \mathi (\tau + \mathi \epsilon)}{\beta}; z \right) \\
&\qquad- \frac{1}{\left( b + 2 + \frac{2 \mathi (\tau + \mathi \epsilon)}{\beta} \right)} \hypergeom{2}{1}\left( 2, b + 2 + \frac{2 \mathi (\tau + \mathi \epsilon)}{\beta}; b + 3 + \frac{2 \mathi (\tau + \mathi \epsilon)}{\beta}; z \right) \Biggr]_{b = 0} \eqend{.}
\end{splitequation}
Using the expansion~\cite[\Eqs~(15.8.10) and~(15.8.12)]{dlmf} of the hypergeometric function near $z = 1$ and the recurrence relation~\cite[\Eq~(5.5.2)]{dlmf} of the digamma function $\psi$, the divergent terms cancel (as they must since the original integral was convergent) and we are left with
\begin{splitequation}
\partial_\tau I_0(\tau) &= - \frac{2 \mathi}{\beta^2} \lim_{\epsilon \to 0^+} \frac{\partial}{\partial b} \Biggl[ \left( b - 1 - \frac{2 \mathi (\tau + \mathi \epsilon)}{\beta} \right) \psi\left( b - \frac{2 \mathi (\tau + \mathi \epsilon)}{\beta} \right) \\
&\qquad- \left( b + 1 + \frac{2 \mathi (\tau + \mathi \epsilon)}{\beta} \right) \psi\left( b + 1 + \frac{2 \mathi (\tau + \mathi \epsilon)}{\beta} \right) \Biggr]_{b = 0} \eqend{.}
\end{splitequation}
Finally, performing the $b$ derivatives and using the reflection relation~\cite[Eq.~(5.5.4)]{dlmf} of the digamma function $\psi$, we obtain
\begin{equation}
\partial_\tau I_0(\tau) = - \frac{2 \pi}{\beta^2} \lim_{\epsilon \to 0^+} \Biggl[ \coth\left( \frac{2 \pi (\tau + \mathi \epsilon)}{\beta} \right) + \frac{\mathi \pi}{\beta} \bigl[ \beta + 2 \mathi (\tau + \mathi \epsilon) \bigr] \sin^{-2}\left( \frac{2 \pi (\tau + \mathi \epsilon)}{\beta} \right) \Biggr] \eqend{,}
\end{equation}
which is equal to the stated result.
\end{proof}
Using these results, a short computation shows that the relations
\begin{equations}[eq:appendix_smallp_relations]
\partial_\eta^2 \tilde{L}_\text{C}^{(1)}(\eta-\eta') &= - \frac{2}{3} \tilde{L}_\text{B}^{(0)}(\eta-\eta') \eqend{,} \\
\partial_\eta^2 \tilde{L}_\text{B}^{(1)}(\eta-\eta') &= \frac{1}{3} \tilde{L}_\text{B}^{(0)}(\eta-\eta') - \frac{4}{3} \tilde{L}_\text{A}^{(0)}(\eta-\eta') \eqend{,} \\
\partial_\eta^2 \tilde{L}_\text{C}^{(2)}(\eta-\eta') &= \frac{1}{5} \tilde{L}_\text{C}^{(1)}(\eta-\eta') - \frac{8}{5} \tilde{L}_\text{B}^{(1)}(\eta-\eta')
\end{equations}
hold between the coefficients~\eqref{eq:appendix_smallp_labc_components}. Inserting the expansions~\eqref{eq:appendix_smallp_labc} into the expressions~\eqref{eq:appendix_kernel_components} for the components of the kernel $\tilde{K}_{\mu\nu\rho\sigma}$ and employing the relations~\eqref{eq:appendix_smallp_relations}, we obtain
\begin{equations}[eq:appendix_smallp_kernel_components]
\tilde{K}^\beta_{0000}(\eta-\eta',\vec{p}) &= \tilde{L}_\text{A}^{(0)}(\eta-\eta') + \bigo{\vec{p}^2} \eqend{,} \\
\tilde{K}^\beta_{000i}(\eta-\eta',\vec{p}) &= - \frac{\mathi}{2} \vec{p}_i \partial_\eta \tilde{L}_\text{B}^{(1)}(\eta-\eta') + \bigo{\abs{\vec{p}}^3} \eqend{,} \\
\tilde{K}^\beta_{0i0j}(\eta-\eta',\vec{p}) &= - \frac{1}{3} \eta_{ij} \left[ \tilde{L}_\text{A}^{(0)}(\eta-\eta') - \tilde{L}_\text{B}^{(0)}(\eta-\eta') \right] + \bigo{\vec{p}^2} \eqend{,} \\
\tilde{K}^\beta_{00ij}(\eta-\eta',\vec{p}) &= \frac{1}{3} \eta_{ij} \tilde{L}_\text{A}^{(0)}(\eta-\eta') + \bigo{\vec{p}^2} \eqend{,} \\
\begin{split}
\tilde{K}^\beta_{0ikl}(\eta-\eta',\vec{p}) &= \frac{\mathi \eta_{i(j} \vec{p}_{k)}}{10} \partial_\eta \left[ 4 \tilde{L}_\text{B}^{(1)}(\eta-\eta') - 3 \tilde{L}_\text{C}^{(1)}(\eta-\eta') \right] \\
&\quad- \frac{\mathi \eta_{kl} \vec{p}_i}{10} \partial_\eta \left[ 3 \tilde{L}_\text{B}^{(1)}(\eta-\eta') - \tilde{L}_\text{C}^{(1)}(\eta-\eta') \right] + \bigo{\abs{\vec{p}}^3}
\end{split} \\
\tilde{K}^\beta_{ijkl}(\eta-\eta',\vec{p}) &= \frac{\eta_{ij} \eta_{kl} + 2 \eta_{i(k} \eta_{l)j}}{15} \tilde{L}_\text{A}^{(0)}(\eta-\eta') + \bigo{\vec{p}^2} \eqend{.}
\end{equations}
Finally, inserting the results for the coefficients~\eqref{eq:appendix_smallp_labc_components} and employing lemmas~\ref{lemma:integral_jmn} and~\ref{lemma:integral_in}, in particular \Eqs~\eqref{eq:appendix_smallp_j1424} and~\eqref{eq:appendix_smallp_dki0}, we obtain the stated results~\eqref{eq:kernel-thermal-smallp} for the small-momentum limit.

\subsection{Large temperature limit}
\label{app:larget}

The limit of large temperature $\beta \to 0^+$ is more difficult, since the convergence of the integrals is not uniform in $\beta$. To obtain a suitable expansion, we first rescale the integration variable $\vec{q} \to \vec{q}/\beta$ and expand the integrands to obtain
\begin{equations}[eq:appendix_larget_kabc]
\tilde{K}^\beta_{0000}(\tau,\vec{p}) &= \frac{1}{4 \beta^5} \left[ \hat{I}_{2,0}(\tau,\vec{p}) + \hat{J}_{2,0}(\tau,\vec{p}) \right] - \frac{\abs{\vec{p}}}{4 \beta^4} \left[ \hat{I}_{1,1}(\tau,\vec{p}) + \hat{J}_{1,1}(\tau,\vec{p}) \right] + \bigo{\beta^{-3}} \eqend{,} \\
\eta^{ij} \tilde{K}^\beta_{0i0j}(\tau,\vec{p}) &= - \frac{1}{4 \beta^5} \left[ \hat{I}_{2,0}(\tau,\vec{p}) - \hat{J}_{2,0}(\tau,\vec{p}) \right] + \frac{\abs{\vec{p}}}{4 \beta^4} \left[ \hat{I}_{1,1}(\tau,\vec{p}) - \hat{J}_{1,1}(\tau,\vec{p}) \right] + \bigo{\beta^{-3}} \eqend{,} \\
\eta^{ik} \eta^{jl} \tilde{K}^\beta_{ijkl}(\tau,\vec{p}) &= \tilde{K}_{0000}(\tau,\vec{p}) + \bigo{\beta^{-3}}
\end{equations}
from \Eqs~\eqref{eq:appendix_ka}, \eqref{eq:appendix_kb} and~\eqref{eq:appendix_kc}, where we defined
\begin{equation}
\label{eq:appendix_hatimn_def}
\hat{I}_{m,n}(\tau,\vec{p}) \doteq \lim_{\epsilon \to 0^+} \int \left[ \frac{\mathe^{\mathi \left( \abs{\beta \vec{p}-\vec{q}} + \abs{\vec{q}} \right) (T + \mathi \epsilon)}}{\left[ 1 - \mathe^{- \abs{\beta \vec{p}-\vec{q}}} \right] \left[ 1 - \mathe^{- \abs{\vec{q}}} \right]} + \frac{\mathe^{- \mathi \left( \abs{\beta \vec{p}-\vec{q}} + \abs{\vec{q}} \right) T}}{\left[ \mathe^{\abs{\beta \vec{p}-\vec{q}}} - 1 \right] \left[ \mathe^{\abs{\vec{q}}} - 1 \right]} \right] \abs{\vec{q}}^m \cos^n \theta \frac{\total^3 \vec{q}}{(2 \pi)^3}
\end{equation}
and
\begin{equation}
\label{eq:appendix_hatjmn_def}
\hat{J}_{m,n}(\tau,\vec{p}) \doteq \int \left[ \frac{\mathe^{- \mathi \left( \abs{\beta \vec{p}-\vec{q}} - \abs{\vec{q}} \right) T}}{\left[ \mathe^{\abs{\beta \vec{p}-\vec{q}}} - 1 \right] \left[ 1 - \mathe^{- \abs{\vec{q}}} \right]} + \frac{\mathe^{\mathi \left( \abs{\beta \vec{p}-\vec{q}} - \abs{\vec{q}} \right) T}}{\left[ 1 - \mathe^{- \abs{\beta \vec{p}-\vec{q}}} \right] \left[ \mathe^{\abs{\vec{q}}} - 1 \right]} \right] \abs{\vec{q}}^m \cos^n \theta \frac{\total^3 \vec{q}}{(2 \pi)^3} \eqend{,}
\end{equation}
and set $T \doteq \tau/\beta$. (Note that since after the rescaling the integrals are convergent for large $\abs{\vec{q}}$ for any $\beta$, we could take the limit $\epsilon \to 0$ inside the integral almost everywhere.) Expanding for fixed $T$, we see that we recover the small-momentum limit. If we keep instead $\tau$ fixed, we expect the exponentials in $\hat{I}_{m,n}$ to oscillate and give a vanishing result, while in $\hat{J}_{m,n}$ the oscillations cancel and we obtain a finite result.

That these expectation hold is seen as follows:
\begin{lemma}
\label{lemma:integral_hatjmn}
The integrals $\hat{J}_{m,n}$~\eqref{eq:appendix_hatjmn_def} have the following expansion for small $\beta$:
\begin{equations}
\hat{J}_{m,2n}(\tau,\vec{p}) &= \frac{(-1)^n}{\pi^2 \abs{\vec{p}}^{2n}} \left( 1 - \frac{\mathi \beta}{2} \frac{\partial}{\partial \tau} \right) \frac{\partial^{2n}}{\partial \tau^{2n}} \left[ \frac{\sin\left( \abs{\vec{p}} \tau \right)}{\abs{\vec{p}} \tau} \right] \Gamma(m+3) \zeta(m+2) + \bigo{\beta^2} \eqend{,} \\
\begin{split}
\hat{J}_{m,2n+1}(\tau,\vec{p}) &= - \frac{\mathi \beta}{2 \pi^2 (\mathi \abs{\vec{p}})^{2n+1}} \left[ (m+2) \frac{\partial}{\partial \tau} + \tau \left( \frac{\partial^2}{\partial \tau^2} + \vec{p}^2 \right) \right] \\
&\qquad\times \frac{\partial^{2n+1}}{\partial \tau^{2n+1}} \left[ \frac{\sin\left( \abs{\vec{p}} \tau \right)}{\abs{\vec{p}} \tau} \right] \Gamma(m+2) \zeta(m+1) + \bigo{\beta^2} \eqend{.}
\end{split}
\end{equations}
\end{lemma}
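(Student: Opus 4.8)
The plan is to expand the integrand of $\hat{J}_{m,n}$ directly in powers of $\beta$ at fixed $\tau$. The crucial observation is that, although $T = \tau/\beta$ diverges as $\beta \to 0^+$, the phase $\left( \abs{\beta \vec{p} - \vec{q}} - \abs{\vec{q}} \right) T$ appearing in the exponentials stays finite, because $\abs{\beta \vec{p} - \vec{q}} - \abs{\vec{q}} = \bigo{\beta}$. Concretely, I would first expand the modulus as
\begin{equation}
\abs{\beta \vec{p} - \vec{q}} = \abs{\vec{q}} - \beta \abs{\vec{p}} \cos\theta + \frac{\beta^2 \vec{p}^2 \sin^2\theta}{2 \abs{\vec{q}}} + \bigo{\beta^3} \eqend{,}
\end{equation}
where $\cos\theta = (\vec{p}\cdot\vec{q})/(\abs{\vec{p}}\abs{\vec{q}})$, so that the phase becomes $- \abs{\vec{p}}\tau\cos\theta + \beta\, \vec{p}^2 \tau \sin^2\theta/(2\abs{\vec{q}}) + \bigo{\beta^2}$. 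Since the lemma asserts an error of $\bigo{\beta^2}$, it suffices to carry the whole integrand to first order in $\beta$.

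Next I would expand both the exponential numerators and the Bose--Einstein denominators to linear order in $\beta$. Writing $u \doteq \abs{\vec{q}}$ and $\phi_0 \doteq - \abs{\vec{p}}\tau\cos\theta$, the leading denominator in both summands collapses to $D_0 \doteq \left( \mathe^u - 1 \right)\left( 1 - \mathe^{-u} \right) = \mathe^{-u}\left( \mathe^u - 1 \right)^2$, while the shift $\delta \doteq \abs{\beta\vec{p}-\vec{q}} - u = - \beta\abs{\vec{p}}\cos\theta + \bigo{\beta^2}$ produces \emph{distinct} first-order corrections in the two summands. Collecting terms, the integrand reads $2\cos\phi_0/D_0$ at order $\beta^0$, plus an $\bigo{\beta}$ piece built from the phase correction $\phi_1 \doteq \vec{p}^2\tau\sin^2\theta/(2u)$ and from the denominator shift $\propto \cos\theta$.

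I would then perform the angular and radial integrations separately. After the substitution $c = \cos\theta$, the angular integrals reduce to $\int_{-1}^1 c^k \mathe^{\pm \mathi \abs{\vec{p}}\tau c}\total c$, which is a $\tau$-derivative of $\tilde{S}(\tau,\vec{p}) = \sin(\abs{\vec{p}}\tau)/(\abs{\vec{p}}\tau)$ by $c\,\mathe^{\mathi\abs{\vec{p}}\tau c} = (\mathi\abs{\vec{p}})^{-1}\partial_\tau \mathe^{\mathi\abs{\vec{p}}\tau c}$; the factor $\sin^2\theta = 1 - c^2$ carried by $\phi_1$ turns into the operator $\vec{p}^{-2}\left( \partial_\tau^2 + \vec{p}^2 \right)$, which is the origin of the $\tau\left( \partial_\tau^2 + \vec{p}^2 \right)$ structure in the odd-$n$ formula. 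The radial integrals I would handle through $\mathe^u/(\mathe^u-1)^2 = - \partial_u\left( \mathe^u - 1 \right)^{-1}$ and $\mathe^u(\mathe^u+1)/(\mathe^u-1)^3 = \partial_u^2\left( \mathe^u - 1 \right)^{-1}$, followed by integration by parts (whose boundary terms vanish for $m \geq 1$) and the standard formula $\int_0^\infty u^{s-1}/(\mathe^u - 1)\total u = \Gamma(s)\zeta(s)$; this produces the prefactors $\Gamma(m+3)\zeta(m+2)$ for even $n$ and $\Gamma(m+2)\zeta(m+1)$ for odd $n$, the latter accompanied by a relative factor $m+2$ coming from the $\partial_u^2$ identity.

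The final and most delicate step is the parity bookkeeping in $c$. For even $n$ the factor $c^n$ is even, so the $\bigo{\beta^0}$ term $\propto \cos\phi_0$ survives, and among the $\bigo{\beta}$ corrections only the parity-even piece $- \mathi\beta\abs{\vec{p}}\,c\sin\phi_0/D_0$ contributes; a short check shows that this is exactly $- \tfrac{\mathi\beta}{2}\partial_\tau$ applied to the zeroth-order integrand, reproducing the operator $\left( 1 - \tfrac{\mathi\beta}{2}\partial_\tau \right)$. For odd $n$ the $\bigo{\beta^0}$ contribution vanishes by symmetry, so the leading result is genuinely $\bigo{\beta}$ and arises from combining the two surviving parity-even first-order pieces, the $\phi_1$-correction and the denominator shift, which recombine into $\left[ (m+2)\partial_\tau + \tau\left( \partial_\tau^2 + \vec{p}^2 \right) \right]$. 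I expect this recombination, together with the consistent tracking of the $\mathi\epsilon$ prescription (which may be dropped inside the now absolutely convergent integrals), to be the main obstacle, since it requires the phase correction, the denominator correction, and the measure to conspire into a single differential operator.
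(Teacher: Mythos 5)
Your proposal follows essentially the same route as the paper: expand the integrand to first order in $\beta$ (noting the phase $(\abs{\beta\vec{p}-\vec{q}}-\abs{\vec{q}})\tau/\beta$ stays finite), convert powers of $\cos\theta$ into $\tau$-derivatives of $\sin(\abs{\vec{p}}\tau)/(\abs{\vec{p}}\tau)$, reduce the radial integrals to $\Gamma(s)\zeta(s)$ values, and sort contributions by parity in $\cos\theta$; your derivative identities $\mathe^u/(\mathe^u-1)^2=-\partial_u(\mathe^u-1)^{-1}$ and $\mathe^u(\mathe^u+1)/(\mathe^u-1)^3=\partial_u^2(\mathe^u-1)^{-1}$ plus integration by parts are just the explicit form of the paper's ``partial fraction decomposition and integral representations of $\zeta$'' step, and they correctly produce the factors $\Gamma(m+3)\zeta(m+2)$ and $(m+2)\Gamma(m+2)\zeta(m+1)$. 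The argument is correct as outlined.
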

\begin{proof}
Expanding the integrand in $\beta$, writing the powers of $\cos \theta$ as derivatives wih respect to $\tau$ and performing the easy integrals over $\theta$ and $\phi$, we obtain
\begin{splitequation}
\hat{J}_{m,n}(\tau,\vec{p}) &= \frac{1 + (-1)^n}{2 \pi^2 (\mathi \abs{\vec{p}})^n} \frac{\partial^n}{\partial \tau^n} \left[ \frac{\sin\left( \abs{\vec{p}} \tau \right)}{\abs{\vec{p}} \tau} \right] \int_0^\infty \frac{q^{m+2}}{\left[ \mathe^q - 1 \right] \left[ 1 - \mathe^{- q} \right]} \total q \\
&\quad+ \frac{\mathi \beta}{2 \pi^2 (\mathi \abs{\vec{p}})^n} \frac{\partial^{n+1}}{\partial \tau^{n+1}} \left[ \frac{\sin\left( \abs{\vec{p}} \tau \right)}{\abs{\vec{p}} \tau} \right] \int_0^\infty \frac{\left[ (-1)^n \left( 1 - \mathe^{- q} \right) + 1 - \mathe^q \right] q^{m+2}}{\left[ \mathe^q - 1 \right]^2 \left[ 1 - \mathe^{- q} \right]^2} \total q \\
&\quad+ \frac{\mathi \beta \tau [ (-1)^n - 1 ]}{4 \pi^2 (\mathi \abs{\vec{p}})^n} \frac{\partial^n}{\partial \tau^n} \left( \frac{\partial^2}{\partial \tau^2} + \vec{p}^2 \right) \left[ \frac{\sin\left( \abs{\vec{p}} \tau \right)}{\abs{\vec{p}} \tau} \right] \int_0^\infty \frac{q^{m+1}}{\left[ \mathe^q - 1 \right] \left[ 1 - \mathe^{- q} \right]} \total q \\
&\quad+ \bigo{\beta^2} \eqend{.}
\end{splitequation}
Performing a partial fraction decomposition of the integrand of the remaining integrals and using the integral representations~\cite[\Eqs~(25.5.1) and~(25.5.2)]{dlmf} of the Riemann $\zeta$ function, we obtain the stated result.
\end{proof}
In particular, we have $\hat{J}_{1,1} = \bigo{\beta}$ and
\begin{equation}
\label{eq:appendix_larget_hatj20}
\hat{J}_{2,0}(\tau,\vec{p}) = \frac{4 \pi^2}{15} \frac{\sin\left( \abs{\vec{p}} \tau \right)}{\abs{\vec{p}} \tau} - \frac{2 \mathi \beta \pi^2}{15} \frac{\abs{\vec{p}} \tau \cos\left( \abs{\vec{p}} \tau \right) - \sin\left( \abs{\vec{p}} \tau \right)}{\abs{\vec{p}} \tau^2} + \bigo{\beta^2} \eqend{.}
\end{equation}

For the oscillatory integrals, we have instead
\begin{lemma}
The integrals $\hat{I}_{m,n}$~\eqref{eq:appendix_hatimn_def} fulfill
\begin{equation}
\hat{I}_{m,n}(\tau,\vec{p}) = \bigo{\beta^{m+1}} \eqend{.}
\end{equation}
\end{lemma}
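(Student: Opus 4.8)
The plan is to show $\hat I_{m,n}=\bigo{\beta^{m+1}}$ by making the small-$\beta$ scaling explicit, exploiting the fact that --- unlike the integrals $\hat J_{m,n}$ of Lemma~\ref{lemma:integral_hatjmn} --- the phase of $\hat I_{m,n}$ is the \emph{sum} $\abs{\beta\vec p-\vec q}+\abs{\vec q}$ of the two energies rather than their difference. Writing $u\doteq\abs{\vec q}$ and $v\doteq\abs{\beta\vec p-\vec q}$, the difference stays of order $\beta$ ($\abs{v-u}\le\beta\abs{\vec p}$), so in $\hat J_{m,n}$ the phase $(v-u)T$ is $\bigo1$; by contrast the sum obeys $s\doteq u+v\ge\beta\abs{\vec p}$ by the triangle inequality, with equality only on the shrinking segment $\vec q\in[0,\beta\vec p]$, and $sT=s\tau/\beta$ oscillates rapidly as $\beta\to0^+$. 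The first step is therefore to pass to the energy variables by trading the angular integration for $v$: from $v^2=u^2-2\beta\abs{\vec p}u\cos\theta+\beta^2\vec p^2$ one gets $\total^3\vec q=\tfrac{2\pi}{\beta\abs{\vec p}}\,u\,v\,\total u\,\total v$ and $\cos\theta$ a rational function of $u,v$, so each of the two terms in \eqref{eq:appendix_hatimn_def} becomes a two-dimensional integral over $\{u\ge0,\ \abs{u-v}\le\beta\abs{\vec p}\le u+v\}$ whose exponential depends only on $s$ through $\mathe^{\pm\mathi s(T+\mathi\epsilon)}$.

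Second, I would change variables to $s=u+v$ and $d=u-v$ (Jacobian $1/2$), so that $d$ runs over the \emph{fixed} interval $[-\beta\abs{\vec p},\beta\abs{\vec p}]$ while $s\in[\beta\abs{\vec p},\infty)$, and then rescale both by the single small scale, $s=\beta\abs{\vec p}\,\sigma$ and $d=\beta\abs{\vec p}\,\delta$ with $\sigma\ge1$, $\delta\in[-1,1]$. Here $u=\tfrac{\beta\abs{\vec p}}2(\sigma+\delta)$ and $v=\tfrac{\beta\abs{\vec p}}2(\sigma-\delta)$ are both $\bigo\beta$, so the phase-space factor $uv$ is cancelled exactly by the two Bose denominators, $\tfrac{uv}{(1-\mathe^{-u})(1-\mathe^{-v})}\to1$, while $\cos\theta=(\sigma\delta+1)/(\sigma+\delta)$ is $\bigo1$. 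Collecting the powers of $\beta\abs{\vec p}$ --- $+1$ from the measure $uv/(\beta\abs{\vec p})$, $+m$ from $\abs{\vec q}^m$, $-2$ from the two Bose denominators, and $+2$ from $\total s\,\total d$ --- produces the overall prefactor $(\beta\abs{\vec p})^{m+1}$, and since $\beta T=\tau$ the remaining exponential is the $\beta$-independent $\mathe^{\pm\mathi\abs{\vec p}\tau\sigma}$. Thus $\hat I_{m,n}=(\beta\abs{\vec p})^{m+1}F(\tau)+\dots$, with $F$ a fixed oscillatory $\sigma$-integral, which is the claimed order; the second term is handled identically, its denominators $\mathe^{u}-1$ and $\mathe^{v}-1$ likewise cancelling $uv$ near the origin.

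The step I expect to be the main obstacle is the uniformity of this rescaling at large $\sigma$. Replacing the Bose factors by $1/(uv)$ is legitimate only while $u,v=\bigo\beta$, i.e.\ for $\sigma\ll1/\beta$; at $\sigma\sim1/\beta$ (that is $u\sim1$) the true amplitude departs from $(\beta\abs{\vec p})^m g(\sigma)$, the second term regaining its exponential decay and the first being controlled only by the $\mathi\epsilon$-prescription $T+\mathi\epsilon$. One must therefore verify that these large-momentum corrections do not generate a lower power of $\beta$: being smooth and supported where $s\gtrsim1$, they are integrated against the super-rapidly oscillating $\mathe^{\mathi sT}$, and a non-stationary-phase integration by parts shows their contribution is $\bigo{\beta^{N}}$ for every $N$, so that the leading behaviour is fixed entirely by the endpoint $s=\beta\abs{\vec p}$. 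Concretely, the endpoint amplitude is smooth with derivatives $A^{(k)}(\beta\abs{\vec p})=\bigo{\beta^{m-k}}$, so every term of its endpoint asymptotic expansion is of the common order $(\beta\abs{\vec p})^{m-k}\,(\beta/\tau)^{k+1}=\bigo{\beta^{m+1}}$; the delicate point is that no artificial intermediate cut in $s$ be introduced, as this would spuriously mix in the large-$s$ region and appear to produce a lower power. Establishing that the single-endpoint scaling survives then yields $\hat I_{m,n}=\bigo{\beta^{m+1}}$.
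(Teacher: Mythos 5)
Your reduction to the energy variables $u=\abs{\vec q}$, $v=\abs{\beta\vec p-\vec q}$, the Jacobian $\total^3\vec q=\tfrac{2\pi}{\beta\abs{\vec p}}uv\,\total u\,\total v$, and the power count $(\beta\abs{\vec p})^{m+1}$ for the endpoint region $s=u+v\sim\beta\abs{\vec p}$ are all correct, and you have identified the right mechanism: because the phase is the \emph{sum} of the energies, bounded below by $\beta\abs{\vec p}$, the answer is an endpoint contribution from the infrared scale where the Bose factors cancel $uv$. But the proposal stops exactly where the proof has to be carried out. The justification you offer for neglecting the crossover region --- that the corrections to the replacement $\tfrac{uv}{(1-\mathe^{-u})(1-\mathe^{-v})}\to1$ are ``smooth and supported where $s\gtrsim1$'' and hence suppressed to all orders by non-stationary phase --- is not correct: that deviation equals $\tfrac{u+v}{2}+\bigo{s^2}$ and is therefore nonzero arbitrarily close to the endpoint $s=\beta\abs{\vec p}$. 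It does produce genuine endpoint terms; they are harmless only because each extra power of $s$ at the endpoint costs a factor $\beta\abs{\vec p}$, which is again an endpoint estimate to be made, not a consequence of non-stationary phase. More importantly, your own observation that \emph{every} term of the endpoint expansion is of the common order $\bigo{\beta^{m+1}}$ is a warning sign rather than a conclusion: when all terms of an asymptotic series in $1/T$ carry the same power of $\beta$, truncation proves nothing, and one must bound the remainder after $N$ integrations by parts, i.e.\ show that $T^{-N}\int_{\beta\abs{\vec p}}^{\infty}B^{(N)}(s)\,\mathe^{\mathi s(T+\mathi\epsilon)}\total s$ (with $B$ the $d$-integrated amplitude) is still $\bigo{\beta^{m+1}}$, using that $B^{(N)}(s)=\bigo{\beta^{m+1-N}}$ near the endpoint while $B$ is polynomial plus exponentially small for $s\gtrsim1$. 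You flag this as ``the main obstacle'' and your conclusion is explicitly conditional, so as written this is a plan with the decisive uniform estimate missing.

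The paper avoids this two-scale bookkeeping by a more computational route: it drops $\beta\vec p$ from the amplitude and phase at leading order (so the angular integral trivializes and the $m$-dependence becomes $\partial_T^m$ acting on a single one-dimensional integral), evaluates that integral in closed form via $q=-\ln z$, the hypergeometric representation~\eqref{eq:appendix_smallp_hypergeom_intlog}, and the digamma function, and then reads off the behaviour from the large-argument expansion of $\psi(-2\mathi\tau/\beta)$, obtaining the explicit leading term proportional to $\beta^{m+1}\partial_\tau^{m+1}\ln(\mathi\tau)$. Your endpoint analysis would, if completed with the remainder bound above (and without introducing an artificial cut in $s$, as you rightly caution), give a more structural derivation of the same order; in its present form it establishes the scaling of one region of the integral but not the bound on $\hat I_{m,n}$ itself.
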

\begin{proof}
The leading term of $\hat{I}_{m,n}$ for small $\beta$ is given by
\begin{splitequation}
\hat{I}_{m,n}(\tau,\vec{p}) &\sim \frac{1 + (-1)^n}{4 \pi^2 (n+1)} \lim_{\epsilon \to 0^+} \int_0^\infty \left[ \frac{\mathe^{2 \mathi q (T + \mathi \epsilon)}}{\left( 1 - \mathe^{- q} \right)^2} + \frac{\mathe^{- 2 \mathi q T}}{\left( \mathe^q - 1 \right)^2} \right] q^{m+2} \total q \\
&= \frac{1 + (-1)^n}{2^{m+2} \pi^2 (n+1)} \lim_{\epsilon \to 0^+} \frac{\partial^m}{\partial T^m} \int_0^\infty \left[ \frac{\mathe^{2 \mathi q (T + \mathi \epsilon)}}{\left( 1 - \mathe^{- q} \right)^2} + \frac{(-1)^m \mathe^{- 2 \mathi q T}}{\left( \mathe^q - 1 \right)^2} \right] q^2 \total q \eqend{,}
\end{splitequation}
where we used that
\begin{equation}
\int_0^\pi \cos^n \theta \sin \theta \total \theta = \frac{1 + (-1)^n}{n+1} \eqend{.}
\end{equation}
Changing variables to $q = - \ln z$, this reduces to
\begin{equation}
\hat{I}_{m,n}(\tau,\vec{p}) = \frac{1 + (-1)^n}{2^{m+2} \pi^2 (n+1)} \lim_{\epsilon \to 0^+} \frac{\partial^m}{\partial T^m} \int_0^1 \left[ z^{- 1 - 2 \mathi (T + \mathi \epsilon)} + (-1)^m z^{1 + 2 \mathi T} \right] \frac{\ln^2 z}{(1-z)^2} \total z \eqend{,}
\end{equation}
and we note that for $\epsilon > 0$, the integrand is integrable both at $z = 0$ and $z = 1$ and the integral thus well-defined. We employ again the integral representation~\eqref{eq:appendix_smallp_hypergeom_intlog} of the hypergeometric function (with $a = 2$ and $c = -1$) and obtain after some simplifications
\begin{splitequation}
\hat{I}_{m,n}(\tau,\vec{p}) &\sim - \frac{1 + (-1)^n}{2^{m+5} \pi^2 (n+1)} \lim_{\epsilon \to 0^+} \frac{\partial^{m+2}}{\partial T^{m+2}} \lim_{z \to 1^-} \biggl[ (-1)^m \frac{1}{1 + \mathi T} \hypergeom{2}{1}\left( 2, 2 + 2 \mathi T; 3 + 2 \mathi T; z \right) \\
&\hspace{4em}+ \frac{\mathi}{T + \mathi \epsilon} \hypergeom{2}{1}\left( 2, - 2 \mathi (T + \mathi \epsilon); 1 - 2 \mathi (T + \mathi \epsilon); z \right) \biggr] \\
&= \frac{1 + (-1)^n}{2^{m+4} \pi^2 (n+1)} \beta^{m+2} \lim_{\epsilon \to 0^+} \frac{\partial^{m+2}}{\partial \tau^{m+2}} \\
&\qquad\times \biggl[ \left( 1 + \frac{2 \mathi \tau}{\beta} \right) \left[ \psi\left( - \frac{2 \mathi \tau}{\beta} + 2 \epsilon \right) - (-1)^m \psi\left( 1 + \frac{2 \mathi \tau}{\beta} \right) \right] \biggr] \eqend{,}
\end{splitequation}
where in the second equality we employed the expansion~\cite[\Eqs~(15.8.10) and~(15.8.12)]{dlmf} of the hypergeometric function near $z = 1$ and the recurrence relation~\cite[\Eq~(5.5.2)]{dlmf} of the digamma function $\psi$. The expansion of the digamma function for large argument is also known~\cite[\Eq~(5.11.2)]{dlmf}, and we insert it into our result and perform the derivatives using the general Leibniz rule. It follows that the leading behaviour of $\hat{I}_{m,n}(\tau,\vec{p})$ for small $\beta$ is given by
\begin{equation}
\hat{I}_{m,n}(\tau,\vec{p}) \sim \mathi \frac{[ 1 + (-1)^n ] [ 1 - (-1)^m ]}{2^{m+3} \pi^2 (n+1)} \beta^{m+1} \frac{\partial^{m+1}}{\partial \tau^{m+1}} \ln\left( \mathi \tau \right) \eqend{,}
\end{equation}
which proves the assertion.
\end{proof}

Employing the above results, in particular \Eq~\eqref{eq:appendix_larget_hatj20}, the expansion~\eqref{eq:appendix_larget_kabc} of the kernels for large temperature is given by
\begin{equations}[eq:appendix_larget_kabc_results]
\tilde{K}^\beta_{0000}(\tau,\vec{p}) &= \frac{\pi^2}{15 \beta^5} \frac{\sin\left( \abs{\vec{p}} \tau \right)}{\abs{\vec{p}} \tau} - \frac{\mathi \pi^2}{30 \beta^4} \frac{\abs{\vec{p}} \tau \cos\left( \abs{\vec{p}} \tau \right) - \sin\left( \abs{\vec{p}} \tau \right)}{\abs{\vec{p}} \tau^2} + \bigo{\beta^{-3}} \eqend{,} \\
\eta^{ij} \tilde{K}^\beta_{0i0j}(\tau,\vec{p}) &= \tilde{K}^\beta_{0000}(\tau,\vec{p}) + \bigo{\beta^{-3}} \eqend{,} \\
\eta^{ik} \eta^{jl} \tilde{K}^\beta_{ijkl}(\tau,\vec{p}) &= \tilde{K}^\beta_{0000}(\tau,\vec{p}) + \bigo{\beta^{-3}} \eqend{.}
\end{equations}
It follows that $\tilde{L}_\text{C}(\tau,\vec{p}) = 4 \tilde{L}_\text{A}(\tau,\vec{p}) + \bigo{\beta^{-3}}$, $\tilde{L}_\text{B}(\tau,\vec{p}) = 2 \tilde{L}_\text{A}(\tau,\vec{p}) + \bigo{\beta^{-3}}$, and inserting these into the components~\eqref{eq:appendix_kernel_components} of the noise kernel $\tilde{K}^\beta_{\mu\nu\rho\sigma}$, we obtain the stated results~\eqref{eq:kernel-thermal-larget} for the large temperature limit.

\subsection{Small temperature limit}
\label{app:smallt}

The last limit that we consider is the one of small temperatures $\beta \to \infty$. Here we have to separate the zero-temperature contribution from the temperature-dependent ones to obtain
\begin{splitequation}
\label{eq:appendix_smallt_ka}
\tilde{K}^\beta_{0000}(\abs{\vec{p}},\tau) &= \frac{1}{4} \tilde{I}_{1,1}(\tau,\vec{p}) \\
&\quad+ \int \abs{\vec{p}-\vec{q}} \abs{\vec{q}} \frac{\mathe^{\mathi \abs{\vec{p}-\vec{q}} \tau} \cos\left( \abs{\vec{q}} \tau \right)}{\mathe^{\beta \abs{\vec{q}}} - 1} \frac{\total^3 \vec{q}}{(2 \pi)^3} \\
&\quad+ \int \abs{\vec{p}-\vec{q}} \abs{\vec{q}} \frac{\cos\left( \abs{\vec{p}-\vec{q}} \tau \right) \cos\left( \abs{\vec{q}} \tau \right)}{\left( \mathe^{\beta \abs{\vec{p}-\vec{q}}} - 1 \right) \left( \mathe^{\beta \abs{\vec{q}}} - 1 \right)} \frac{\total^3 \vec{q}}{(2 \pi)^3} \eqend{,}
\end{splitequation}
\begin{splitequation}
\label{eq:appendix_smallt_kb}
\eta^{ij} \tilde{K}^\beta_{0i0j}(\tau,\vec{p}) &= \frac{1}{8} \vec{p}^2 \tilde{I}_{0,0}(\tau,\vec{p}) - \frac{1}{4} \tilde{I}_{0,2}(\tau,\vec{p}) \\
&\quad+ \mathi \int \frac{\vec{p}^2 - \vec{q}^2 - \abs{\vec{p}-\vec{q}}^2}{2} \frac{\mathe^{\mathi \abs{\vec{p}-\vec{q}} \tau} \sin\left( \abs{\vec{q}} \tau \right)}{\mathe^{\beta \abs{\vec{q}}} - 1} \frac{\total^3 \vec{q}}{(2 \pi)^3} \\
&\quad- \int \frac{\vec{p}^2 - \vec{q}^2 - \abs{\vec{p}-\vec{q}}^2}{2} \frac{\sin\left( \abs{\vec{p}-\vec{q}} \tau \right) \sin\left( \abs{\vec{q}} \tau \right)}{\left( \mathe^{\beta \abs{\vec{p}-\vec{q}}} - 1 \right) \left( \mathe^{\beta \abs{\vec{q}}} - 1 \right)} \frac{\total^3 \vec{q}}{(2 \pi)^3} \eqend{,}
\end{splitequation}
\begin{splitequation}
\label{eq:appendix_smallt_kc}
\eta^{ik} \eta^{jl} \tilde{K}^\beta_{ijkl}(\tau,\vec{p}) &= \frac{1}{16} \abs{\vec{p}}^4 \tilde{I}_{-1,-1}(\tau,\vec{p}) - \frac{1}{4} \vec{p}^2 \tilde{I}_{-1,1}(\tau,\vec{p}) + \frac{1}{8} \tilde{I}_{-1,3}(\tau,\vec{p}) + \frac{1}{8} \tilde{I}_{1,1}(\tau,\vec{p}) \\
&\quad+ \int \frac{\left( \vec{p}^2 - \vec{q}^2 - \abs{\vec{p}-\vec{q}}^2 \right)^2}{4 \abs{\vec{p}-\vec{q}} \abs{\vec{q}}} \frac{\mathe^{\mathi \abs{\vec{p}-\vec{q}} \tau} \cos\left( \abs{\vec{q}} \tau \right)}{\mathe^{\beta \abs{\vec{q}}} - 1} \frac{\total^3 \vec{q}}{(2 \pi)^3} \\
&\quad+ \int \frac{\left( \vec{p}^2 - \vec{q}^2 - \abs{\vec{p}-\vec{q}}^2 \right)^2}{4 \abs{\vec{p}-\vec{q}} \abs{\vec{q}}} \frac{\cos\left( \abs{\vec{p}-\vec{q}} \tau \right) \cos\left( \abs{\vec{q}} \tau \right)}{\left( \mathe^{\beta \abs{\vec{p}-\vec{q}}} - 1 \right) \left( \mathe^{\beta \abs{\vec{q}}} - 1 \right)} \frac{\total^3 \vec{q}}{(2 \pi)^3} \eqend{,}
\end{splitequation}
from the expressions~\eqref{eq:appendix_ka}, \eqref{eq:appendix_kb} and~\eqref{eq:appendix_kc}, where we defined
\begin{equation}
\label{eq:appendix_tildeimn_def}
\tilde{I}_{m,n}(\tau,\vec{p}) \doteq \lim_{\epsilon \to 0^+} \int \mathe^{\mathi (\abs{\vec{p}-\vec{q}}+\abs{\vec{q}}) (\tau+\mathi \epsilon)} \abs{\vec{p}-\vec{q}}^m \abs{\vec{q}}^n \frac{\total^3 \vec{q}}{(2 \pi)^3} \eqend{.}
\end{equation}
Note that we could interchange the limit $\epsilon \to 0$ with the integration in the temperature-dependent terms since they converge at infinity, and have changed the integration variable $\vec{q} \to \vec{p}-\vec{q}$ in some terms to simplify the result.

For the temperature-independent contributions, we obtain
\begin{lemma}
The integrals $\tilde{I}_{m,n}$~\eqref{eq:appendix_tildeimn_def} with $m \geq 0$ and $n \geq -2$ satisfy the recursion relation
\begin{equation}
\tilde{I}_{m,n}(\tau,\vec{p}) = - \mathi \frac{\partial}{\partial \tau} \tilde{I}_{m-1,n}(\tau,\vec{p}) - \tilde{I}_{m-1,n+1}(\tau,\vec{p}) \eqend{.}
\end{equation}
For $m = - 1$ and $n \geq -2$, we have
\begin{splitequation}
\tilde{I}_{-1,n}(\tau,\vec{p}) &= \frac{(n+1)! \mathi^{n+2}}{2^{n+3} \pi^2} \left[ \frac{\sin\left( \abs{\vec{p}} \tau \right)}{\abs{\vec{p}} \tau} \lim_{\epsilon \to 0^+} \frac{1}{\left( \tau + \mathi \epsilon \right)^{n+2}} - \frac{\mathi \mathe^{\mathi \abs{\vec{p}} \tau}}{2 \abs{\vec{p}} \tau^{n+3}} \mathbb{T}_{n+3} \mathe^{- 2 \mathi \abs{\vec{p}} \tau} \right] \eqend{,}
\end{splitequation}
where we defined the operator $\mathbb{T}_n$ by
\begin{equation}
\label{eq:appendix_smallt_taylor}
\mathbb{T}_n f(\tau) \doteq f(\tau) - \sum_{k=0}^{n-1} \frac{\tau^k}{k!} f^{(k)}(0) \eqend{,}
\end{equation}
namely subtracting the first $n$ terms of the Taylor series of $f$. Note that $\mathbb{T}_n$ commutes with derivatives and multiplication by powers of $\tau$ in the sense that
\begin{equation}
\label{eq:appendix_smallt_taylor_d}
\partial_\tau \mathbb{T}_n f(\tau) = \mathbb{T}_{n-1} f'(\tau) \eqend{,} \quad \mathbb{T}_n \left[ \tau^k f(\tau) \right] = \tau^k \mathbb{T}_{n-k} f(\tau) \eqend{,}
\end{equation}
which both follow from the generalized Leibniz rule for derivatives.
\end{lemma}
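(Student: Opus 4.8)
The plan is to treat the two assertions of the lemma separately, since the recursion relation is elementary while the base case $m=-1$ carries the entire analytic content. For the recursion ($m\geq 0$, $n\geq -2$) I would argue directly from the definition~\eqref{eq:appendix_tildeimn_def}: because the phase in the integrand is $\mathe^{\mathi(\abs{\vec{p}-\vec{q}}+\abs{\vec{q}})(\tau+\mathi\epsilon)}$, acting with $-\mathi\partial_\tau$ on $\tilde{I}_{m-1,n}$ pulls down a factor $(\abs{\vec{p}-\vec{q}}+\abs{\vec{q}})$, and the algebraic split $(\abs{\vec{p}-\vec{q}}+\abs{\vec{q}})\abs{\vec{p}-\vec{q}}^{m-1}\abs{\vec{q}}^{n}=\abs{\vec{p}-\vec{q}}^{m}\abs{\vec{q}}^{n}+\abs{\vec{p}-\vec{q}}^{m-1}\abs{\vec{q}}^{n+1}$ reproduces exactly $\tilde{I}_{m,n}+\tilde{I}_{m-1,n+1}$. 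For $\epsilon>0$ the integrals converge absolutely (the integrand is locally integrable near $\vec{q}=0$ for $n\geq -2$ and near $\vec{q}=\vec{p}$ for $m\geq 0$), so interchanging $\partial_\tau$ with the $\vec{q}$-integration is legitimate and the recursion holds before the distributional limit is taken.

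The base case $m=-1$ is the heart of the matter. First I would pass to spherical coordinates with $\vec{p}$ along the polar axis and trade the angular variable for $u\doteq\abs{\vec{p}-\vec{q}}$ via $u\,\total u=\abs{\vec{p}}\abs{\vec{q}}\sin\theta\,\total\theta$; the virtue of $m=-1$ is that the factor $\abs{\vec{p}-\vec{q}}^{-1}=u^{-1}$ cancels this Jacobian, collapsing the angular integral to the elementary $\int_{\abs{\abs{\vec{p}}-\abs{\vec{q}}}}^{\abs{\vec{p}}+\abs{\vec{q}}}\mathe^{\mathi u(\tau+\mathi\epsilon)}\total u$. After the trivial $\phi$-integration this leaves a single radial integral of $\abs{\vec{q}}^{n+1}\mathe^{\mathi\abs{\vec{q}}(\tau+\mathi\epsilon)}$ against the explicit $u$-primitive. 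Writing $\tilde\tau\doteq\tau+\mathi\epsilon$ and splitting the radial integral at $\abs{\vec{q}}=\abs{\vec{p}}$ to resolve $\abs{\abs{\vec{p}}-\abs{\vec{q}}}$, I reduce everything to three standard pieces: the full integral $A\doteq\int_0^\infty q^{n+1}\mathe^{2\mathi q\tilde\tau}\total q=(n+1)!\,\mathi^{n+2}/(2^{n+2}\tilde\tau^{n+2})$, the polynomial $B\doteq\int_0^{\abs{\vec{p}}}q^{n+1}\total q=\abs{\vec{p}}^{n+2}/(n+2)$, and the tail $A_p\doteq\int_{\abs{\vec{p}}}^\infty q^{n+1}\mathe^{2\mathi q\tilde\tau}\total q$.

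The key simplification is to evaluate $A_p$ through the upper incomplete gamma function, $A_p=A\,\mathe^{2\mathi\abs{\vec{p}}\tilde\tau}\sum_{k=0}^{n+1}(-2\mathi\abs{\vec{p}}\tilde\tau)^k/k!$; collecting the three pieces then makes the whole expression collapse to $\tilde{I}_{-1,n}=\mathe^{\mathi\abs{\vec{p}}\tilde\tau}\bigl[A(1-S)-B\bigr]/(4\pi^2\abs{\vec{p}}\,\mathi\tilde\tau)$, with $S$ the above partial sum. I would then recognise $S$ as the truncation of $\mathe^{-2\mathi\abs{\vec{p}}\tilde\tau}$, so that $\mathe^{\mathi\abs{\vec{p}}\tilde\tau}A(1-\mathe^{-2\mathi\abs{\vec{p}}\tilde\tau})=2\mathi\sin(\abs{\vec{p}}\tilde\tau)\,A$ produces the singular first term carrying the distributional factor $(\tau+\mathi\epsilon)^{-(n+2)}$, while the remainder is $\mathe^{\mathi\abs{\vec{p}}\tilde\tau}A\,\mathbb{T}_{n+2}\mathe^{-2\mathi\abs{\vec{p}}\tilde\tau}-\mathe^{\mathi\abs{\vec{p}}\tilde\tau}B$. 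The decisive observation is that the lowest ($k=n+2$) term of $A\,\mathbb{T}_{n+2}\mathe^{-2\mathi\abs{\vec{p}}\tilde\tau}$ equals precisely $\abs{\vec{p}}^{n+2}/(n+2)=B$ and thus cancels the $B$-term, promoting $\mathbb{T}_{n+2}$ to $\mathbb{T}_{n+3}$; since $\mathbb{T}_{n+3}\mathe^{-2\mathi\abs{\vec{p}}\tilde\tau}$ vanishes to order $n+3$ at the origin, this remainder is analytic and the limit $\epsilon\to 0^+$ can be taken directly, yielding the stated closed form.

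Finally, the auxiliary commutation properties~\eqref{eq:appendix_smallt_taylor_d} of the operator $\mathbb{T}_n$ follow by differentiating the subtracted Taylor sum in~\eqref{eq:appendix_smallt_taylor} term by term and by the generalised Leibniz rule. I expect the main obstacle to be purely organisational rather than conceptual: correctly tracking the $\mathi\epsilon$-prescription so that the singular piece lands on $\lim_{\epsilon\to 0^+}(\tau+\mathi\epsilon)^{-(n+2)}$ while the analytic piece may be evaluated at $\epsilon=0$, and verifying the exact cancellation of $B$ against the leading incomplete-gamma term uniformly in $n\geq -2$, which is precisely what forces the Taylor-subtraction order to be $n+3$ rather than $n+2$.
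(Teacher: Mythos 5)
Your proposal is correct and follows essentially the same route as the paper: the recursion by differentiating under the integral, the angular integration exploiting that $\abs{\vec{p}-\vec{q}}^{-1}$ cancels the Jacobian of $u=\abs{\vec{p}-\vec{q}}$ (the paper writes this as a total $\theta$-derivative), the split of the radial integral at $\abs{\vec{q}}=\abs{\vec{p}}$, and the $\Gamma$-function evaluation of the unbounded piece. The only cosmetic difference is that you expand the tail $\int_{\abs{\vec{p}}}^\infty$ via the upper incomplete gamma function and exhibit the cancellation against $B$ directly, whereas the paper evaluates $\int_0^{\abs{\vec{p}}}$ via the lower incomplete gamma function $\gamma(n+2,-2\mathi\abs{\vec{p}}\tau)$ and invokes its known closed form to reach the same $\mathbb{T}_{n+3}$ structure.
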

\begin{proof}
The first equality follows by exchanging the $\tau$ derivative with the integration, which is allowed under the stated assumptions. For the second result, we use that
\begin{equation}
\frac{1}{\abs{\vec{p}-\vec{q}}} \mathe^{\mathi \abs{\vec{p}-\vec{q}} \tau} \sin \theta = - \frac{\mathi}{\abs{\vec{p}} \abs{\vec{q}} \tau} \frac{\partial}{\partial \theta} \mathe^{\mathi \sqrt{\vec{p}^2 + \vec{q}^2 - 2 \abs{\vec{p}} \abs{\vec{q}} \cos \theta} \, \tau}
\end{equation}
to perform the integral over $\theta$ and obtain
\begin{splitequation}
\tilde{I}_{-1,n}(\tau,\vec{p}) &= - \frac{\mathi}{4 \pi^2 \abs{\vec{p}} \tau} \lim_{\epsilon \to 0^+} \int_0^\infty \left[ \mathe^{\mathi (\abs{\vec{p}} + 2 q) (\tau + \mathi \epsilon)} - \mathe^{\mathi (q + \abs{\abs{\vec{p}} - q}) (\tau + \mathi \epsilon)} \right] q^{n+1} \total q \\
&= \frac{\sin\left( \abs{\vec{p}} \tau \right)}{2 \pi^2 \abs{\vec{p}} \tau} \lim_{\epsilon \to 0^+} \int_0^\infty \mathe^{2 \mathi q (\tau + \mathi \epsilon)} q^{n+1} \total q + \frac{1}{2 \pi^2 \abs{\vec{p}} \tau} \int_0^\abs{\vec{p}} \mathe^{\mathi q \tau} \sin\left[ \left( q - \abs{\vec{p}} \right) \tau \right] q^{n+1} \total q \eqend{.}
\raisetag{3em}
\end{splitequation}
For the first integral, we employ the integral representation~\cite[\Eq~(5.9.1)]{dlmf} of the $\Gamma$ function, while for the second one we change variables $q = \abs{\vec{p}} t$ to obtain
\begin{equation}
\tilde{I}_{-1,n}(\tau,\vec{p}) = \frac{\sin\left( \abs{\vec{p}} \tau \right)}{2^{n+3} \pi^2 \abs{\vec{p}} \tau} \lim_{\epsilon \to 0^+} \frac{\Gamma(n+2)}{\left( \epsilon - \mathi \tau \right)^{n+2}} + \frac{\mathi \abs{\vec{p}}^{n+1}}{4 \pi^2 \tau} \mathe^{\mathi \abs{\vec{p}} \tau} \int_0^1 \left[ 1 - \mathe^{- 2 \mathi \abs{\vec{p}} \tau (1-t)} \right] t^{n+1} \total t \eqend{.}
\end{equation}
Employing the integral representation~\cite[Eq.~(8.2.7)]{dlmf} of the incomplete $\gamma$ function, it follows that
\begin{splitequation}
\tilde{I}_{-1,n}(\tau,\vec{p}) &= \frac{\sin\left( \abs{\vec{p}} \tau \right)}{2^{n+3} \pi^2 \abs{\vec{p}} \tau} \mathi^{n+2} \lim_{\epsilon \to 0^+} \frac{\Gamma(n+2)}{\left( \tau + \mathi \epsilon \right)^{n+2}} \\
&\quad+ \frac{\mathi \abs{\vec{p}}^{n+1}}{4 \pi^2 \tau} \left[ \frac{\mathe^{\mathi \abs{\vec{p}} \tau}}{n+2} - \mathe^{- \mathi \abs{\vec{p}} \tau} \frac{\gamma(n+2,- 2 \mathi \abs{\vec{p}} \tau)}{(-2 \mathi \abs{\vec{p}} \tau)^{n+2}} \right] \eqend{.}
\raisetag{2.6em}
\end{splitequation}
Finally, using the values~\cite[\Eqs~(8.4.7) and (8.4.11)]{dlmf} of the incomplete $\gamma$ function, the second equality is proven.
\end{proof}
For the integrals appearing in the kernels~\eqref{eq:appendix_smallt_ka}, \eqref{eq:appendix_smallt_kb} and~\eqref{eq:appendix_smallt_kc}, we obtain using these results
\begin{equations}[eq:appendix_smallt_tildei]
\tilde{I}_{-1,-1}(\tau,\vec{p}) &= \frac{\mathi \mathe^{\mathi \abs{\vec{p}} \tau}}{4 \pi^2} \lim_{\epsilon \to 0^+} \frac{1}{\tau + \mathi \epsilon} \eqend{,} \\
\tilde{I}_{-1,1}(\tau,\vec{p}) &= - \frac{\mathi \mathe^{\mathi \abs{\vec{p}} \tau} \left( 3 - 3 \mathi \abs{\vec{p}} \tau - 2 \vec{p}^2 \tau^2 \right)}{24 \pi^2} \lim_{\epsilon \to 0^+} \frac{1}{\left( \tau + \mathi \epsilon \right)^3} \eqend{,} \\
\tilde{I}_{-1,3}(\tau,\vec{p}) &= \frac{\mathi \mathe^{\mathi \abs{\vec{p}} \tau} \left( 15 - 15 \mathi \abs{\vec{p}} \tau - 10 \vec{p}^2 \tau^2 + 5 \mathi \abs{\vec{p}}^3 \tau^3 + 2 \abs{\vec{p}}^4 \tau^4 \right)}{40 \pi^2} \lim_{\epsilon \to 0^+} \frac{1}{\left( \tau + \mathi \epsilon \right)^5} \eqend{,} \\
\tilde{I}_{0,0}(\tau,\vec{p}) &= - \frac{\mathi \mathe^{\mathi \abs{\vec{p}} \tau} \left( 3 - 3 \mathi \abs{\vec{p}} \tau - \vec{p}^2 \tau^2 \right)}{24 \pi^2} \lim_{\epsilon \to 0^+} \frac{1}{\left( \tau + \mathi \epsilon \right)^3} \eqend{,} \\
\tilde{I}_{0,2}(\tau,\vec{p}) &= \frac{\mathi \mathe^{\mathi \abs{\vec{p}} \tau} \left( 30 - 30 \mathi \abs{\vec{p}} \tau - 15 \vec{p}^2 \tau^2 + 5 \mathi \abs{\vec{p}}^3 \tau^3 + \abs{\vec{p}}^4 \tau^4 \right)}{80 \pi^2} \lim_{\epsilon \to 0^+} \frac{1}{\left( \tau + \mathi \epsilon \right)^5} \eqend{,} \\
\tilde{I}_{1,1}(\tau,\vec{p}) &= \frac{\mathi \mathe^{\mathi \abs{\vec{p}} \tau} \left( 45 - 45 \mathi \abs{\vec{p}} \tau - 20 \vec{p}^2 \tau^2 + 5 \mathi \abs{\vec{p}}^3 \tau^3 + \abs{\vec{p}}^4 \tau^4 \right)}{120 \pi^2} \lim_{\epsilon \to 0^+} \frac{1}{\left( \tau + \mathi \epsilon \right)^5} \eqend{.}
\end{equations}

For the temperature-dependent contributions, we cut the integration range into $\abs{\vec{q}} \leq \abs{\vec{p}}$ and the remainder. The remainder decays at least $\sim \mathe^{- \beta \abs{\vec{p}}}$, namely exponentially with large $\beta$, and we do not need to consider it in the expansion. In the first term, we rescale the integration variable $\vec{q} \to \vec{q}/\beta$ and then take the limit of large $\beta$. The remaining integrals can be done using the integral representations~\cite[\Eqs~(25.5.1) and~(25.5.2)]{dlmf} of the Riemann $\zeta$ function, and we obtain the results
\begin{equations}
\begin{split}
\tilde{K}^\beta_{0000}(\abs{\vec{p}},\tau) &= \frac{\mathi \mathe^{\mathi \abs{\vec{p}} \tau} \left( 45 - 45 \mathi \abs{\vec{p}} \tau - 20 \vec{p}^2 \tau^2 + 5 \mathi \abs{\vec{p}}^3 \tau^3 + \abs{\vec{p}}^4 \tau^4 \right)}{480 \pi^2} \lim_{\epsilon \to 0^+} \frac{1}{\left( \tau + \mathi \epsilon \right)^5} \\
&\quad+ \mathe^{\mathi \abs{\vec{p}} \tau} \abs{\vec{p}} \frac{\pi^2}{30 \beta^4} + \bigo{\beta^{-6}} \eqend{,} \raisetag{1.8em}
\end{split} \\
\eta^{ij} \tilde{K}^\beta_{0i0j}(\tau,\vec{p}) &= - \frac{\mathi \mathe^{\mathi \abs{\vec{p}} \tau} \left( 45 - 45 \mathi \abs{\vec{p}} \tau - 15 \vec{p}^2 \tau^2 - \abs{\vec{p}}^4 \tau^4 \right)}{480 \pi^2} \lim_{\epsilon \to 0^+} \frac{1}{\left( \tau + \mathi \epsilon \right)^5} + \bigo{\beta^{-6}} \eqend{,} \\
\begin{split}
\eta^{ik} \eta^{jl} \tilde{K}^\beta_{ijkl}(\tau,\vec{p}) &= \frac{\mathi \mathe^{\mathi \abs{\vec{p}} \tau} \left( 45 - 45 \mathi \abs{\vec{p}} \tau - 10 \vec{p}^2 \tau^2 - 5 \mathi \abs{\vec{p}}^3 \tau^3 + \abs{\vec{p}}^4 \tau^4 \right)}{480 \pi^2} \lim_{\epsilon \to 0^+} \frac{1}{\left( \tau + \mathi \epsilon \right)^5} \\
&\quad+ \mathe^{\mathi \abs{\vec{p}} \tau} \abs{\vec{p}} \frac{\pi^2}{90 \beta^4} + \bigo{\beta^{-6}} \raisetag{1.8em}
\end{split}
\end{equations}
for the kernels~\eqref{eq:appendix_smallt_ka}, \eqref{eq:appendix_smallt_kb} and~\eqref{eq:appendix_smallt_kc}. It follows that
\begin{equations}
\begin{split}
\tilde{L}_\text{A}(\abs{\vec{p}},\tau) &= \frac{\mathi \mathe^{\mathi \abs{\vec{p}} \tau} \left( 45 - 45 \mathi \abs{\vec{p}} \tau - 20 \vec{p}^2 \tau^2 + 5 \mathi \abs{\vec{p}}^3 \tau^3 + \abs{\vec{p}}^4 \tau^4 \right)}{480 \pi^2} \lim_{\epsilon \to 0^+} \frac{1}{\left( \tau + \mathi \epsilon \right)^5} \\
&\quad+ \mathe^{\mathi \abs{\vec{p}} \tau} \abs{\vec{p}} \frac{\pi^2}{30 \beta^4} + \bigo{\beta^{-6}} \eqend{,}
\end{split} \\
\tilde{L}_\text{B}(\tau,\vec{p}) &= - \frac{\mathi \vec{p}^2 \mathe^{\mathi \abs{\vec{p}} \tau} \left( 5 - 5 \mathi \abs{\vec{p}} \tau - 2 \vec{p}^2 \tau^2 \right)}{480 \pi^2} \lim_{\epsilon \to 0^+} \frac{1}{\left( \tau + \mathi \epsilon \right)^3} + \mathe^{\mathi \abs{\vec{p}} \tau} \abs{\vec{p}} \frac{\pi^2}{30 \beta^4} + \bigo{\beta^{-6}} \eqend{,} \\
\tilde{L}_\text{C}(\tau,\vec{p}) &= \frac{\mathi \abs{\vec{p}}^4 \mathe^{\mathi \abs{\vec{p}} \tau}}{120 \pi^2} \lim_{\epsilon \to 0^+} \frac{1}{\tau + \mathi \epsilon} + \mathe^{\mathi \abs{\vec{p}} \tau} \abs{\vec{p}} \frac{2 \pi^2}{45 \beta^4} + \bigo{\beta^{-6}} \eqend{.}
\end{equations}
Inserting these results into the components~\eqref{eq:appendix_kernel_components} of the noise kernel and employing the identity~\eqref{eq:appendix_smallt_taylor_d}, we obtain the stated results~\eqref{eq:kernel-thermal-smallt} for the small temperature limit.

\bibliography{bibliography}

\end{document}